\theoremstyle{plain}
\newtheorem{thm}{Theorem}
\theoremstyle{definition}
\newtheorem{defn}{Definition}
\newcommand{\revise}{\textcolor{black}}
\newcommand{\main}{Privet}
\newcommand{\yifeng}[1]{\textsf{\color{red}{[{Yifeng: #1}]}}}
\begin{document}
\renewcommand{\algorithmicrequire}{\textbf{Input:}}  
\renewcommand{\algorithmicensure}{\textbf{Output:}} 

\renewcommand{\figurename}{Fig.}
\renewcommand{\tablename}{TABLE}

\title{Privet: A Privacy-Preserving Vertical Federated Learning Service for Gradient Boosted Decision Tables}

\author{Yifeng Zheng, Shuangqing Xu, Songlei Wang, Yansong Gao, and Zhongyun Hua
		
		\IEEEcompsocitemizethanks{
			\IEEEcompsocthanksitem Yifeng Zheng, Shuangqing Xu, Songlei Wang, and Zhongyun Hua are with the School of Computer Science and Technology, Harbin Institute of Technology, Shenzhen, Guangdong 518055, China (e-mail: yifeng.zheng@hit.edu.cn, shuangqing.xu@outlook.com, songlei.wang@outlook.com, huazhongyun@hit.edu.cn).
			\IEEEcompsocthanksitem Yansong Gao is with Data61, CSIRO, Sydney, Australia (e-mail: gao.yansong@hotmail.com).

			\IEEEcompsocthanksitem Corresponding author: Zhongyun Hua.
		}
	}

\IEEEtitleabstractindextext{%
\begin{abstract}
Vertical federated learning (VFL) has recently emerged as an appealing distributed paradigm empowering multi-party collaboration for training high-quality models over vertically partitioned datasets. Gradient boosting has been popularly adopted in VFL, which builds an ensemble of weak learners (typically decision trees) to achieve promising prediction performance. Recently there have been growing interests in using decision table as an intriguing alternative weak learner in gradient boosting, due to its simpler structure, good interpretability, and promising performance. In the literature, there have been works on privacy-preserving VFL for gradient boosted decision trees, but no prior work has been devoted to the emerging case of decision tables. Training and inference on decision tables are different from that the case of generic decision trees, not to mention gradient boosting with decision tables in VFL. In light of this, we design, implement, and evaluate Privet, the first system framework enabling privacy-preserving VFL service for gradient boosted decision tables. Privet delicately builds on lightweight cryptography and allows an arbitrary number of participants holding vertically partitioned datasets to securely train gradient boosted decision tables. Extensive experiments over several real-world datasets and synthetic datasets demonstrate that Privet achieves promising performance, with utility comparable to plaintext centralized learning.

\end{abstract}

\begin{IEEEkeywords}
Vertical federated learning service, multi-party collaboration, gradient boosting, decision table, privacy preservation
\end{IEEEkeywords}}

\maketitle

\IEEEdisplaynontitleabstractindextext

\IEEEpeerreviewmaketitle

\ifCLASSOPTIONcompsoc
\IEEEraisesectionheading{\section{Introduction}\label{sec:introduction}}
\else
\section{Introduction}
\label{sec:introduction}
\fi

Federated learning (FL) has recently emerged as a fascinating distributed machine learning paradigm that greatly empowers multi-party collaboration for mining value over data federation \cite{bonawitz2017practical,TKDE-FL1,TKDE-FL2, LU1}. It allows distributed individual training datasets to be kept locally, and only intermediate outputs from the training algorithm are shared out for aggregation.
%
%
According to how data is distributed among the participants in FL, there are two types of FL: horizontal federated learning (HFL) \cite{li2020practical,maddock2022federated} and vertical federated learning (VFL) \cite{tist,fang2021large}.
HFL addresses the scenario where the participants share the same feature space but hold disjoint sets of samples/instances, which generally suits the case that participants are individual customers.
In contrast, VFL targets the scenario where each participant has the same set of samples/instances yet owns data for different features, which is more common when the participants are business organizations/enterprises.
For example, as illustrated in Fig. \ref{fig:vfl}, the participants hold datasets that have the same row indexes (corresponding to the same set of instances) but different non-overlapping column indexes (corresponding to different features).
In this paper, we focus on the VFL setting, which has received increasing attentions in the collaboration of different business organizations/enterprises in recent years \cite{fang2021large,fu2021vf2boost}.

For model training in the VFL setting, the gradient boosting technique has received wide attentions \cite{cheng2021secureboost,fang2021large,tist,tian2020federboost,fu2021vf2boost} and has seen popular adoption for empowering a wide range of fields, such as web search ranking, online advertisement, and fraud detection \cite{tyree2011parallel,he2014practical,dhieb2019extreme}.
Gradient boosting builds an ensemble of weak learners, which are typically (generic) decision trees, to achieve promising prediction performance.
%
%
While decision tree is usually used as the weak learner in gradient boosting, in recent years there has been a fast-growing trend to use decision table \cite{kohavi1995oblivious} as an intriguing alternative \cite{bdt,prokhorenkova2018catboost,dato2016fast, HancockK20a}.
Many works \cite{bdt,prokhorenkova2018catboost,MonoForest,ltr2} have shown that gradient boosted \emph{decision tables} yields promising performance on various tasks and achieves great inference efficiency over generic decision trees.
In addition, some famous open-source gradient boosting libraries \cite{lightgbm,prokhorenkova2018catboost} have also recently provided the support for using decision table as the weak learner in gradient boosting.

As demonstrated in Fig. \ref{fig:tree-comparison}, a $D$-dimensional decision table at a high level consists of $D$ Boolean tests and $2^D$ output values. It can also be treated as a \emph{special} full binary decision tree, called \emph{oblivious tree}.
In contrast with generic decision tree which has different Boolean tests at different internal nodes at the same level, the internal nodes at the same level of an oblivious tree share the same Boolean test defined with the same feature and threshold.
Despite the similarly equivalent tree structure, it is worth noting that the algorithm for training oblivious tree is different from that for generic decision tree \cite{breiman1984classification,quinlan2014c4}.
\revise{
Specifically, decision trees are typically trained through recursive algorithms \cite{breiman1984classification,quinlan2014c4}, while decision tables are trained through iterative algorithms following the top-down construction \cite{kohavi1995oblivious,bdt}.
Given tree depth $D$, the shape of a generic decision tree is uncertain because it needs to process samples associated with the current node to determine whether to split this node. 
In contrast, we can not recursively build a decision table because all the samples in the dataset need to be processed to select the optimal split for each level of the decision table. Besides, given depth $D$, the shape of an oblivious tree is fixed and the number of operations like node splitting and output value calculation is also fixed. 
In addition, the inference process on an oblivious tree is also different from that on a generic decision tree\cite{bdt,dato2016fast} (see Section \ref{sec:decision_table} for more detailed discussion).
}

\begin{figure}[!t]
	\centering
	\includegraphics[scale=0.5]{./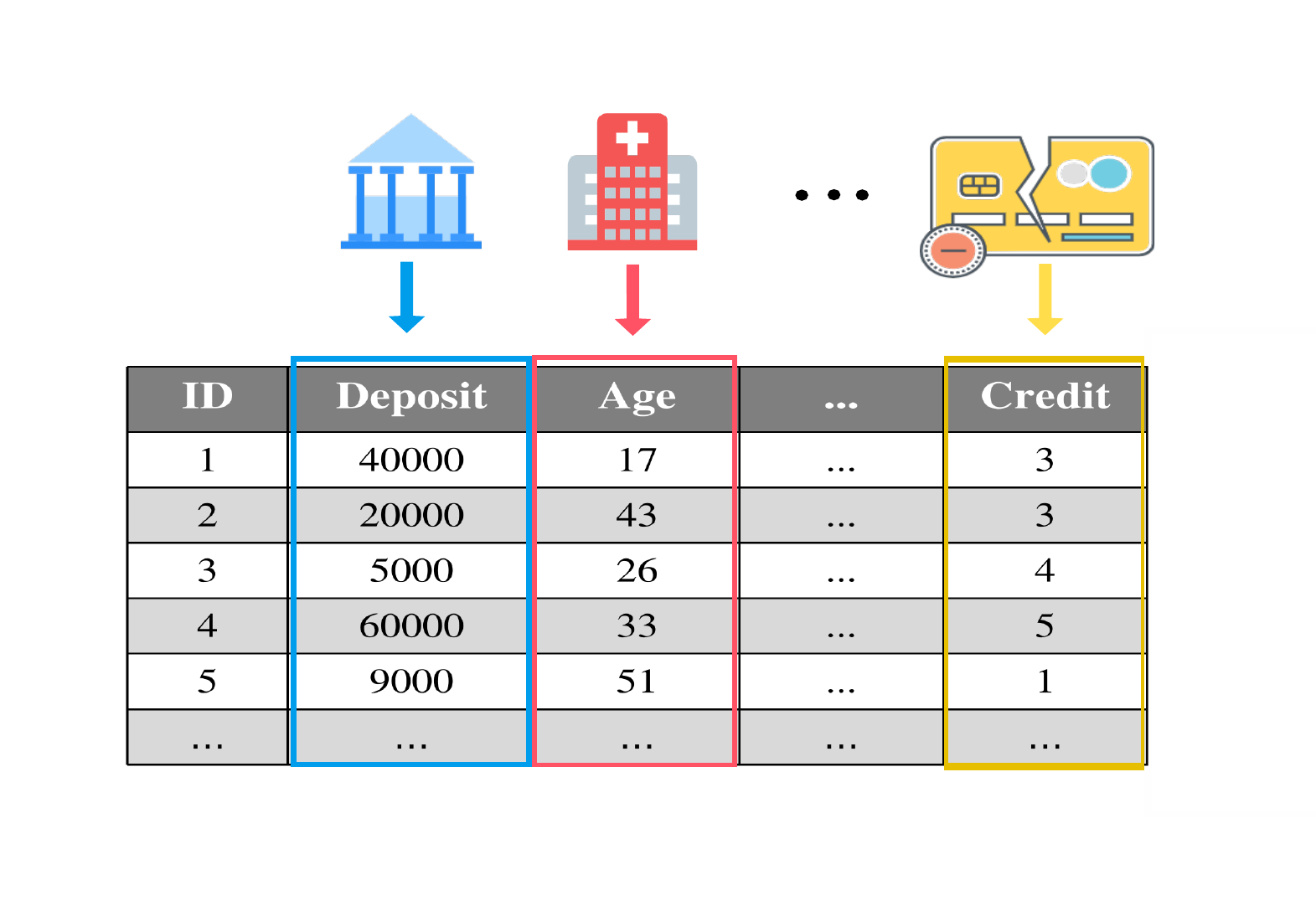}
	\caption{Illustration of data partitioning in the VFL setting.}
	\label{fig:vfl}
	\vspace{-10pt}
\end{figure}

In the literature, while there have been several studies on privacy-preserving VFL with gradient boosted decision trees (GBDT) \cite{cheng2021secureboost,fang2021large,tist}, no prior work has explored privacy-preserving VFL with gradient boosted decision tables.
As mentioned above, even training and inference on decision table are different from the case of generic decision tree, not to mention gradient boosting with decision table as the weak learner in the VFL setting.
Therefore, these prior works cannot be directly applied to support privacy-preserving training and inference of gradient boosted decision tables in VFL.
%
%
In addition, it is noted that these prior works are also confronted with limitations such as exposing sensitive intermediate results (e.g., sum of gradients) \cite{cheng2021secureboost,tist}, supporting training only among two participants \cite{fang2021large} (see Section \ref{sec:related-work} for more detailed discussion).

In light of the above, we propose {\main}, which, to our best knowledge, is the first system framework enabling privacy-preserving VFL service for training \emph{gradient boosted decision tables} over distributed datasets.
{\main} ambitiously supports an arbitrary number of participants to collaboratively train gradient boosted decision tables, while allowing them to keep their data locally and offering strong protection on the sensitive intermediate outputs throughout the training process.
%
%
{\main} builds on lightweight secret sharing techniques to develop customized protocols securely realizing the key components required by training gradient boosted decision tables in the VFL setting.

Specifically, through an in-depth examination on the training process of gradient boosted decision tables, we manage to decompose the holistic secure design in the VFL setting into the design of a series of secure components run in a distributed manner among the participants, including secure node splitting, secure Sigmoid evaluation, secure discretization, and secure distributed decision table inference.
The delicate synergy of these secure components leads to the holistic protocol of {\main} for privately training gradient boosted decision tables in the VFL setting.
Through the customized secure protocol, {\main} outputs gradient boosted decision tables that are distributed among the involved participants, where each participant only holds a part of the model.
Subsequently, secure inference on the ensemble of learned decision tables can also be well supported in a distributed manner among the participants.
%
We implement and evaluate {\main}'s protocols extensively over several real-world datasets as well as synthetic datasets.
The results demonstrate that {\main} presents promising performance in computation and communication.
Meanwhile, the utility of the trained models in {\main} is comparable to that in the plaintext centralized learning setting.

We highlight our contributions as follows.

\begin{itemize}
	\item \revise{We present {\main}, which, to our best knowledge, is the first system framework enabling privacy-preserving VFL service for gradient boosted decision tables. Privet allows an arbitrary number of participants holding vertically partitioned distributed datasets to securely train gradient boosted decision tables in a distributed manner, offering strong protection for sensitive individual data as well as for intermediate outputs.}
	
	\item \revise{We devise a series of tailored secure components based on lightweight secret sharing techniques that run in a distributed manner among multiple participants with promising efficiency and utility, catering for the computation required by securely training gradient boosted decision tables in the VFL setting.} 
	
	\item \revise{We make an implementation of the proposed protocols and conduct an extensive evaluation over three real-world public datasets and three synthetic datasets. 
	The experiment results demonstrate that {\main} has promising performance, achieving model utility comparable to plaintext centralized learning.}
\end{itemize}

The rest of this paper is organized as follows. Section \ref{sec:related-work} discusses the related work. Section \ref{sec:preliminaries} introduces some preliminaries. Section \ref{sec:overview} gives a system overview. Section \ref{sec:design} presents the design of {\main}. The security analysis is presented in Section \ref{sec:security-analysis}, followed by the experiments in Section \ref{sec:experiments}. Section \ref{sec:conclusion} concludes the whole paper.

\section{Related Work}
\label{sec:related-work}

\noindent \textbf{Securely learning gradient boosted decision trees under HFL.} Due to the problems of data isolation and data privacy, FL has emerged as a new privacy-preserving machine learning paradigm. Several existing works \cite{li2020practical,leung2019towards,maddock2022federated} have been focused on privacy-preserving gradient boosted decision trees (GBDT) under the HFL setting, which assume that data are horizontally partitioned between participants. 
Among them, the work \cite{maddock2022federated} rely on use of secure aggregation and differential privacy to provide a privacy guarantee. The work \cite{leung2019towards} leverages secure hardware\cite{hasp} to build private GBDT under HFL. 
Different from these works, our work targets privacy-preserving gradient boosting systems under the VFL setting.

\noindent \textbf{Securely learning gradient boosted decision trees under VFL.}
To cater for the need to collaboratively build models between different organizations that hold data on the same set of samples but for different features, VFL has received increasing attention in recent years.
The works \cite{cheng2021secureboost,fang2021large,tist} consider vertical federated gradient boosted decision trees, which are most related to ours. 
In particular, SecureBoost \cite{cheng2021secureboost} is the first work on privacy-preserving GBDT over vertically partitioned data, which uses homomorphic encryption to preserve data privacy. 
However, it has limited security guarantee because intermediate information (e.g., the sum of gradients in a bucket) is revealed during the training process. 
Moreover, homomorphic encryption involves heavy cryptographic operations and requires large memory, which results in low training efficiency. 
The works\cite{fang2021large,tist} improve SecureBoost \cite{cheng2021secureboost} in terms of efficiency via multi-party computation (MPC) techniques. 
Specifically, the work \cite{fang2021large} proposes a secure GBDT system leveraging the additive secret sharing technique \cite{mohassel2017secureml}.
However, their proposed system is only designed for the two-party setting.
Xie \textit{et al.} \cite{tist} deal with the issue to support secure multi-party training.
However, since they adopt large-scale matrix multiplication in the secret sharing domain to discretize secret-shared gradients into buckets, their scheme requires more communication and computation overhead compared to \cite{fang2021large}. 
Moreover, the design in \cite{tist} has notable privacy leakages, e.g., the intermediate inference results of all training samples are leaked to the participant who holds the label set because it relies on this participant to conduct inference.

We also note that all these works \cite{cheng2021secureboost,fang2021large,tist} are aimed at supporting secure training and inference for gradient boosting with generic decision trees under VFL.
\revise{In recent years, the gradient boosted decision table technique has seen rapidly growing adoption in various applications, such as learning to rank (LTR) \cite{dato2016fast,ltr2,ltr3}, recommendation systems\cite{bdt,dhar2020effective}, and medical diagnosis \cite{gbdt_diabetes,gbdt_diabetes2}.}
Although the training of decision tree and decision table has some similarities, e.g., both of them need permutation protocols, their learning algorithms are different inherently.
Thus the works \cite{cheng2021secureboost,fang2021large,tist} cannot directly support secure gradient boosting over decision tables under VFL. 
In comparison with them, {\main} focuses on securely supporting privacy-preserving VFL for gradient boosted decision tables. In addition, {\main} departs from them by achieving comparable utility to plaintext, concealing intermediate information for strong privacy, and supporting an arbitrary number of participants.

\noindent\textbf{Secure decision tree learning supporting both horizontally and vertically partitioned data.} There are some works \cite{popets,deforth2021xorboost,adams2022privacy} which can support secure decision tree learning on both horizontally and vertically partitioned data in an \emph{outsourcing} setting.
Specifically, the work \cite{popets} considers a setting where data owners secret-share all their data among three servers and designs a protocol to enable the three servers to securely perform an adapted C4.5 decision tree learning algorithm.
%
%
The work \cite{adams2022privacy} proposes protocols to train decision trees for the \emph{Random Forest} model, which similarly considers a setting where the data owners secret-share all their data among two extra non-colluding computing parties. 
In \cite{deforth2021xorboost}, Deforth \emph{et al.} focus on building private gradient boosted decision trees and consider a scenario where data owners secret-share their data among a set of computing parties which may also be an extra set of servers.
In contrast with these works that outsource the data and computation, {\main} does not require such an extra set of non-colluding servers which may not be an easy assumption to meet in practice.
Meanwhile, {\main} allows the raw data of each participant to stay local throughout the whole training process, fitting the salient feature of FL.


\section{Preliminaries}
\label{sec:preliminaries}

\subsection{Decision Table}
\label{sec:decision_table}
Consider a dataset $\mathcal{D}$ consisting of $N$ samples $\{\mathbf{x}_i,y_i\}$ for $i=0,\cdots,N-1$, where $\mathbf{x}_i=(x_{i1},\cdots,x_{iJ})$ is a $J$-dimensional tuple and $y_i$ is the label of the $i$-th sample. 
The $j$-th element of $\mathbf{x}_i$ is the value of an input attribute $X_j$.
%
%
A $D$-dimensional decision table consists of $D$ Boolean tests and $2^{D}$ output values. A Boolean test is of the form $X_j<t$, which outputs $1$ if the $j$-th element in a given input tuple is less than a threshold $t$ and $0$ otherwise.


As illustrated in Fig. \ref{fig:tree-comparison}, a $D$-dimensional decision table is equivalent to a full binary tree with $D+1$ levels, where each internal node from the $0$-th level (for the root node) to the $(D-1)$-th level has a Boolean test; each edge is assigned the outcome of its source node's test and each leaf node at the $D$-th level is associated with an output value. Such equivalent tree is called \emph{oblivious tree}, because all internal nodes at the same level share the same test, as opposed to generic decision trees that have different tests at the same level.
More specifically, the test at the $d$-th level of an oblivious tree could be represented as $F_d<t_d$, where $d\in[0,D-1]$, the split feature $F_d\in\{X_1,\cdots,X_J\}$, and $t_d$ is the split threshold.
The special structure of oblivious tree results in its different training and inference methods from non-oblivious trees like CART \cite{breiman1984classification}.
In \cite{kohavi1995oblivious}, Kohavi \textit{et al.} first introduce a top-down construction of oblivious  trees and use information gain as the evaluation metric to find the optimal test at each level. Different evaluation metrics are used in later studies, like mean squared error (MSE)\cite{bdt} and Newton's method\cite{prokhorenkova2018catboost}.

\begin{algorithm}[!t]
	\caption{Training an Oblivious Tree} 
	\label{algo:plaintext-oblivious-tree}
	\begin{algorithmic}[1]
		\Require
		A training dataset $\mathcal{D}$.
		\Ensure
		An oblivious decision tree having $D$ tests and $2^D$ output values.
		\State $\mathcal{S}^0 = \{\mathcal{D}\}.$ \label{1-line1}
		\For{$d\in [0,D-1]$} 
		\State $\mathcal{S}^{d+1} = \{\} $.
		\State Optimal test $F_d < t_d$ $\leftarrow$ $\mathsf{find\_split}$.\label{1-line5}
		
		\For{$\mathcal{V}$ in $\mathcal{S}^d$}
		\State Split $\mathcal{V}$ into $\mathcal{V}_{F_d<t_d}$, $\mathcal{V}_{F_d\geq t_d}$  according to the opti-   
		\Statex \quad\quad\:mal test and add these two sets to $\mathcal{S}^{d+1}$. \label{1-line6}
		\State Create a node for each set in $\mathcal{S}^{d+1}$ and connect  \label{1-line7}
		\Statex \quad\quad\:it to its parent node.
		\EndFor				
		\EndFor
		\State Calculate output values for the $2^D$ leaf nodes at the $D$-th level, respectively.
	\end{algorithmic}
\end{algorithm}

We follow the top-down construction in \cite{kohavi1995oblivious,bdt} to train oblivious trees. Algorithm \ref{algo:plaintext-oblivious-tree} shows the process of training an oblivious tree, which produces $D$ tests and $2^D$ output values.
The learning algorithm starts from the $0$-th level and builds an oblivious tree level by level iteratively. 
Given a test $X_{j}<t$, we define $\mathcal{D}_{X_{j}<t}=\left\{(\mathbf{x}, y) \in \mathcal{D} \mid \mathbf{x}\left(X_{j}\right)<t\right\}$, $\mathcal{D}_{X_{j}\geq t}=\mathcal{D} \backslash \mathcal{D}_{X_{j} < t}$.
We also apply this notation to subsets $\mathcal{V}\subseteq \mathcal{D}$. Let $\mathcal{S}^l$ denote the set of $\mathcal{D}$'s subsets at the $l$-th level, where $l\in[0,D]$.
At level $0$, the training dataset $\mathcal{D}$ is associated with the root node and $\mathcal{S}^0 = \{\mathcal{D}\}$ (line~\ref{1-line1}). Once an optimal test $F_0<t_0$ at this level is found through the routine $\mathsf{find\_split}$ (line~\ref{1-line5}), $\mathcal{D}$ is partitioned into two subsets $\mathcal{D}_{F_0<t_0},\mathcal{D}_{F_0\geq t_0}$ according to it. After that, $\mathcal{D}_{F_0<t_0},\mathcal{D}_{F_0\geq t_0}$ are added to $\mathcal{S}^1$ ($\mathcal{S}^1=\{\mathcal{D}_{F_0<t_0},\mathcal{D}_{F_0\geq t_0}\}$) and a new level is created (lines~\ref{1-line6}-\ref{1-line7}).

\begin{figure}[!t]
	\centering
	\includegraphics[scale=0.30]{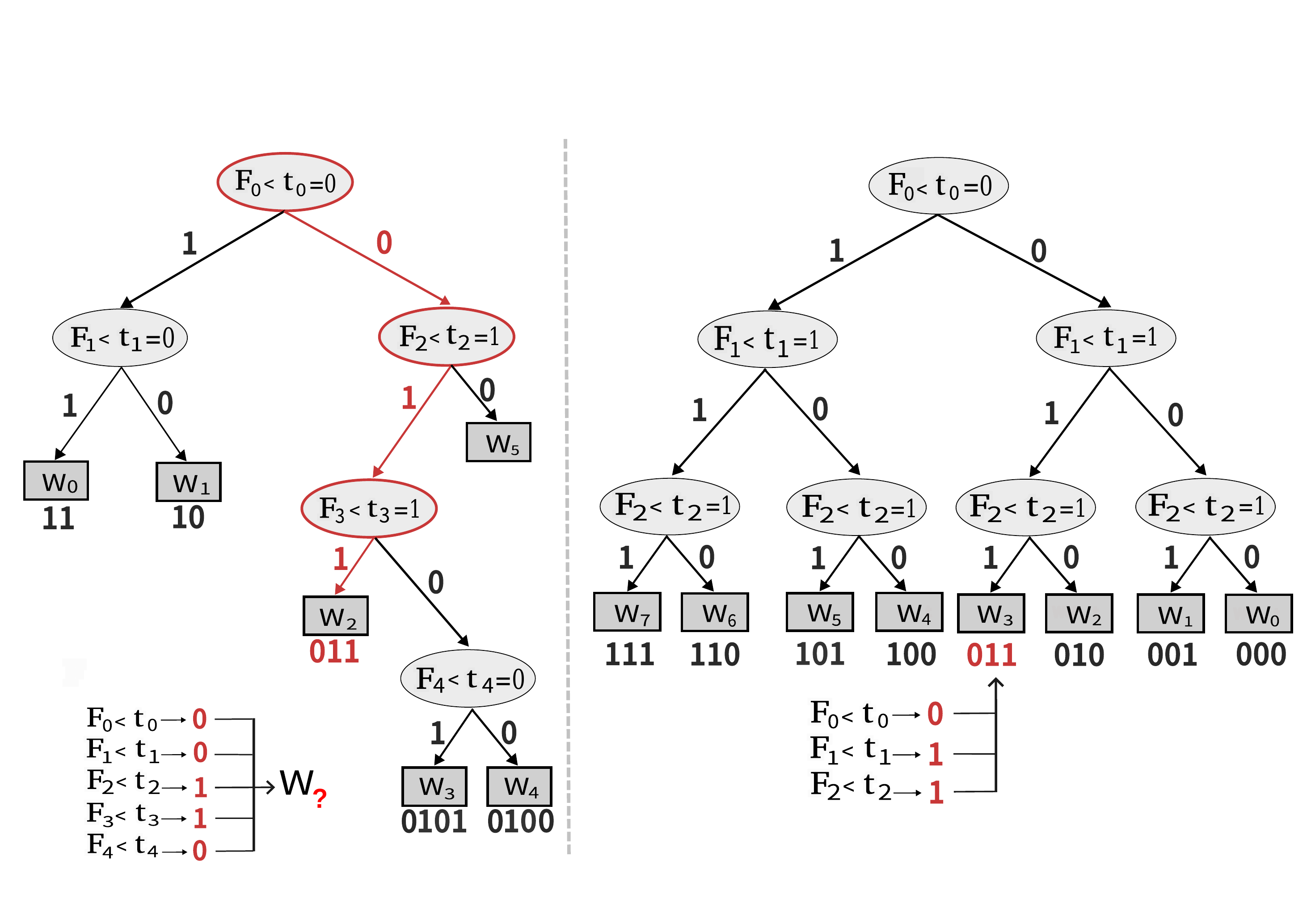}
	\caption{\revise{Comparison of a generic decision tree and an oblivious tree in inference.}}
	\label{fig:tree-comparison}
	\vspace{-15pt}
\end{figure}

At level $1$, an optimal test $F_1<t_1$ is found and $\mathcal{D}_{F_0<t_0},\mathcal{D}_{F_0\geq t_0}$ are each partitioned into two subsets according to $F_1<t_1$. The same procedure is repeated until all the $D$ tests are learned. In this way, the tree structure is kept full and symmetric, and we have $\vert \mathcal{S}^l \vert=2^l$ at level $l$, where each set in $\mathcal{S}^l$ is associated with a node at this level.
When reaching the $D$-th  level, the output values will be calculated for the leaf nodes. Finally, an oblivious tree composed of $D$ tests and $2^D$ output values is learned.

The optimal test at each level is found via the routine $\mathsf{find\_split}$ by evaluating the candidate tests. Evaluation of the candidate tests can be made through different metrics. In {\main}, we follow the popular second-order approximation method \cite{xgboost,prokhorenkova2018catboost} to evaluate tests because the decision tables in our work are trained sequentially for a gradient boosting system. Besides, the output values of decision tables can also be calculated following the gradient boosting theory, which will be introduced shortly in Section \ref{gradient-boosting}.

\revise{
As presented in Algorithm \ref{algo:plaintext-oblivious-tree}, training a decision table (oblivious tree) is an iterative process, while decision trees are typically trained through recursive algorithms\cite{breiman1984classification,quinlan2014c4}. 
Given tree depth $D$, the shape of a generic decision tree is uncertain because it needs to process samples associated with the current node to judge whether to split this node. 
However, the shape of an oblivious tree is predetermined at a given dimension $D$. To select the optimal split at each level, all the samples in the dataset are required to be processed. 
Besides, the number of operations involved in training an oblivious tree, such as $\mathsf{find\_split}$ and output value calculation, is fixed.
}

Decision table outperforms generic decision tree in inference efficiency significantly. As illustrated in Fig. \ref{fig:tree-comparison}, each leaf node of an oblivious tree (the right sub-figure in Fig. \ref{fig:tree-comparison}) corresponds to a Boolean sequence and the comparisons required by $D$ tests could be parallelized. In contrast, inference in a regular decision tree is made by traversing the tree from the root node to a leaf node, which means the direction of the inference path after the current node depends on the test result of this node. 
\revise{
Note that while the evaluation of each decision node in generic decision tree inference can be parallelized, it is still necessary to traverse the tree from the root node \emph{sequentially} so as to identify the correct leaf node that produces the inference result.
For example, as shown in the left sub-figure in Fig. \ref{fig:tree-comparison}, even if we parallelize the evaluation of each decision node, i.e., we obtain the sequence of test results $[0,0,1,1,0]$ by evaluating the $0$-th split to the $4$-th split simultaneously, we cannot directly identify which leaf node is finally chosen using $[0,0,1,1,0]$. 
On the contrary, decision table inference is free of such sequential traversal \cite{bdt,prokhorenkova2018catboost,dato2016fast}. 
As illustrated in the right sub-figure in Fig. \ref{fig:tree-comparison}, once the sequence comprised of Boolean test result at each level is obtained, the inference result can be immediately obtained because this Boolean sequence is also the identifier of a leaf node. 
}

\revise{
Additionally, it is noted that in gradient boosting systems, the number, size, and depth of generic decision trees are not necessarily smaller than decision tables when achieving the same accuracy because they are both weak learners and only require weak predictability.
As reported in prior work \cite{bdt}, compared with gradient boosted decision trees with the number of trees $T=50$ and tree depth $D=7$, gradient boosted decision tables only requires depth $D=6$ given the same number of oblivious trees $T=50$ to achieve similar accuracy performance.
Furthermore, it is noted that with the same depth $D$, a decision table only needs storage of $D$ decision nodes (one for each level), while a generic decision tree may require storage of up to $2^D-1$ decision nodes \cite{bdt,dato2016fast}.
}

\subsection{Gradient Boosted Decision Tables}
\label{gradient-boosting}

A gradient boosting system is built by training a set of weak learners sequentially based on the boosting algorithm\cite{xgboost,friedman2000additive}. For the given dataset $\mathcal{D}=\{\mathbf{x}_i,y_i\}^{N-1}_{i=0}$, a gradient boosting system sums the inference results of $T$ weak learners to produce the ultimate inference result for the $i$-th sample\cite{friedman2000additive}: $\hat{y}_{i}^{(T)}=\sum_{t=1}^{T} f_{t}(\mathbf{x}_i)$, where $f_{t}$ corresponds to the model of the $t$-th weak learner. In gradient boosted decision tables\cite{bdt,prokhorenkova2018catboost}, $f_{t}$ corresponds to a decision table. A given sample will be classified into the leaf nodes in the decision tables according to the tests in them. Its ultimate inference result is calculated by summing up the output values associated with the corresponding leaf nodes.

The essence of gradient boosting algorithm comes from how it boosts the weak learners sequentially. After training $t-1$ weak learners, the $t$-th model $f_t$ is needed to be trained and added to minimize the following objective function\cite{xgboost}:
\begin{equation}
\begin{aligned}
\mathcal{L}^{(t)} & =\sum_{i=0}^{N-1} l(y_{i}, \hat{y}_{i}^{(t)})+\Omega(f_{t})\\
& =\sum_{i=0}^{N-1} l(y_{i}, \hat{y}_{i}^{(t-1)}+f_{t}(\mathbf{x}_{i}))+\Omega(f_{t}), \\
\end{aligned}
\notag
\end{equation}

\noindent where $l$ is a twice differentiable convex loss function that takes $y_i$,  $\hat{y}_{i}^{(t)}$ as input, and outputs the loss. The regularization term  $\Omega(f_{t})$ is set following \cite{xgboost}.  Friedman \textit{et al.} \cite{friedman2000additive} use second-order approximation to quickly approximate the objective function: 
\begin{equation}
\begin{aligned}
\tilde{\mathcal{L}}^{(t)} \simeq 
& \sum_{i=0}^{N-1}[(l(y_{i}, \hat{y}^{(t-1)})+g_{i} f_{t}(\mathbf{x}_{i})+\frac{1}{2} h_{i} f_{t}^{2}(\mathbf{x}_{i})]+\Omega(f_{t}), \\
\end{aligned}
\label{eq:second-order}
\end{equation}
where $g_{i}=\partial_{\hat{y}_{i}^{(t-1)}} {l}(y_{i}, \hat{y}_{i}^{(t-1)})$, $h_{i}=\partial_{\hat{y}_{i}^{(t-1)}}^{2} {l}(y_{i}, \hat{y}_{i}^{(t-1)})$ are the first and second-order gradients of the $i$-th sample. Typically, for regression problems, MSE is used as the loss function and the gradients are calculated as follows: $g_i=\hat{y}_i-y_i$ and $h_i=1$\cite{bdt,tist}. When the problem is classification, a common choice is logistic loss and the gradients are calculated as follows: $g_i=p_i-y_i$ and $h_i=p_i\times(1-p_i)$, where $p_i=\mathsf{Sigmoid}(\hat{y}_i)$\cite{xgboost_para}. 
For a value ${x} \in \mathbb{R}$, the Sigmoid function is: $\mathsf{Sigmoid}({x})=1/(1+e^{-x})$.
For the leaf node $k$, which is associated with a subset $\mathcal{V}^k\subset \mathcal{D}$, we define $\mathcal{I}^k=\{i\vert(\mathbf{x}_i,y_i)\in \mathcal{V}^k\}$ as its index set. This notation is also used to denote the index set associated with the internal node, e.g., we write $\mathcal{I}^q$ for node $q$.
Then, after removing the constant terms, Eq. \ref{eq:second-order} can be rewritten as\cite{xgboost}:
\begin{equation}
\begin{aligned}
\tilde{\mathcal{L}}^{(t)}  
&  =\sum_{k=0}^{L-1}[(\sum_{i \in \mathcal{I}^{k}} g_{i}) w_{k}+\frac{1}{2}(\sum_{i \in \mathcal{I}^{k}} h_{i}+\lambda) w_{k}^{2}]+\gamma L\\
\end{aligned}
\label{eq:rewrite}
\end{equation}
\noindent where $w_{k}$ is the output value associated with the leaf node $k$, $L$ is the number of leaf nodes in the tree, and $\lambda,\gamma$ are hyper-parameters to control the regularization.
When the tree stops growing, $w_{k}$ and the minimum loss of the current tree are calculated by \cite{xgboost}:
\begin{equation}
\label{eq:weight}
w_{k}=-\frac{\sum_{i \in \mathcal{I}^{k}}g_i}{\sum_{i \in \mathcal{I}^{k}}h_i+\lambda},
\end{equation}
\begin{equation}
\label{eq:tree-structure}
\begin{aligned}
\tilde{\mathcal{L}}^{(t)}&=-\frac{1}{2} \sum_{k=0}^{L-1} \frac{(\sum_{i \in \mathcal{I}^{k}}g_i)^{2}}{\sum_{i \in \mathcal{I}^{k}}h_i+\lambda}+\gamma L,
\end{aligned}
\end{equation}

\noindent Eq. \ref{eq:tree-structure} can be used as the impurity function for evaluating the tests. In {\main}, we follow the above theory to find optimal tests in decision table. Suppose we have learned $b$ tests from the level $0$ to the level $(b-1)$ of a $D$-dimensional decision table and we need to find an optimal test at level $b$.
The nodes at the $b$-th level are numbered from $0$ to $2^{b}-1$ and the $q$-th node is associated with an index set $\mathcal{I}^q$.
A candidate test $X_j<t$ will split the $2^{b}$ nodes at this level into $2^{b+1}$ nodes. Among all the candidate tests, the optimal test is the test that has the minimum score. The definition of score is defined as\cite{catboost_score}:
\begin{equation}
\label{eq:score}
\begin{aligned}
Score = \sum_{q=0}^{2^{b}-1}\left(\mathcal{P}_{\mathcal{I}_{L}^{q}}+\mathcal{P}_{\mathcal{I}_{R}^{q}} \right),
\end{aligned}
\end{equation}                                                                           
\noindent where

\begin{equation}
\begin{aligned}
\label{eq:imp}
\quad \mathcal{P}_{\mathcal{I}} = -\frac{1}{2} \frac{\left(\sum_{i \in \mathcal{I}} g_{i}\right)^{2}}{\sum_{i \in \mathcal{I}} h_{i}+\lambda}
\end{aligned}
\end{equation}
\noindent is the impurity of a node and $\mathcal{I}_{L}^{q},\mathcal{I}_{R}^{q}$ are the index sets associated with the $q$-th node's left and right child nodes after the split respectively.

\begin{table}[!t]
	\centering
	\caption{Summary of Notations}
	\label{tab:notation}
	\setlength{\tabcolsep}{2mm}{
		\begin{tabular}{llll}
			\toprule Notation & Description \\
			\hline
			$P_m$ & Participant $m$\\
			$n$ & Number of participants \\
			$N$ & Number of samples owned by each participant \\
			$J$ & Number of total features\\
			$J_{m}$ & Number of features owned by participant $m$ \\
			$\mathcal{D}^{N\times J_m}_m$ & Vertically partitioned dataset owned by participant $m$ \\
			$\mathbf{y}$ & Label set \\
			$D$ & Dimension of decision table \\
			$\mathcal{T}$ & Decision table model \\
			$T$ & Number of decision tables to be trained \\
			{$\llbracket{\mathbf{x}}\rrbracket$} & Secret-shared vector \\
			{$\langle{\mathbf{x}}\rangle_m$} & One share of a vector held by participant $m$ \\
			
			\hline
	\end{tabular}}
\end{table}
\subsection{Additive Secret Sharing}

In {\main}, we use $n$-out-of-$n$ additive secret sharing over $\mathbb{Z}_{2^{Q}}$, where $Q$ denotes the number of bits for value representation. 
%
%
In such secret sharing, a secret value $x\in \mathbb{Z}_{2^Q}$ is additively split into $n$ secret shares $\langle x\rangle_1 ,\langle x \rangle_2,\cdots,\langle x \rangle_n \in \mathbb{Z}_{2^{Q}}$ such that $\llbracket x \rrbracket=\langle x\rangle_1+\langle x \rangle_2+\cdots+\langle x \rangle_n$ mod$ \; 2^Q$.
The $n$ shares are held by $n$ parties respectively to be engaged in a secure computation.
For simplicity, we denote such additive secret sharing of $x$ by $\llbracket x \rrbracket$.
Below we introduce the basic operations related to additive secret sharing in the $n$-party setting.


$\bullet \mathsf{Sharing}$: To additively share a private value $x$ of party $P_l$, $P_l$ needs to generate $n-1$ random numbers $\{x_{m\neq l}\},m\in[1,n]$ over $\mathbb{Z}_{2^{Q}}$ and sends $x_{m\neq l}$ to $P_{m\neq l}$, respectively.
Then $P_l$ holds $ \langle x\rangle_l =(x- \sum_{p=1,p\neq l}^{n}x_p) \; $mod$ \; 2^Q$ and $P_{m\neq l}$ holds $\langle x \rangle_{m\neq l}=x_{m\neq l}$, respectively, as a share of $x$. For conciseness, the modulo operation will be henceforth omitted in the following protocols.

$\bullet \mathsf{Reconstruction}$: To reconstruct ($\mathsf{Rec(\cdot)}$) a shared value $\llbracket x \rrbracket$ on $P_l$, $P_{m\neq l}$ sends its share $\langle x\rangle_{m\neq l}$ to $P_l$ and $P_l$ computes $\sum_{p=1}^{n}\langle x\rangle_p$.

$\bullet \mathsf{Addition}$: For the two secret-shared values $\llbracket x \rrbracket$ and  $\llbracket y \rrbracket$ , to securely compute addition ($\llbracket z\rrbracket=\llbracket x\rrbracket+\llbracket y \rrbracket$), each participant $P_{m}$ locally computes $\langle z \rangle_m=\langle x \rangle_m+\langle y \rangle_m$. 
Similarly, to compute subtraction ($\llbracket z\rrbracket=\llbracket x\rrbracket-\llbracket y\rrbracket)$, each participant subtracts its local share of $y$ from that of $x$.


$\bullet \mathsf{Multiplication}$: To multiply a secret-shared value $\llbracket x \rrbracket$ with a constant $c$ ($\llbracket z \rrbracket=c\times{\llbracket x \rrbracket}$), each participant multiplies its local share of $x$ by $c$.
To multiply  two secret-shared values $\llbracket x \rrbracket,\llbracket y \rrbracket$ (denoted by $\llbracket z \rrbracket=\llbracket x \rrbracket \times \llbracket y\rrbracket$ where $z=xy$), the multiplication triple technique can be used \cite{beaver}. In an offline  phase, all parties obtain a secret-shared multiplication triple $(\llbracket a\rrbracket,\llbracket b\rrbracket,\llbracket c\rrbracket)$, where $a,b$ are uniformly random numbers in $\mathbb{Z}_{2^{Q}}$ and $c=ab$. 
The secret-shared triples are data-independent and can be prepared and distributed offline by an independent third-party \cite{Chameleon}, so hereafter we assume the triples are available for use in online secure computation among the parties.
The secret-shared multiplication proceeds as follows.
Each party $P_{m}$ locally computes $\langle e \rangle_m=\langle x \rangle_m-\langle a \rangle_m$ and $\langle f \rangle_m=\langle y \rangle_m-\langle b \rangle_m$. After that, the parties run $\mathsf{Rec}(\llbracket e\rrbracket),\mathsf{Rec}(\llbracket f\rrbracket)$. Next, $P_m$ computes $\langle z \rangle_m=j\times e\times f + f\times\langle a\rangle_m +e\times\langle b \rangle_m+\langle c \rangle_m$, where $j=1$ if $m=1$ and $j=0$ if $m\neq1$. 
Table \ref{tab:notation} summarizes the key notations in this paper.
%


\section{System Overview}
\label{sec:overview}
\subsection{System Architecture}

\begin{figure}[!t]
	\centering
	\includegraphics[scale=0.5]{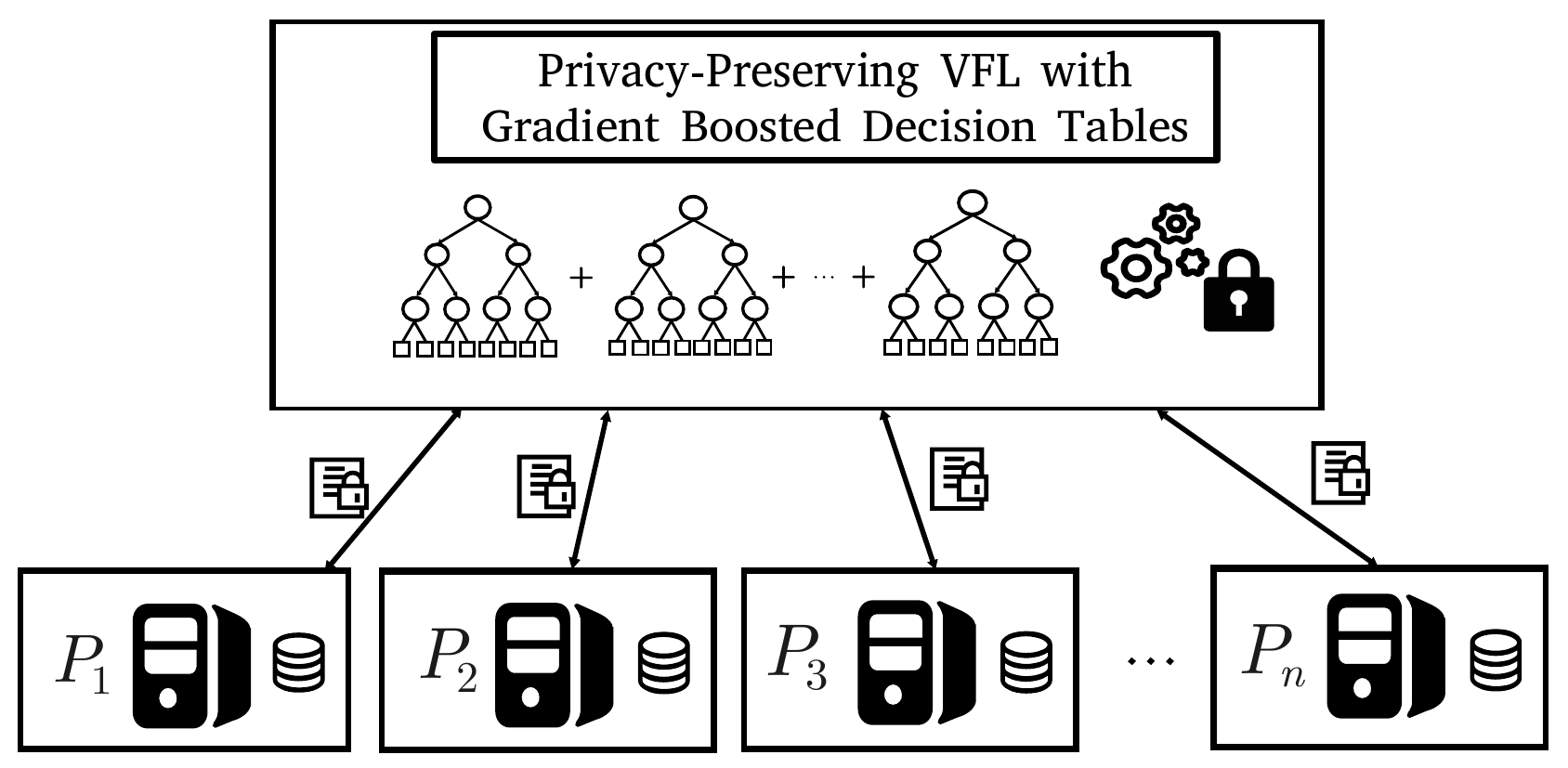}
	\caption{\revise{{\main}'s system architecture.}}
	\label{fig:arch}
\end{figure}

\revise{Fig. \ref{fig:arch} illustrates the system architecture of {\main}, which targets the vertical federated learning scenario. 
In {\main}, multiple participants (e.g., business organizations and institutions) want to collaboratively train gradient boosted decision tables over vertically partitioned data. Under such setting, a dataset consisting of $N$ samples (each is associated with a feature vector and a label) is vertically partitioned among $n$ participants $P_1,P_2,\cdots,P_n$.} 
Each participant $P_m$ holds its respective dataset $\mathcal{D}^{N\times J_m}_m=\{\mathbf{x}_i^{J_m}\}^{N-1}_{i=0}$, where $J_m$ denotes the number of features owned by $P_m$ and is subject to $\sum_{m=1}^{n}J_m=J$, $\mathbf{x}_i^{J_m}$ represents the $i$-th sample of $\mathcal{D}^{N\times J_m}_m$. 
Let $\mathbf{y}=\{y_i\}^{N-1}_{i=0}$ be the set of sample labels.
Following prior works on VFL \cite{cheng2021secureboost,tist,fang2021large}, we consider two roles for the participants: active participant (AP) and passive participant (PP). 
In particular, there is one AP that holds a local dataset as well as the label set $\mathbf{y}$; and the remaining participants are PPs, each only holding a local dataset. 
\revise{
For simplicity, in {\main}, we assume the participant $P_n$ is the AP. 
}


Throughout the secure training process in {\main}, each participant keeps its feature data locally.
The Boolean tests and output values of the decision tables are securely learned in {\main}, in such a manner that no participant knows the complete models. In particular, {\main} follows a setting similar to the works \cite{cheng2021secureboost,tist,fang2021large,tian2020federboost} under VFL, where each participant learns partial information of the learned models.
Specifically, in {\main}, all the participants know the split feature of each test in a decision table of the ensemble and who owns this feature, but only the participant owning this feature knows the split threshold of the test. 
Formally, for the learned test $F_d<t_d$ at the $d$-th level ($d\in[0,D-1]$) of the $t$-th decision table ($t\in[1,T]$) in the ensemble, the split feature $F_d$ is revealed to all participants but the threshold $t_d$ is only known by the participant owing the feature data corresponding to $F_d$. 
In addition, all the output values of leaf nodes are produced in secret-shared form among all participants. 


\subsection{Threat Model}

\revise{{\main} is designed under the semi-honest adversary model, as is common in state-of-the-art security designs on vertical federated learning \cite{fang2021large,wu2020}. 
	Specifically, in {\main}, each participant is assumed to faithfully follow the protocol specification but may try to deduce other participants' private information from the messages they receive. 
	It is noted that though we consider two roles AP and PP for the participants, no additional trust is assumed regarding the AP.
	The semi-honest adversary model should be reasonable in practice because VFL aims at breaking down the data silos between business organizations, where the behavior of each organization is strictly enforced by privacy regulations\cite{GDPR}. 
	We also consider that a static adversary may corrupt a subset of $\tau$ participants ($\tau\leq n-1$). That is, a static adversary may choose a subset of the participants to corrupt before the VFL procedure and the chosen participants remain corrupted during the VFL procedure.}

\revise{
	Under the above threat model, {\main} aims to guarantee that a semi-honest participant individually cannot learn any other participant's local data and learned partial model (tests and output values of each decision table in the ensemble) throughout the VFL procedure. 
	In case of collusion among a subset of the participants, {\main} strives to still ensure that the honest participants' private information is protected against the corrupted participants. 
 	Like prior works \cite{fang2021large,popets,cheng2021secureboost}, {\main} does not hide the data-independent generic parameters, such as the dimension $D$ and the number of decision tables $T$.
 	}

\section{The Design of {\main}}
\label{sec:design}
\subsection{Overview}

%
We provide in Algorithm \ref{alg:secure-framework} an overview of the secure training framework in {\main}, which inputs the vertically partitioned datasets $\{\mathcal{D}^{N\times J_m}_m\}^n_{m=1}$ and the label set $\mathbf{y}$ from the participants, and outputs an ensemble $\mathcal{E}$ of distributed decision tables among the participants.  
At the beginning, the secret-shared inference result $\llbracket{\mathbf {\hat{y}}^{(0)}}\rrbracket$ is initialized as the secret sharing $ \llbracket\mathbf{0}_{N}\rrbracket $ ($N$ denotes the length of the secret-shared vector), and the AP distributes the secret shares of its label set to other participants. 
After that, $T$ distributed decision tables are securely built sequentially in $T$ rounds (lines \ref{2-line5}-\ref{2-line9}).

We develop a secure decision table learning algorithm $\mathsf{SecTable}$ to support the secure training of a single (distributed) decision table in each round.
$\mathsf{SecTable}$ consists of several secure components, including (i) secure node splitting $\mathsf{SecSplit}$, (ii) secure Sigmoid evaluation $\mathsf{SecSigmoid}$, and (iii) secure discretization $\mathsf{SecDisc}$. 
The secure node splitting component $\mathsf{SecSplit}$ (Section \ref{secsplit}) is to securely split the nodes at a certain level and partition the index sets associated with these nodes without revealing the partitioned index sets. 
The secure Sigmoid evaluation component $\mathsf{SecSigmoid}$ (Section \ref{secsigmoid}) inputs a secret-shared value and calculates the Sigmoid function in the secret sharing domain. 
The secure discretization component $\mathsf{SecDisc}$ (Section \ref{secdisc}) is to securely rearrange the secret-shared gradients according to the local permutations owned by each participant and then group them into buckets. 
Through the synergy of these components, $\mathsf{SecTable}$ allows the participants to securely train a distributed decision table in each round, for which we will give the details in Section \ref{subsec:sectable}.

After securely training a distributed decision table in a certain round, secure inference needs to be conducted, of which the result will be added to previous inference results (line \ref{2-line8}) for use in $\mathsf{SecTable}$ in the next round.
To this end, we develop a secure distributed decision table inference protocol $\mathsf{SecInfer}$ (Section \ref{SecInfer}), which inputs each participant's local data and partial model to produce secret-shared inference results without leaking their data and partial model. 
It is worth noting that $\mathsf{SecInfer}$ can also be used to support secure inference for new data after the completion of the whole training process.

\begin{algorithm}[!t]
	\caption{Overview of Our Secure Training Framework} 
	\label{alg:secure-framework}
	\begin{algorithmic}[1]
		\Require
		$P_1,P_2,\cdots,P_n$ hold local datasets $\{\mathcal{D}^{N\times J_m}_m\}^n_{m=1}$ and one AP holds the label set $\mathbf y$.
		\Ensure
		$P_1,P_2,\cdots,P_n$ obtain an ensemble $\mathcal{E}$ of $T$ distributed decision tables $\{\mathcal{T}_t\}^T_{t=1}$, each containing $D$ tests and $2^D$ secret-shared output values.
		
		\State Initialize ensemble $\mathcal{E} = \{\}$.
		\State  $\left\llbracket{\mathbf {\hat{y}}^{(0)}}\right\rrbracket \leftarrow \llbracket\mathbf{0}_{N}\rrbracket \:$.
		\State AP secret-shares $\mathbf{{y}}$ to other participants to produce $\left\llbracket{\mathbf{{y}}}\right\rrbracket$.
		
		\For {$t\in[1,T]$} \label{2-line5}
		\State $\mathcal{T}_t\leftarrow $SecTable($\{\{\mathcal{D}^{N\times J_m}_m\}^n_{m=1},\mathbf{\llbracket y\rrbracket},\llbracket \mathbf{\hat{y}}^{(t-1)} \rrbracket$) \label{2-line6}.
		\State $\left\llbracket{\mathbf {\hat{y}}^{(t)}}\right\rrbracket \leftarrow $SecInfer$(\mathcal{T}_t$, $\{\{\mathcal{D}^{N\times J_m}_m\}^n_{m=1})$ \label{2-line7}.
		\State $\left\llbracket{\mathbf{\hat{y}}^{(t)}}\right\rrbracket \leftarrow \left\llbracket{\mathbf{\hat{y}}^{(t-1)}}\right\rrbracket  + \left\llbracket{\mathbf{\hat{y}}^{(t)}}\right\rrbracket$ \label{2-line8}.
		\EndFor \label{2-line9}
		\State $\mathcal{E}.append(\mathcal{T}_t)$.
		\State Output the ensemble $\mathcal{E}$ held by $P_1,P_2,\cdots,P_n$. 
	\end{algorithmic}

\end{algorithm}

\subsection{Secure Distributed Decision Table Training}


\subsubsection{Secure Node Splitting}
\label{secsplit}

An oblivious tree grows to a new level by splitting each node at the current level into two child nodes. 
In plaintext centralized decision table training (Algorithm \ref{algo:plaintext-oblivious-tree}), node splitting is performed by partitioning the samples associated with the node to be split. 
However, in VFL, the partitioning of samples must be revealed to all participants because the training dataset is vertically partitioned and all participants hold the same samples. For instance, given a test ``$Height<180$'', the participant owning feature data of ``$Height$'' needs to tell other participants which samples are less than $180$ and which samples are greater than $180$, which will leak each sample's range of ``$Height$'' and raise critical privacy concerns.

To avoid this leakage, we design a secure node splitting component $\mathsf{SecSplit}$. 
Inspired by existing works\cite{wu2020,fang2021large,tist}, we utilize indicator vectors to conduct secure node splitting for each level of the decision table. 
At a high level, {\main} associates the $k$-th node ($k\in[0,2^d-1]$) at the $d$-th level ($d\in[0,D-1]$) of the oblivious decision tree with a first-order gradient vector $\llbracket \mathbf{g}^{k,d}\rrbracket$ and a second-order gradient vector $\llbracket \mathbf{h}^{k,d}\rrbracket$, each containing $N$ elements that are secret-shared among all participants $P_1,P_2,\cdots,P_n$. 
If the $i$-th sample is partitioned into this node, the $i$-th element in $\llbracket \mathbf{g}^{k,d}\rrbracket$ and $\llbracket \mathbf{h}^{k,d}\rrbracket$ will be set as the $i$-th sample's first and second-order gradients, respectively, otherwise  the $i$-th element will be set as $ \llbracket0\rrbracket$. 
%
%

Algorithm \ref{alg::secsplit} gives the procedure of secure node splitting. 
Firstly, participant $P_l$ who owns the optimal test $F_d<t_d$ at the $d$-th level locally generates left indicator vector $\mathbf{v}_l$ and right indicator vector $\mathbf{v}_r$ and then distributes their secret sharings (denoted by $ \llbracket \mathbf{v}_l \rrbracket$ and $ \llbracket \mathbf{v}_r \rrbracket$) to other participants (i.e., lines \ref{secplit-1}-\ref{secplit-8}  in Algorithm \ref{alg::secsplit}).
Upon receiving $ \llbracket \mathbf{v}_l \rrbracket$ and $ \llbracket \mathbf{v}_r \rrbracket$, for the $k$-th node at this level, $P_1,P_2,\cdots,P_n$ update the first-order and second-order gradient vector of the $k$-th node's left and right child nodes. The update is achieved with secure element-wise multiplication between secret-shared indicator vectors and gradient vectors (i.e., lines  \ref{secplit-9}-\ref{secplit-11} in Algorithm \ref{alg::secsplit}). 
In this way, the index set processed by each node is hidden and the number of the samples processed by each node remains constant as $N$, which means an adversary cannot deduce any information from node splitting.

\begin{algorithm}[!t]
	\caption{Secure Node Splitting ($\mathsf{SecSplit}$)} 
	\label{alg::secsplit}
	\begin{algorithmic}[1]
		\Require
		$P_1,P_2,\cdots,P_n$ hold local datasets $\{\mathcal{D}^{N\times J_m}_m\}^n_{m=1}$, the secret-shared first and second-order gradient vectors $\{\llbracket \mathbf{g}^{k,d}\rrbracket\}^{2^d-1}_{k=0}$, $\{\llbracket \mathbf{h}^{k,d}\rrbracket\}^{2^d-1}_{k=0}$ associated with $2^d$ nodes at the $d$-th level;
		$P_{l}$ holds the optimal test $F_d<t_d$ at the $d$-th level.
		\Ensure $P_1,P_2,\cdots,P_n$ obtain the secret-shared gradient vectors $\{\llbracket \mathbf{g}^{k,d+1}\rrbracket\}^{2^{d+1}-1}_{k=0}$ and $\{\llbracket \mathbf{h}^{k,d+1}\rrbracket\}^{2^{d+1}-1}_{k=0}$ associated with the $2^{d+1}$ split nodes.

		\leftline{// \underline{$P_l$ performs:}}
		\State $\mathbf{v}_l\leftarrow \mathbf{0}_{N}$\label{secplit-1}.
		\For {$i\in[0,N-1]$} 
		\If{$\mathbf{x}_i^{J_l}(F_d)<t_d$}
		\State $\mathbf{v}_{li}\leftarrow 1$ // Set the $i$-th element of $\mathbf{v}_l$ as $1$.
		\EndIf 
		\EndFor
		\State $\mathbf{v}_r\leftarrow \mathbf{1}_{N}-\mathbf{v}_l$.
		\State $P_l$ secret-shares $\mathbf{v}_l $ and $\mathbf{v}_r$ to other participants. \label{secplit-8}
		
		\leftline{// \underline{$P_1,P_2,\cdots,P_n$ perform:}}
		\For{$k\in[0,2^d-1]$} \label{secplit-9}

		\State $\llbracket \mathbf{g}^{2k,d+1} \rrbracket \leftarrow \llbracket \mathbf{v}_l \rrbracket\times \llbracket \mathbf{g}^{k,d} \rrbracket$.
		\State $\llbracket \mathbf{h}^{2k,d+1} \rrbracket \leftarrow \llbracket \mathbf{v}_l \rrbracket\times \llbracket \mathbf{h}^{k,d} \rrbracket$.
		
		\State $\llbracket \mathbf{g}^{2k+1,d+1} \rrbracket \leftarrow \llbracket \mathbf{v}_r \rrbracket\times \llbracket \mathbf{g}^{k,d} \rrbracket$.
		\State $\llbracket \mathbf{h}^{2k+1,d+1} \rrbracket \leftarrow \llbracket \mathbf{v}_r \rrbracket\times \llbracket \mathbf{h}^{k,d} \rrbracket$.

		\EndFor \label{secplit-11}
		\State Output the secret-shared gradient vectors $\{\llbracket \mathbf{g}^{k,d+1}\rrbracket\}^{2^{d+1}-1}_{k=0}$ and $\{\llbracket \mathbf{h}^{k,d+1}\rrbracket\}^{2^{d+1}-1}_{k=0}$ associated with nodes at the $(d+1)$-th level.
		
	\end{algorithmic}
\end{algorithm}

\subsubsection{Secure Sigmoid Evaluation}
\label{secsigmoid}

\revise{
There are mainly two challenges in securely calculating the Sigmoid function in the secret sharing domain. Firstly, how to compute the division $\llbracket \frac{x}{y}\rrbracket$ given two secret-shared values $\llbracket x\rrbracket$ and $\llbracket y\rrbracket$? Secondly, how to compute the exponentiation function $ \llbracket e^{x} \rrbracket$ given a secret-shared value $\llbracket x\rrbracket$?
Next, we introduce how {\main} tackles the two challenges so as to allow the participants to securely calculate the Sigmoid function in the secret sharing domain.}
For the first challenge, we introduce a secure division component $\mathsf{SDiv}$ by transforming the division calculation into a numerical optimization problem.
Specifically, we note that the core obstacle of calculating $\llbracket \frac{x}{y}\rrbracket$ given $\llbracket x\rrbracket$ and $\llbracket y\rrbracket$ is to calculate the secret-shared reciprocal $\llbracket \frac{1}{y}\rrbracket$.
Therefore, we first approximate $\frac{1}{y}$ by the iterative Newton-Raphson algorithm \cite{verbeke1995newton}, following previous works \cite{cryptgpu,crypten}: $z_{i} \leftarrow 2 z_{i-1}-y z_{i-1}^{2}$, which will converge to $z_n\approx\frac{1}{y}$. 
In {\main}, we fix the initialization $z_0=1/Y$, where $Y$ is a sufficiently large value.
Note that the approximation consists of basic subtraction and multiplication operations which are naturally supported in the secret sharing domain, given $\llbracket y\rrbracket$, the secret-shared reciprocal $\llbracket \frac{1}{y}\rrbracket$ can be securely calculated. 
After securely calculating the  reciprocal $\llbracket \frac{1}{y}\rrbracket$, {\main} multiplies $\llbracket x\rrbracket$ by $\llbracket \frac{1}{y}\rrbracket$ to obtain $\llbracket \frac{x}{y}\rrbracket$, i.e., $\llbracket \frac{x}{y}\rrbracket=\llbracket x\rrbracket\cdot\llbracket \frac{1}{y}\rrbracket$.

For the second challenge, i.e., computing the exponentiation function $ \llbracket e^{x} \rrbracket$ given a secret-shared value $\llbracket x\rrbracket$, we approximate $ e^{x} $ by limit characterization, inspired by \cite{cryptgpu}: $e^{x}\approx(1+\frac{x}{2^n})^{2^n}$, which provides a good approximation of $ e^{x} $.
Note that since the approximation consists of basic addition and multiplication operations which are naturally supported in the secret sharing domain, given $\llbracket x\rrbracket$, the secret-shared exponentiation $\llbracket e^{x}\rrbracket$ can be securely calculated.
However, we note that the approximation method requires $2^n$ chain multiplications, and thus requires $2^n$ rounds of online communication.
The approximation method is inefficient in practice since the communication complexity grows exponentially.
Therefore, {\main} further reduces the exponential communication complexity to linear communication complexity. 
More specifically, we note that the computation in the approximation can be regarded as $ a^{2^n}$ where $ a=1+\frac{x}{2^n}$.
Therefore, given $\llbracket a\rrbracket$, {\main} first securely calculates  $\llbracket a^2\rrbracket$, which only requires one round of communication.
After that, {\main} regards the output as $\llbracket y\rrbracket=\llbracket a^2\rrbracket$ followed by securely calculating $\llbracket y^2\rrbracket$, which also only requires one round of communication.
Therefore, in this way, we can securely calculate $\llbracket a^{2^n}\rrbracket$ in $\log 2^{n}=n$ rounds instead of $2^{n}$ rounds. 
\revise{
Clearly, there is a trade-off between accuracy and efficiency in approximating $e^x$ with $(1+\frac{x}{2^n})^{2^n}$ for computation in the secret sharing domain. 
In principle increasing the value of $n$ would lead to a more accurate approximation of the Sigmoid function. However, this also leads to increased computation and communication costs. 
Yet, as will be shown by our experiments, a small value of $n$ (in our case, we set $n=2$) suffices to enable {\main} to achieve the accuracy comparable to plaintext centralized learning.
}

\noindent \revise{\textbf{Remark.}} \revise{In the literature, there exist some methods for approximating the Sigmoid function so as to support secure Sigmoid evaluation, including Taylor expansion\cite{taylor}, piece-wise approximation \cite{mohassel2017secureml}, and function approximation like $f(x)=\frac{0.5x}{1+|x|}+0.5$ \cite{fang2021large}. 
%
%
For the Taylor expansion method, it requires a small input parameter (very close to 0), which is hard to satisfy in machine learning.
The piece-wise approximation method has no such requirement but suffers from notable accuracy loss\cite{fang2021large}.
The work that is most closely related to ours is due to Fang \textit{et al.} \cite{fang2021large}, who apply another function approximation method to approximate the Sigmoid function, i.e., $f(x)=\frac{0.5x}{1+|x|}+0.5$. 
However, their method still experienced non-trivial loss in accuracy in their securely trained XGBoost model. 
As reported in their experiments, the Area Under the ROC Curve (AUC) value would go up to \emph{0.84463} from 0.82945 if they replace the secure Sigmoid approximation with \emph{plaintext} Sigmoid computation. 
In contrast, as will be shown by the experiments in Section \ref{subsec:utility_evaluation}, our proposed $\mathsf{SecSigmoid}$ can achieve AUC values that are highly close to those obtained using plaintext centralized learning (e.g., the gap can be as small as 0.0005).
}

\subsubsection{Secure Discretization}
\label{secdisc}
Discretization, also called bucketing, is a commonly used grouping method in large-scale machine learning \cite{xgboost,lightgbm}.
Specifically, discretization groups the samples into a small number of buckets so as to allow the model training to scale on larger datasets.
Let $B$ denote the number of buckets in discretization, where $B\ll N$ and $N$ is the number of samples. 
In gradient boosting, gradients are grouped into buckets and the sum of gradients in each bucket is calculated in the training stage\cite{xgboost,lightgbm}. 
Typically, for each feature, the gradients are first permuted by a permutation $\pi$, which is obtained by sorting the values of this feature. 
Then the permuted gradients are partitioned into $B$ buckets. 
Obviously, the cost of training on $B$ buckets instead of $N$ samples can be greatly reduced. 

\revise{
However, discretization is non-trivial in privacy-preserving machine learning. 
In existing MPC-based works\cite{adams2022privacy,deforth2021xorboost,popets}, the training data is secret-shared among a fixed set of computing servers, and sorting the secret-shared training data for discretization requires a large number of secure comparison operations, which is expensive in the secret sharing domain.
In {\main}, the training data is vertically partitioned, and thus the sorting process can be achieved locally to reduce the overhead.
However, it is still difficult to permute the secret-shared gradient vector by a permutation $\pi$ held by a participant $P_l$ without revealing $\pi$ to other participants. 
}
%
%

\begin{algorithm}[!t]
	\caption{Secure Permutation ($\mathsf{SecPerm}$)} 
	\label{alg:SecPerm}
	\begin{algorithmic}[1]
		\Require
		$P_1,P_2,\cdots,P_n$ hold the secret-shared vector $\llbracket \mathbf{x} \rrbracket$; $P_l$ inputs the permutation $\pi$.
		\Ensure
		$P_1,P_2,\cdots,P_n$ obtain the permuted secret-shared vector $\llbracket \mathbf{u} \rrbracket$ subject to $\mathbf{u}=\pi(\mathbf{x})$.
		
		\leftline{// \underline{Initialization:}} 
		\State $P_1,P_2,\cdots,P_n$ hold in advance the secret shares of $\pi_p(\mathbf{r})$ and $\mathbf{r}$, and $P_l$ additionally holds $\pi_p$. \label{3-line2}
		
		\leftline{// \underline{Online computation:}}
		
		\leftline{ \textbf{Round 1:}}
		\State $P_l$ generates $\pi_s$ subject to $\pi(\cdot)=\pi_s[\pi_p(\cdot)]$, and then sends $\pi_s$ to other participants. \label{3-line5}
		\State Each participant $P_{m}$ locally calculates $\langle \mathbf{x}\rangle_m -\langle\mathbf{r} \rangle_m  $.
		
		\leftline{ \textbf{Round 2:}}
		\State $P_1,P_2,\cdots,P_n$ run $\mathsf{Rec}(\llbracket \mathbf{x-r} \rrbracket)$ to reveal $\mathbf{x-r}$ to $P_l$.\label{3-line7}
		\State $P_l$: $\langle \mathbf{u} \rangle_l \leftarrow \pi(\mathbf{x-r}) + \pi_s(\langle \pi_p(\mathbf{r}) \rangle_l)$. \label{3-line8}
		\State $P_{m\neq l}$: $\langle \mathbf{u} \rangle_{m\neq l} \leftarrow \pi_s(\langle \pi_p(\mathbf{r}) \rangle_{m\neq l})$. \label{3-line9}
		\State Output the secret-shared vector $\llbracket \mathbf{u} \rrbracket$ held by $P_1,P_2,\cdots,P_n$.

	\end{algorithmic}
\end{algorithm}

\begin{algorithm}[!t]
	\caption{Secure Discretization ($\mathsf{SecDisc}$)} 
	\label{alg::secdisc}
	\begin{algorithmic}[1]
		\Require
		$P_1,P_2,\cdots,P_n$ hold the secret-shared first and second-order gradient vectors $\llbracket{\mathbf{g}}\rrbracket$ and $\llbracket{\mathbf{h}}\rrbracket$; $P_l$ holds the permutation $\pi$.
		\Ensure
		$P_1,P_2,\cdots,P_n$ obtain two secret-shared vectors $\llbracket\boldsymbol{\alpha}\rrbracket$ and $\llbracket\boldsymbol{\beta}\rrbracket$ containing $B$ grouped first and second-order gradients, respectively.
		
		\State $\llbracket{\mathbf{g'}}\rrbracket\leftarrow$ $\mathsf{SecPerm}$($\pi,\llbracket{\mathbf{g}}\rrbracket$).
		\State $\llbracket{\mathbf{h'}}\rrbracket\leftarrow$ $\mathsf{SecPerm}$($\pi,\llbracket{\mathbf{h}}\rrbracket$).
		
		\State $\llbracket\boldsymbol{\alpha}\rrbracket \leftarrow \llbracket\mathbf{0}_{B}\rrbracket,\llbracket\boldsymbol{\beta}\rrbracket \leftarrow \llbracket\mathbf{0}_{B}\rrbracket$.
		\State $M\leftarrow N/B$ \label{disc-line6}.
		
		\For{$b\in[0,B-1]$} 
		\State $\llbracket \boldsymbol{\alpha}_{b}\rrbracket \leftarrow \sum_{i=b\times M}^{(b+1)\times M-1}\llbracket \mathbf{g}'_{i}\rrbracket$.
		\State $\llbracket \boldsymbol{\beta}_{b}\rrbracket \leftarrow \sum_{i=b\times M}^{(b+1)\times M-1}\llbracket \mathbf{h}'_{i}\rrbracket$.
		\EndFor \label{disc-line9}
		\State Output the secret-shared vectors $\llbracket\boldsymbol{\alpha}\rrbracket$ and $\llbracket\boldsymbol{\beta}\rrbracket$ held by $P_1,P_2,\cdots,P_n$.
		
	\end{algorithmic}
\end{algorithm}

\revise{
To tackle the challenge, we propose a secure permutation algorithm $\mathsf{SecPerm}$ (shown in Algorithm \ref{alg:SecPerm}), which stems from the correlated randomness (CR) scheme in \cite{fang2021large}.
Our tailored design $\mathsf{SecPerm}$ enables our secure discretization component $\mathsf{SecDisc}$ to outperform that in  \cite{fang2021large}  in supporting an arbitrary number of participants. 
Fang \textit{et al.} \cite{fang2021large} design two MPC-based secure discretization methods.
Specifically, they first propose a basic discretization method based on multiplications between secret-shared lagre-scale matrices.
Then they obtain significant speedup over the basic method by utilizing CR to efficiently permute secret-shared gradients, and then group them. 
However, both the basic and improved methods in \cite{fang2021large} only work under the two-party setting in VFL.
The work \cite{tist} is the first MPC-based work supporting more than two participants in VFL with GBDT, but it simply follows the basic discretization method in \cite{fang2021large}.
In contrast, {\main} tailors the improved discretization method from \cite{fang2021large} to support an arbitrary number of participants, which is more efficient than the straightforward secret-shared lagre-scale matrix multiplication-based method from \cite{tist}. 
}

As shown in Algorithm \ref{alg:SecPerm}, at the beginning of $\mathsf{SecPerm}$,  participants $P_1,P_2,\cdots,P_n$ hold a secret-shared vector $\llbracket \mathbf{x} \rrbracket$ and $P_l$ holds a permutation $\pi$.
At the end of $\mathsf{SecPerm}$, participants $P_1,P_2,\cdots,P_n$ hold a secret-shared vector $\llbracket \mathbf{u} \rrbracket$ where $\mathbf{u}=\pi(\mathbf{x})$.
$\mathsf{SecPerm}$ guarantees that except for $P_l$, other participants cannot know the  permutation $\pi$.
In the initialization of Algorithm \ref{alg:SecPerm}, all participants hold in advance the secret shares of $\pi_p(\mathbf{r})$ and $\mathbf{r}$, and $P_l$ additionally holds $\pi_p$. 
After the initialization, all participants collaboratively permute $\llbracket \mathbf{x} \rrbracket$ in $2$ rounds. 
In the first round, $P_l$ generates the permutation $\pi_s$, which subjects to $\pi(\cdot)=\pi_s[\pi_p(\cdot)]$, and then $P_l$ sends  $\pi_s$  to all other participants. 
After that, each participant $P_{m}$ locally calculates $\langle \mathbf{x}\rangle_m -\langle\mathbf{r} \rangle_m  $. 
In the second round, $ \mathbf{x-r}$ is revealed to $P_l$.
Finally, the participants output $\llbracket\mathbf{u}\rrbracket= \llbracket \pi(\mathbf{x}) \rrbracket$ ( i.e., lines \ref{3-line8}-\ref{3-line9} in Algorithm \ref{alg:SecPerm}). 

The correctness analysis of $\mathsf{SecPerm}$ is as follows: 
\begin{equation}
\begin{aligned}
\mathbf{u} &= \langle \mathbf{u} \rangle _1 + \langle \mathbf{u} \rangle _2 + \cdots+\langle \mathbf{u} \rangle _l+\cdots+\langle \mathbf{u} \rangle _n\\
&= \pi_s(\langle \pi_p(\mathbf{r}) \rangle_1)+\pi_s(\langle \pi_p(\mathbf{r}) \rangle_2)+\cdots\\
&\quad+ \pi(\mathbf{x-r}) + \pi_s(\langle \pi_p(\mathbf{r}) \rangle_l)+\cdots+\pi_s(\langle \pi_p(\mathbf{r}) \rangle_n)\\
&=\pi(\mathbf{x-r})+\pi_s( \pi_p(\mathbf{r}) )\\
&=\pi(\mathbf{x-r})+\pi(\mathbf{r})\\
&=\pi(\mathbf{x}).
\end{aligned}
\nonumber
\end{equation}

Then, we introduce how {\main} securely realizes discretization protocol $\mathsf{SecDisc}$ based on $\mathsf{SecPerm}$.
At a high level, $\mathsf{SecDisc}$ first uses $\mathsf{SecPerm}$ to securely permute the secret-shared first and second-order gradients of $P_1,P_2,\cdots,P_n$  with a permutation $\pi$ held by $P_l $, and then partitions the gradients into $B$ buckets. 
Algorithm \ref{alg::secdisc} describes the details of our secure discretization protocol.

At the beginning, the secret-shared first and second-order gradient vectors $\llbracket{\mathbf{g}}\rrbracket$ and $\llbracket{\mathbf{h}}\rrbracket$ are securely permuted by $\mathsf{SecPerm}$, which outputs $\llbracket{\mathbf{g'}}\rrbracket$ and $\llbracket{\mathbf{h'}}\rrbracket$.
After that, $P_1,P_2,\cdots,P_n$ first initialize two secret-shared vectors $\llbracket\boldsymbol{\alpha}\rrbracket$ and $\llbracket\boldsymbol{\beta}\rrbracket$ of length $B$ to store the grouped first and second-order gradients, respectively. 
$\llbracket\boldsymbol{\alpha}\rrbracket$ and $\llbracket\boldsymbol{\beta}\rrbracket$ can be locally initialized as $\mathbf{0}_{B}$.
After that, for $b\in[0,B-1]$, the $b$-th bucket's secret-shared grouped first and second-order gradients $	\llbracket \boldsymbol{\alpha}_{b}\rrbracket $ and $\llbracket \boldsymbol{\beta}_{b}\rrbracket$ are calculated as follows: 
\begin{equation}\notag
\llbracket \boldsymbol{\alpha}_{b}\rrbracket = \sum_{i=b\times M}^{(b+1)\times M-1}\llbracket \mathbf{g}'_{i}\rrbracket; ~~~~
\llbracket \boldsymbol{\beta}_{b}\rrbracket = \sum_{i=b\times M}^{(b+1)\times M-1}\llbracket \mathbf{h}'_{i}\rrbracket,\notag
\end{equation}
where $M=N/B$ is the number of gradients in a bucket. 
For conciseness, we assume that $N$ can divide $B$ evenly.
As introduced in Section \ref{secsplit}, since the invalid gradients are set as $\llbracket 0\rrbracket$, the sum of the secret-shared gradients in a bucket is equal to that in the plaintext.


\subsubsection{Secure Decision Table Training Algorithm}
\label{subsec:sectable}

\begin{algorithm}[!t]
	\caption{Secure Training of a Distributed Decision Table ($\mathsf{SecTable}$)} 
	\label{alg::sectable}
	\begin{algorithmic}[1]
		
		\Require
		$P_1,P_2,\cdots,P_n$ hold local datasets $\{\mathcal{D}^{N\times J_m}_m\}^n_{m=1}$, secret-shared label set $\llbracket \mathbf{y} \rrbracket$ and aggregated inference results of the current models $\llbracket \mathbf{\hat{y}} \rrbracket$. 
		
		\Ensure
		$P_1,P_2,\cdots,P_n$ obtain a distributed decision table $\mathcal{T}$ with $D$ tests and a secret-shared vector $\llbracket \mathbf{w}\rrbracket$ of $2^D$  output values.

		\leftline{\space\space\space\space\space\space// \underline{Initialization:}}
		\If {the problem is regression} \label{4-1}
		\State  $\llbracket \mathbf{g} \rrbracket \leftarrow \llbracket\hat{\mathbf{y}} \rrbracket - \llbracket{\mathbf{y}} \rrbracket,\llbracket \mathbf{h} \rrbracket \leftarrow \llbracket \mathbf{1} \rrbracket$.
		\Else
		\State $\llbracket \mathbf{p} \rrbracket \leftarrow \mathsf{SecSigmoid}(\llbracket \mathbf{\hat{y}} \rrbracket)$.
		\State $\llbracket \mathbf{g} \rrbracket \leftarrow \llbracket{\mathbf{p}} \rrbracket - \llbracket{\mathbf{y}} \rrbracket,\llbracket\mathbf{h} \rrbracket \leftarrow\llbracket \mathbf{p} \rrbracket\times(\llbracket \mathbf{1}\rrbracket-\llbracket\mathbf{p} \rrbracket)$.
		\EndIf \label{4-6}
		
		\State $\llbracket \mathbf{w}\rrbracket \leftarrow \llbracket \mathbf{0}_{2^D}\rrbracket$.
		
		\leftline{\space\space\space\space\space\space// \underline{Training:}}
		\For {$d\in[0,D-1]$}	\label{4-line2}
		\State $\llbracket \boldsymbol{\sigma} \rrbracket \leftarrow \llbracket \mathbf{0}_{J} \rrbracket$, $ \boldsymbol{\gamma}  \leftarrow  \mathbf{0}_{J} $\label{4-line9}. 
		\For {$j\in[0,J-1]$ \textbf{in parallel}}	\label{4-line4}
		\State Participant who holds $\mathcal{D}(j)$ sorts $\mathcal{D}(j)$ to produce
		
		the permutation $\pi_j$. 
		\State $\llbracket \boldsymbol{\delta} \rrbracket \leftarrow \llbracket \mathbf{0}_{B-1} \rrbracket$.
		\For{$k\in [0,2^d-1]$}
		\State \leftline{$\llbracket\boldsymbol{\alpha}^{k,d}\rrbracket,\llbracket\boldsymbol{\beta}^{k,d}\rrbracket \leftarrow \mathsf{SecDisc}(\pi_j,\llbracket{\mathbf g^{k,d}}\rrbracket,\llbracket{\mathbf h^{k,d}}\rrbracket)$.}
		\State $\llbracket {G} \rrbracket,\llbracket {H} \rrbracket \leftarrow \sum_{i=0}^{B-1}\llbracket\boldsymbol{\alpha}^{k,d}_i\rrbracket,\sum_{i=0}^{B-1}\llbracket\boldsymbol{\beta}^{k,d}_i\rrbracket $.
		
		\For{$c\in[0,B-2]$}
		\State $\llbracket {G_{l}} \rrbracket,\llbracket{H_{l}}\rrbracket \leftarrow \sum_{i=0}^{c}\llbracket\boldsymbol{\alpha}^{k,d}_i\rrbracket,   \sum_{i=0}^{c}\llbracket\boldsymbol{\beta}^{k,d}_i\rrbracket$.
		\State \leftline{$\llbracket {G_{r}} \rrbracket,\llbracket{H_{r}}\rrbracket \leftarrow \llbracket {G} \rrbracket - \llbracket {G}_l \rrbracket, \llbracket {H} \rrbracket - \llbracket {H}_l \rrbracket$.}
		\State $\llbracket \boldsymbol{\delta}_{c}\rrbracket\leftarrow \llbracket \boldsymbol{\delta}_{c}\rrbracket-\frac{1}{2} \frac{\llbracket {G_{l}} \rrbracket^{2}}{\llbracket {H_{l}} \rrbracket+\llbracket\lambda\rrbracket}-\frac{1}{2} \frac{\llbracket {G_{r}} \rrbracket^{2}}{\llbracket {H_{r}} \rrbracket+\llbracket\lambda\rrbracket}$.
		
		\EndFor

		\EndFor
		\State $q \leftarrow \mathsf{SecArgmin}(\llbracket \boldsymbol{\delta} \rrbracket)$.
		\State $\llbracket \boldsymbol{\sigma}_{j} \rrbracket \leftarrow \llbracket \boldsymbol{\delta}_{q} \rrbracket$. \label{4-line23}
		\State $ \boldsymbol{\gamma}_{j}  \leftarrow q$. \label{4-line24}
		
		\EndFor \label{4-line16}
		\State $F_d \leftarrow \mathsf{SecArgmin}(\llbracket \boldsymbol{\sigma} \rrbracket)$ \label{4-line26} // Optimal split feature.
		\State $q_d \leftarrow \boldsymbol{\gamma}_{F_d} $  // Optimal bucket ID.\label{4-line27}
		\State $t_d \leftarrow \{\pi_{F_d}[\mathcal{D}(F_d)]\}_{(q_d+1)\times(N/B)}$  \label{4-line28}
		\State $\mathsf{SecSplit}$($\{\mathcal{D}^{N\times J_m}_m\}^n_{m=1},\{\llbracket \mathbf{g}^{k,d}\rrbracket\}^{2^d-1}_{k=0},\{\llbracket \mathbf{h}^{k,d}\rrbracket\}^{2^d-1}_{k=0},$
		\Statex \quad\quad\quad\quad\quad\: $F_d<t_d$). \label{4-line29}
		\EndFor
		
		\For{$k\in[0,2^D-1]$} \label{4-line21}
		\State $\llbracket{G}\rrbracket,\llbracket{H}\rrbracket \leftarrow \sum_{i=0}^{N-1}\llbracket\mathbf{g}^{k,D}_i\rrbracket,\sum_{i=0}^{N-1}\llbracket\mathbf{h}^{k,D}_i\rrbracket$.
		\State $\llbracket \mathbf{w}_{k}\rrbracket \leftarrow -\frac{\llbracket{G}\rrbracket}{\llbracket{H}\rrbracket+\llbracket\lambda\rrbracket}$.
		\EndFor \label{4-line25}
		\State Output a decision table $\mathcal{T}$ with $D$ tests, each held by a participant, and a secret-shared vector $\llbracket \mathbf{w}\rrbracket$ held by $P_1,P_2,\cdots,P_n$.
	\end{algorithmic}
\end{algorithm}

In this section, we introduce how {\main} combines the components introduced above to securely train a distributed decision table.
Algorithm \ref{alg::sectable} (named as $\mathsf{SecTable}$) describes this process. 
$\mathsf{SecTable}$ is the secure instantiation of Algorithm \ref{algo:plaintext-oblivious-tree} and relies on the coordination
of the secure components introduced above.

Algorithm \ref{alg::sectable} inputs the vertically partitioned datasets $\{\mathcal{D}^{N\times J_m}_m\}^n_{m=1}$, secret-shared label $\llbracket \mathbf{y} \rrbracket$, and aggregated inference results $\llbracket \mathbf{\hat{y}} \rrbracket$ from the previous round of training, and then outputs a distributed decision table. 
The Boolean tests at different levels of the decision table are held by different participants and the output values associated with each leaf node are stored in an secret-shared vector $\llbracket \mathbf{w}\rrbracket$.

At the beginning of Algorithm \ref{alg::sectable}, $P_1,P_2,\cdots,P_n$ calculate the secret-shared first and second-order gradient vectors (for the root node) (i.e., $\llbracket \mathbf{g} \rrbracket$ and $\llbracket \mathbf{h} \rrbracket$ at lines \ref{4-1}-\ref{4-6}). After calculating the secret-shared gradients, {\main} initializes a secret-shared vector $\llbracket \mathbf{w} \rrbracket=\llbracket \mathbf{0}_{2^D}\rrbracket$ to store the $2^D$ secret-shared output values.
After that, a decision table will be built level by level.
Similar to the functionality of $\mathsf{find\_split}$ in Algorithm \ref{algo:plaintext-oblivious-tree}, {\main} securely selects the optimal test $F_d<t_d$ at the $d$-th level ($d\in[0,D-1]$).
Specifically, the selection is made greedily: the learning algorithm first selects the best test for each feature (i.e., lines~\ref{4-line4}-\ref{4-line16}) and then selects the optimal test among the $J$ selected candidate tests (i.e., lines~\ref{4-line26}-\ref{4-line28}). 
It is noted that the selection here is made following that in the plaintext domain, which is introduced in Section \ref{gradient-boosting}, and the operations in the selection are substituted with secure operations and proposed components.
For the $J$ candidate tests, we initialize a secret-shared vector $\llbracket \boldsymbol{\sigma} \rrbracket=\llbracket \mathbf{0}_{J} \rrbracket$ to store the score of each feature's best test.
Besides, {\main} uses a public vector $\boldsymbol{\gamma}=\mathbf{0}_{J}$ to record the bucket ID of each feature's best test.

For simplicity, the total $J$ features are numbered from $0$ to $J-1$. 
At the beginning of the loop for the $j$-th feature in Algorithm \ref{alg::sectable}, the participant who holds the $j$-th feature first generates a permutation $\pi_j$ locally by sorting the values of the $j$-th feature (denoted by $\mathcal{D}(j)$) in the ascending order, which will be used to securely permute the secret-shared gradient vectors associated with each node at this level. 
After that, a naive method is to permute the gradient vectors and then adapt the Exact Greedy Algorithm \cite{xgboost} to enumerate each training sample to find the best test.
However, enumerating all training samples incurs heavy computation overhead.
Moreover, it will incur prohibitively expensive communication overhead in the distributed setting, degrading the efficiency of the system.

We propose a component $\mathsf{SecDisc}$ (shown in Algorithm \ref{alg::secdisc}) to tackle this challenge and enable {\main} to scale on larger datasets. 
Specifically, $\mathsf{SecDisc}$ inputs secret-shared gradient vectors $\llbracket{\mathbf g^{k,d}}\rrbracket$ and $\llbracket{\mathbf h^{k,d}}\rrbracket$ associated with the $k$-th node at the $d$-th level. 
The secret-shared gradients in $\llbracket{\mathbf g^{k,d}}\rrbracket$ and $\llbracket{\mathbf h^{k,d}}\rrbracket$ are securely discretized into $B$ buckets and stored in secret-shared vectors $\llbracket\boldsymbol{\alpha}^{k,d}\rrbracket$ and $\llbracket\boldsymbol{\beta}^{k,d}\rrbracket$, respectively. 
There are $B-1$ intervals among the $B$ buckets and each corresponds to a candidate test. In this way, {\main} only needs to select the best test from $B-1$ candidate tests for each feature, instead of enumerating $N$ samples, so as to save considerable computation and communication cost. In {\main}, we initialize a secret-shared vector $\llbracket \boldsymbol{\delta} \rrbracket=\llbracket \mathbf{0}_{B-1} \rrbracket$  for each feature to store the scores of the $B-1$ candidate tests.
%

After securely discretizing gradients into $B$ buckets, the $B-1$ candidate tests are evaluated to select the best test of the $j$-th feature. 
For the $c$-th candidate test ($c\in[0,B-2]$), the first $c+1$ buckets are aggregated to get $\llbracket{G_{l}}\rrbracket$ and $\llbracket{H_{l}}\rrbracket$, which are the sum of gradients associated with the left child node, and the remaining $B-c-1$ buckets are aggregated to get $\llbracket{G_{r}}\rrbracket$ and $\llbracket{H_{r}}\rrbracket$, which are the sum of gradients associated with the right child node.
The impurity of each node's two child nodes is securely computed following Eq. \ref{eq:imp} and then aggregated together to produce the secret-shared score of the  $c$-th candidate test following Eq. \ref{eq:score}. 
The division needed in Eq. \ref{eq:imp} can be securely calculated with our proposed secure component $\mathsf{SDiv}$ in Section \ref{secsigmoid}.
Then for the $j$-th feature, we will have $B-1$ secret-shared scores stored in $\llbracket \boldsymbol{\delta} \rrbracket$.

After getting the $B-1$ scores, we need to select the best test that achieves the minimum score, which requires a method to securely calculate the index of the minimum value in a secret-shared vector. 
To tackle this challenge, {\main} introduces a component $\mathsf{SecArgmin}$, which inputs a secret-shared vector and outputs the index of the minimum value of the vector.
It is noted that the key operation in the function Argmin is comparison, which is not naturally supported in the secret sharing domain. 
The secure comparison operation in our {\main} is introduced as follows.
Given two secret-shared values $\llbracket A\rrbracket$ and $\llbracket B\rrbracket$, {\main} first locally decomposes $\llbracket A- B\rrbracket$ into bits, and then inputs these bits into a parallel prefix adder (PPA) to securely compute the secret-shared most significant bit (MSB) of $\llbracket A- B\rrbracket$,  inspired by \cite{mohassel2018aby3,LiuZYY21}. 
After that, we convert the secret-shared MSB into the arithmetic sharing domain by the method in \cite{crypten}, so as to get the secret-shared result of the secure comparison. 
Based on the secure comparison method introduced above, $\mathsf{SecArgmin}$ inputs the secret-shared vector $\llbracket \boldsymbol{\delta} \rrbracket$ and then outputs bucket ID $q\in[0,B-2]$ of the $j$-th feature's best test in the plaintext. 
In {\main}, all participants can learn the produced bucket ID in the training stage, but only the participant who owns the $j$-th feature can get the threshold of the $j$-th feature's best test. 
Since the values of the $j$-th feature (denoted by $\mathcal{D}(j)$) is sorted in ascending order, the participant who owns the $j$-th feature can get the split threshold via looking up the sorted values (i.e., $\pi_j[\mathcal{D}(j)]$) with index $(q+1)\times(N/B)$.  
Other participants cannot deduce the split threshold because the $j$-th feature is kept locally by its owner  and unavailable to them.

To select the optimal test of all features, {\main}  lets the participants record the $j$-th feature's best bucket ID $q$ and secret-shared minimum split score $\llbracket \boldsymbol{\delta}_{q} \rrbracket$ at the $j$-th position of $\boldsymbol{\gamma}$ and $\llbracket \boldsymbol{\sigma}\rrbracket$, respectively (i.e., lines \ref{4-line23}-\ref{4-line24} in Algorithm \ref{alg::sectable}). 
Recall that for the $J$ features, we use $\boldsymbol{\gamma}$ and $\llbracket \boldsymbol{\sigma} \rrbracket$ to store the bucket ID and split score of each feature's best test. 
The $J$ indices of $\boldsymbol{\gamma}$ and $\llbracket \boldsymbol{\sigma} \rrbracket$ correspond to $J$ features, respectively. 
After enumerating $J$ features, $J$ scores are stored in $\llbracket \boldsymbol{\sigma} \rrbracket$ and $\mathsf{SecArgmin}$ is needed to be called again on $\llbracket \boldsymbol{\sigma} \rrbracket$. 
The output is the split feature $F_d$ of the optimal test, which is known by all participants. 
The optimal bucket ID $q_d$ of the split feature can then be retrieved with $F_d$ from $\boldsymbol{\gamma}$ (line \ref{4-line27}). After that, the participant who owns the split feature $F_d$ looks up the its sorted values $\pi_{F_d}[\mathcal{D}(F_d)]$ with index $(q_d+1)\times(N/B)$ to get the split threshold $t_d$.

After learning the $d$-th test $F_d<t_d$, the participant who owns $F_d<t_d$ cooperates with other participants to securely split all the nodes at the $d$-th level with $\mathsf{SecSplit}$ to create a new level (line \ref{4-line29}). A decision table in {\main} is learned level by level in this way. 
At the $D$-th level, {\main} securely calculates output values for the $2^{D}$ leaf nodes following Eq. \ref{eq:weight} (i.e., lines \ref{4-line21}-\ref{4-line25} in Algorithm \ref{alg::sectable}), where the division is securely calculated with $\mathsf{SDiv}$ in Section \ref{secsigmoid}. 
Finally, $\mathsf{SecTable}$ outputs a distributed decision table consisting of $D$ tests and $2^D$ secret-shared output values. 
Specifically, all participants know the split feature $F_d$ at the $d$-th level where $d\in[0,D-1]$, but each split threshold $t_d$ is only available to the participant who owns the feature $F_d$.

\subsection{Secure Distributed Decision Table Inference}
\label{SecInfer}

\begin{figure}[!t]
	\centering
	\includegraphics[scale=0.5]{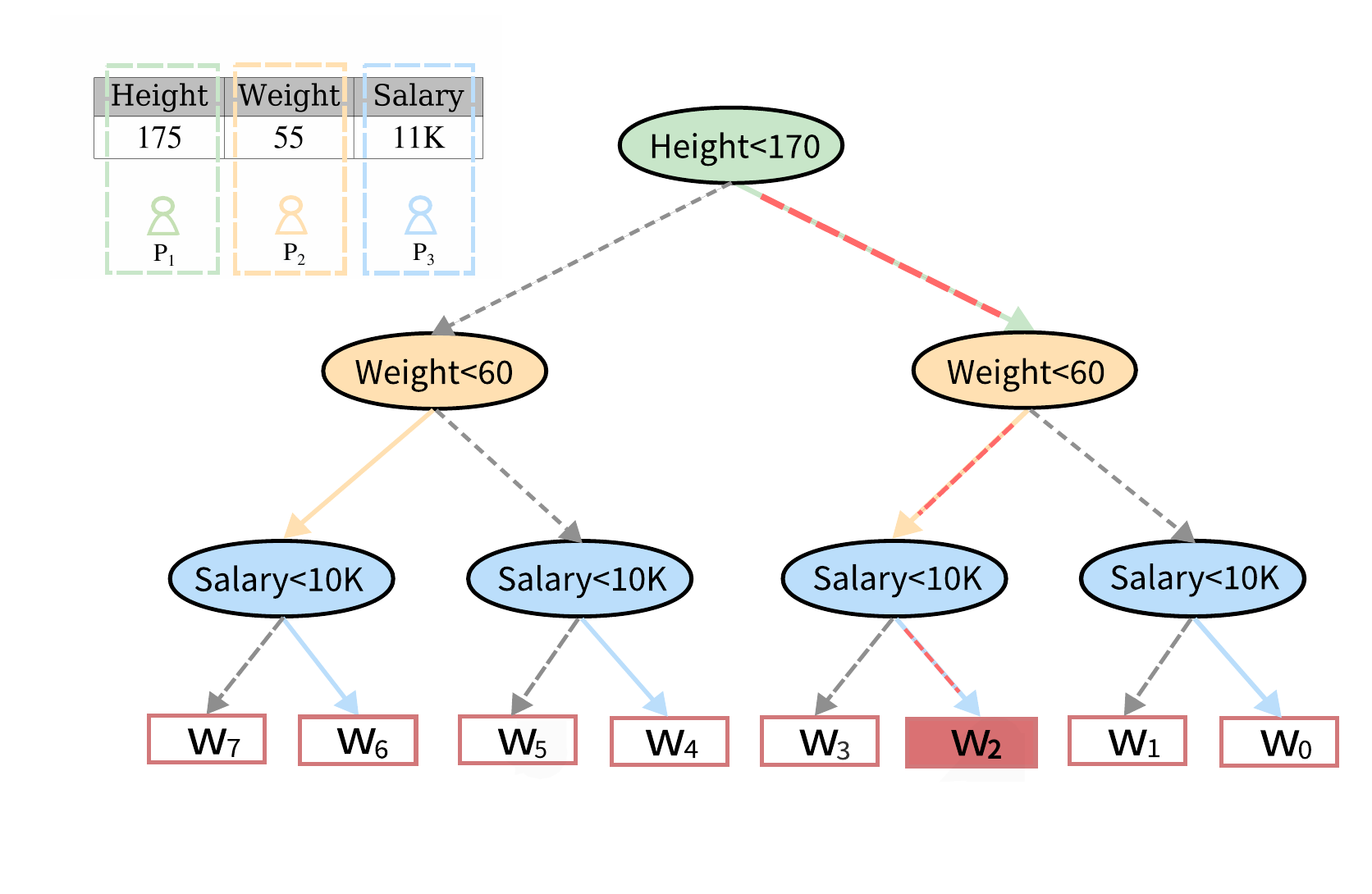}
	\caption{A simple example of secure distributed decision table inference. }
	\label{fig:SecInfer}
	\vspace{-10pt}
\end{figure}


In {\main}, each decision table in the ensemble learned in the secure training phase is held by the participants in a distributed manner, where each participant holds a part of it. 
Recall that in our secure VFL framework (Algorithm \ref{alg:secure-framework}), once a distributed decision table $\mathcal{T}$ is securely learned in a certain round, we need to perform secure inference over the training data using $\mathcal{T}$. 
The the produced inference results at this round will be securely aggreagted with previous inference results for use in securely training a new distributed decision table in the next round.
To prevent the partial model and local data on each participant from leaking during the secure inference process, we propose a secure distributed decision table inference protocol $\mathsf{SecInfer}$, which relies on secure multiplication of indicator vectors to conduct privacy-preserving inference, as shown in Algorithm \ref{alg::SecInfer}. 
$\mathsf{SecInfer}$ allows the participants to cooperatively perform secure inference on their local data utilizing the distributed ensemble and produce secret-shared  inference results while keeping the local data and partial model not unavailable to other participants throughout the inference process. We introduce the design of $\mathsf{SecInfer}$ as follows.

To securely produce the inference result of a vertically partitioned sample $\{\mathbf{x}^{J_1},\mathbf{x}^{J_2},\cdots,\mathbf{x}^{J_n}\}$, the participant $P_l$ who owns the test $F_d<t_d$ at the $d$-th level ($d\in[0,D-1]$) locally generates a leaf indicator (denoted by $\mathbf{u}_d$) by comparing the sample's feature value of $F_d$ (represented as $\mathbf{x}^{J_l}(F_d)$) with the split threshold $t_d$. After that, the leaf indicator $\mathbf{u}_d$ is secret-shared to other participants. 
The inference result could then be obliviously calculated by secure element-wise multiplication between the $D$ secret-shared leaf indicators $\{\llbracket\mathbf{u}_i\rrbracket\}_{i=0}^{D-1}$ and the secret-shared vector $\llbracket \mathbf{w}\rrbracket$ of decision table's output values.
We take a $3$-dimensional decision table to present the details of model distribution and $\mathsf{SecInfer}$ in Fig. \ref{fig:SecInfer}. 
Without loss of generality, we assume that the tests are ``$Height < 170$'', ``$Weight < 60$'', and ``$Salary < 10000$'',  held by participants $P_1,P_2,P_3$ , respectively, and the eight leaves' output values are secret-shared among $n$ participants. 
In Fig. \ref{fig:SecInfer}, it is noted that the exact test is only visible to the participant owning the corresponding feature, e.g., only $P_1$ knows the test ``$Height < 170$'' at the $0$-th level.

We take the inference of sample (``$Height = 175$'', ``$Weight = 55$'', ``$Salary = 11000$'') as an example. The three features are vertically partitioned and held by $P_1,P_2,P_3$, respectively.  For the first test ``$Height < 170$'', $P_1$ locally compares ``$Height = 175$'' with the threshold 170 and generate leaf indicator vector $\mathbf{u}_0=(0,0,0,0,1,1,1,1)$ to guide the inference path because $175>170$. Similarly, we can get $\mathbf{u}_1=(1,1,0,0,1,1,0,0)$ and $\mathbf{u}_2=(0,1,0,1,0,1,0,1)$. Each leaf indicator vector is then secret-shared among all participants. 
Recall that in the basic decision table inference introduced in Section \ref{sec:preliminaries}, the comparisons required by different Boolean tests are parallelized to accelerate inference due to the oblivious tree structure. Although the learned decision tables in the ensemble in {\main} are distributed and secret-shared, their oblivious structure remains unchanged, and thus operations at different levels can still be parallelized.
After the sharing of leaf indicator vectors, the inference result of this sample can be obliviously calculated by element-wise multiplication as follows: $\llbracket \mathbf{u}_0\rrbracket \times \llbracket \mathbf{u}_1\rrbracket\times \llbracket \mathbf{u}_2\rrbracket \times\llbracket \mathbf{w}\rrbracket$. In this way, each participant will not know which path in the distributed decision table is used during the secure inference process.
%
%

%

\begin{algorithm}[!t]
	\caption{Secure Decision Table Inference ($\mathsf{SecInfer}$)} 
	\label{alg::SecInfer}
	\begin{algorithmic}[1]
		
		\Require
		$P_1,P_2,\cdots,P_n$ hold a vertically partitioned sample $\{\mathbf{x}^{J_1},\mathbf{x}^{J_2},\cdots,\mathbf{x}^{J_n}\}$, a distributed decision table $\mathcal{T}$ of $D$ tests and a secret-shared vector $\llbracket \mathbf{w}\rrbracket$ containing $2^D$ output values;
		\Ensure
		$P_1,P_2,\cdots,P_n$ obtain the secret-shared inference result $\llbracket w\rrbracket$.
		
		\For{$d\in[0,D-1]$ \textbf{in parallel}}
		\State $P_l$ who holds the $d$-th test $F_d<t_d$ locally compares 
		\Statex \quad  $\mathbf{x}^{J_l}(F_d)$ with $t_d$. \label{6-line2}
		\State $P_l$ locally generates leaf indicator $\mathbf{u}_d$ at this level \label{6-line3}
		\Statex \quad  according to the outcome of the Boolean test. 
		\State $P_l$ secret-shares $\mathbf{u}_d$ to other participants. \label{6-line4}
		\EndFor 
		
		
		\leftline{//\underline{$P_1,P_2,\cdots,P_n$ perform:}}
		\State $\llbracket \mathbf{s}\rrbracket\leftarrow\llbracket \mathbf{u}_0\rrbracket \times \llbracket \mathbf{u}_1\rrbracket\times\cdots \times\llbracket \mathbf{u}_{D-1}\rrbracket\times\llbracket \mathbf{w}\rrbracket$.
		\State $\llbracket w\rrbracket \leftarrow \sum_{i=0}^{2^{D}-1} \llbracket \mathbf{s}_i\rrbracket$. \label{6-line7}
		\State Output the secret-shared inference result $\llbracket w\rrbracket$ held by $P_1,P_2,\cdots,P_n.$
	\end{algorithmic}
\end{algorithm}

\noindent \revise{\textbf{Remark.}} \revise{
	We note that there are some existing secure distributed decision tree inference methods \cite{tist,fang2021large} in the VFL setting. However, they are not well suited for the required secure distributed decision table inference in {\main}. 
	%
	%
	At a high level, these two works and {\main} share the common approach of using indicator vectors to enable secure inference.
	However, the inherent structural differences between decision tables and decision trees result in different methods of generating and utilizing indicator vectors for guiding the inference paths during privacy-preserving inference.  
	In secure distributed decision tree inference of \cite{tist,fang2021large}, each internal node is associated with an indicator vector, and the inference result is produced by secure multiplication of these indicator vectors. In contrast, for secure distributed decision table inference, each level in the decision table is associated with an indicator vector. 
	As a result, existing secure distributed decision tree inference methods cannot be efficiently extended to secure distributed decision table inference. 
	For instance, with our proposed $\mathsf{SecInfer}$ protocol and a decision table with a dimension of $D=6$, only six indicator vectors and six secure multiplications are needed. However, applying the method from \cite{tist,fang2021large} to our target problem would require $2^6-1$ indicator vectors and \emph{63} secure multiplications, resulting in poor efficiency.
	Moreover, the secure distributed tree inference method in \cite{fang2021large} is not applicable in our setting because it targets a two-party setting, while {\main} aims to support an arbitrary number of participants. 
}

\section{Security Analysis }
\label{sec:security-analysis}

\revise{{\main} utilizes standard secret sharing techniques \cite{demmler2015aby} to properly encrypt the intermediate information during both training and inference phases and the secret shares are uniformly distributed in a ring $\mathbb{Z}_{2^{Q}}$. 
In addition, throughout the VFL procedure, the feature data owned by each participant is kept locally. 
We follow the standard simulation-based paradigm \cite{lindell2017simulate} to analyze the security of {\main}.}
\revise{We first define the ideal functionality of our target privacy-preserving VFL with gradient boosted decision tables as follows.}
\begin{defn}	
	\emph{\revise{The ideal functionality $\mathcal{F}_{\text {{VDT}}}$ of privacy-preserving VFL with gradient boosted decision tables is formulated as follows:}} 
	
	\emph{\revise{-$\textbf{Input}$. The input to the $\mathcal{F}_{\text {{VDT}}}$  consists of datasets $\{\mathcal{D}^{N\times J_m}_m\}^n_{m=1}$ and the label set $\mathbf{y}$ from the participants $P_1,P_2,\cdots,P_n$.}}
	
	\emph{\revise{-$\textbf{Computation}$. Upon receiving the above input, the ideal functionality $\mathcal{F}_{\text {{VDT}}}$ performs training of gradient boosted decision tables and produces the trained model \textsf{VDT}, which consists of $T$ decision tables.}}
	
	\emph{\revise{-$\textbf{Output}$. The ideal functionality $\mathcal{F}_{\text {{VDT}}}$ broadcasts the split feature in the decision table to all participants, but only sends the split threshold to the participant who holds the corresponding split feature. 
			Additionally, the $\mathcal{F}_{\text {{VDT}}}$ splits the output values into $n$ secret shares and then distributes them to the participants $P_1,P_2,\cdots,P_n$.
	}}
\end{defn}

	

\begin{defn}
	\label{def:security}
	\emph{\revise{A protocol $\Pi$ securely realizes the ideal functionality $\mathcal{F}_{\text {{VDT}}}$ in the semi-honest adversary setting if a semi-honest participant does not learn any information about other participants' private data and partial model. Formally, let $\mathsf{View}^{\Pi}_{P_m}$ represent participant $P_m$'s view during the execution of $\Pi$. Formally, there should exist a PPT simulator, which can generate a simulated view $\mathsf{Sim}^{\Pi}_{P_m}$ such that $\mathsf{Sim}^{\Pi}_{P_m}$ is indistinguishable from $\mathsf{View}^{\Pi}_{P_m}$.}}
\end{defn}

\begin{thm}
	\revise{Our {\main} securely realizes
		the ideal functionality $\mathcal{F}_{\text {{VDT}}}$ against a semi-honest adversary who can statically corrupt a subset of $\tau$ participants ($\tau\leq n-1$) according to Definition \ref{def:security}.}
\end{thm}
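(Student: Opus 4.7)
The plan is to follow the standard simulation-based paradigm and prove security via sequential composition. I would first argue that each tailored component ($\mathsf{SecSplit}$, $\mathsf{SecPerm}$, $\mathsf{SecDisc}$, $\mathsf{SecSigmoid}$ including $\mathsf{SDiv}$, $\mathsf{SecArgmin}$, $\mathsf{SecTable}$, $\mathsf{SecInfer}$) securely realizes its corresponding sub-functionality, and then glue the components together via the standard composition theorem \cite{lindell2017simulate} to conclude security of the full protocol $\Pi_{\main}$. Since any $\tau\leq n-1$ subset of participants may be corrupted, at least one honest participant remains, which is exactly the setting where $n$-out-of-$n$ additive secret sharing over $\mathbb{Z}_{2^{Q}}$ perfectly hides the underlying secret from the adversary's view.

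For the building blocks whose inputs and outputs are all in secret-shared form (secure addition/subtraction, secure multiplication via precomputed Beaver triples, secure bit-decomposition and PPA-based comparison used inside $\mathsf{SecArgmin}$), I would invoke the well-established simulators from prior work \cite{demmler2015aby,mohassel2018aby3,crypten}. The simulator for the corrupted coalition draws uniformly random elements of $\mathbb{Z}_{2^{Q}}$ for every share it should receive and for every reconstructed masked value ($e=x-a$, $f=y-b$ in Beaver multiplication), which is indistinguishable from the real transcript because one uncorrupted share of each triple is uniformly random and uniquely re-randomizes the opened mask. $\mathsf{SecSplit}$ reduces immediately to secure multiplications on the indicator vectors produced by the sharer $P_l$; since $\mathbf{v}_l, \mathbf{v}_r$ are sent in freshly generated additive shares, the simulator for any coalition not containing $P_l$ samples these shares uniformly, and when $P_l$ is corrupted its view contains only values it produced itself.

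The most delicate component is $\mathsf{SecPerm}$, which is the only place where something beyond secret shares is revealed, namely the helper permutation $\pi_s$ broadcast by $P_l$ and the opened value $\mathbf{x}-\mathbf{r}$. I would handle it with a two-case argument: if $P_l$ is honest, the simulator for the corrupted parties samples $\pi_s$ uniformly from the set of permutations and samples $\mathbf{x}-\mathbf{r}$ uniformly from $\mathbb{Z}_{2^{Q}}^{N}$; this is perfectly indistinguishable because (i) the preprocessed $\pi_p$ held only by $P_l$ is uniformly random and independent of $\pi$, so $\pi_s=\pi\circ\pi_p^{-1}$ is uniform, and (ii) $\mathbf{r}$ is a uniform mask unknown to the coalition. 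If $P_l$ is corrupted, the simulator uses $\pi$ (which the ideal-functionality adversary is entitled to via $P_l$'s input) and simulates the honest shares of $\pi_p(\mathbf{r})$ and $\mathbf{r}$ as uniformly random ring elements. Chaining this with the triple-based multiplications then yields a simulator for $\mathsf{SecDisc}$, and the same randomness argument covers the iterative reciprocal and $(1+x/2^{n})^{2^{n}}$ chains inside $\mathsf{SecSigmoid}$.

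With the components simulated, the simulator for $\mathsf{SecTable}$ composes them level by level; the only values leaving secret-shared form during training are exactly those that the ideal functionality $\mathcal{F}_{\mathrm{VDT}}$ already reveals: the public split feature $F_d$ output by $\mathsf{SecArgmin}$ on $\llbracket\boldsymbol{\sigma}\rrbracket$, the bucket index $q_d$, and, to the single participant owning $F_d$, the local threshold $t_d$ looked up via $\pi_{F_d}$. The simulator for $\mathsf{SecInfer}$ is symmetric to that for $\mathsf{SecSplit}$: the leaf indicator $\mathbf{u}_d$ is secret-shared by its owner, and the final product is an $\llbracket\cdot\rrbracket$-valued quantity whose shares are uniform. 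Invoking the sequential composition theorem on the $T$ boosting rounds then yields a PPT simulator $\mathsf{Sim}^{\Pi}_{P_m}$ whose output is (perfectly or statistically, up to the negligible fixed-point approximation error inherent in $\mathsf{SecSigmoid}$ and $\mathsf{SDiv}$) indistinguishable from $\mathsf{View}^{\Pi}_{P_m}$ for every coalition of size $\tau\leq n-1$, establishing the theorem. I expect the hardest part to be formally arguing the $\mathsf{SecPerm}$ case and the corresponding $\mathsf{SecDisc}$ hybrid, since it is the only non-share-only leakage and the joint distribution $(\pi_s,\mathbf{x}-\mathbf{r},\{\langle\pi_p(\mathbf{r})\rangle_m\})$ must be shown jointly indistinguishable from uniform conditioned on the adversary's view.
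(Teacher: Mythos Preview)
Your proposal is correct and matches the paper's approach: establish simulators for each sub-protocol (dispatching the purely share-based ones---$\mathsf{SDiv}$, $\mathsf{SecSigmoid}$, $\mathsf{SecArgmin}$, and the multiplications inside $\mathsf{SecSplit}$/$\mathsf{SecInfer}$---to standard secret-sharing security, and giving a dedicated two-case analysis for $\mathsf{SecPerm}$/$\mathsf{SecDisc}$), then invoke composition. One small refinement: in the paper's $\mathsf{SecPerm}$ the value $\mathbf{x}-\mathbf{r}$ is reconstructed only at $P_l$ (Algorithm~\ref{alg:SecPerm}, line~\ref{3-line7}), so when $P_l$ is honest the corrupted coalition's view contains only $\pi_s$ and their own shares, and your simulator need not sample $\mathbf{x}-\mathbf{r}$ in that branch.
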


\begin{proof}
	\revise{
		If the simulator for each sub-protocol exists, then our complete protocol is secure \cite{canetti2000security,katz2005handling,curran2019procsa}. 
		As presented before, {\main} consists of several secure sub-protocols: 1) secure division $\mathsf{SDiv}$; 2) secure Sigmoid $\mathsf{SecSigmoid}$; 3) secure node splitting $\mathsf{SecSplit}$; 4) secure distributed decision table inference $\mathsf{SecInfer}$;  5) secure discretization $\mathsf{SecDisc}$; 6) secure Argmin $\mathsf{SecArgmin}$.
		We use $\mathsf{Sim}^{\mathsf{X}}_{P_m}$ as the
		simulator which can generate $P_m$'s view in sub-protocol $\mathsf{X}$ $ (\mathsf{X}\in\{\mathsf{SDiv}, \mathsf{SecSigmoid}, \mathsf{SecSplit}, \mathsf{SecInfer}, \mathsf{SecDisc}, \mathsf{SecArgmin}\})$ on corresponding input and output.
		Obviously, the simulators for $\mathsf{X}\in\{\mathsf{SDiv, SecSigmoid,SecArgmin}\}$ must exist, because they are comprised of basic operations in the secret sharing domain\cite{demmler2015aby}. 
		In the execution of these three protocols, even if a subset of $\tau$ participants is corrupted, the honest participants' private data will not be leaked due to the security guarantee of additive secret sharing\cite{crypten}.
		In addition, when revealing the index of the minimum value in a secret-shared vector to all participants in $\mathsf{SecArgmin}$, the simulator can adjust the shares of the result such that the revealed index is indeed the value received from the ideal functionality, and thus the simulator of $\mathsf{SecArgmin}$ exists. 
		Therefore, {\main} is secure if the simulators for the remaining sub-protocols exist, i.e., $\mathsf{SecSplit}$ in Section \ref{secsplit}, $\mathsf{SecInfer}$ in Section \ref{SecInfer}, $\mathsf{SecDisc}$ in Section \ref{secdisc}. 
		We next provide the existence of the simulators for the remaining sub-protocols.
	}
\end{proof}

\begin{thm}
	\revise{The simulators for sub-protocols $\mathsf{SecSplit}$ and $\mathsf{SecInfer}$ exsit.}
\end{thm}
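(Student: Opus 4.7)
The plan is to construct simulators for $\mathsf{SecSplit}$ and $\mathsf{SecInfer}$ by exploiting the fact that every message received by a corrupted participant in these sub-protocols is either a fresh additive secret share or a masked-opened value from a Beaver triple multiplication. In both cases such messages are uniformly distributed in $\mathbb{Z}_{2^Q}$ and independent of any honest party's private input, so they can be replicated by uniform sampling. Let $\mathcal{C}\subset\{P_1,\ldots,P_n\}$ denote the set of $\tau\leq n-1$ corrupted participants and let $P_h$ be an arbitrarily fixed honest participant whose share will serve as the ``completion'' share that makes the simulated transcript consistent with the output received from the ideal functionality $\mathcal{F}_{\mathrm{VDT}}$.

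For $\mathsf{SecSplit}$, I would distinguish two cases according to whether the splitter $P_l$ (holder of the optimal test $F_d<t_d$) lies in $\mathcal{C}$. When $P_l\notin\mathcal{C}$, the shares $\langle\mathbf{v}_l\rangle_{P_m},\langle\mathbf{v}_r\rangle_{P_m}$ delivered to each corrupted $P_m$ are drawn by $P_l$'s $\mathsf{Sharing}$ routine from the uniform distribution over $\mathbb{Z}_{2^Q}^N$, so the simulator $\mathsf{Sim}^{\mathsf{SecSplit}}_{P_m}$ samples them uniformly. The subsequent Beaver multiplications used to form $\llbracket\mathbf{v}_l\rrbracket\times\llbracket\mathbf{g}^{k,d}\rrbracket$ etc.\ only open $\mathbf{e}=\mathbf{v}_\star-\mathbf{a}$ and $\mathbf{f}=\mathbf{g}^{k,d}-\mathbf{b}$, which are uniformly random because $\mathbf{a},\mathbf{b}$ are fresh uniform masks from independent offline triples; the simulator samples these openings uniformly as well. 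Finally, the simulated output shares of the child-node gradient vectors held by $\mathcal{C}$ are the ones given by $\mathcal{F}_{\mathrm{VDT}}$, and the share of $P_h$ is set implicitly by subtraction so that consistency holds. When $P_l\in\mathcal{C}$, the indicator vectors themselves are part of the adversarial view; the simulator can take them directly from the adversary's input and apply the same Beaver-opening argument for the multiplication messages.

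For $\mathsf{SecInfer}$, the structure is essentially identical: at each level $d$ the test-holder $P_l$ secret-shares the leaf indicator $\mathbf{u}_d$, so the shares seen by $\mathcal{C}$ are uniform in $\mathbb{Z}_{2^Q}^{2^D}$ and simulable by uniform sampling (with the same case split on whether $P_l$ is corrupted, in which case $\mathbf{u}_d$ is obtained from the corrupted party's local input). The chain of element-wise multiplications $\llbracket\mathbf{u}_0\rrbracket\times\cdots\times\llbracket\mathbf{u}_{D-1}\rrbracket\times\llbracket\mathbf{w}\rrbracket$ again only exposes Beaver-masked openings, which the simulator produces by independent uniform sampling. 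The final secret-shared inference result is patched up using the ideal output so that the corrupted shares jointly reconstruct the correct scalar $w$.

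The main subtlety, and the step I expect to require the most care, is arguing joint indistinguishability when several participants are corrupted at once: the simulator must produce a joint transcript, not a per-party transcript, and this transcript must be consistent both across rounds (the same Beaver triples mask different values) and with the output released by the ideal functionality. The standard way to handle this is to observe that in $n$-out-of-$n$ additive secret sharing any $\tau\leq n-1$ shares are jointly uniform and independent of the underlying secret, and that fresh Beaver triples are used for every multiplication so the opened masks are fresh across rounds. Combining these two facts, a straightforward hybrid argument over the protocol rounds lets me switch each real message to a uniformly sampled one while preserving the joint distribution, yielding $\mathsf{Sim}^{\mathsf{SecSplit}}_{P_m}\stackrel{c}{\equiv}\mathsf{View}^{\mathsf{SecSplit}}_{P_m}$ and $\mathsf{Sim}^{\mathsf{SecInfer}}_{P_m}\stackrel{c}{\equiv}\mathsf{View}^{\mathsf{SecInfer}}_{P_m}$, which establishes existence of the simulators and hence the theorem.
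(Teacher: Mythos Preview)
Your proposal is correct and follows essentially the same approach as the paper: both argue that the only messages a corrupted party sees are fresh additive shares of the indicator vectors (uniform by construction of $\mathsf{Sharing}$) together with the transcript of Beaver-triple multiplications, and both split cases according to whether the indicator-generating party $P_l$ is among the corrupted. Your treatment is actually more careful than the paper's, since you spell out the Beaver-opening uniformity, the joint-corruption hybrid, and the output-patching step, whereas the paper simply invokes ``secure multiplication is the basic operation in the secret sharing domain'' and leaves those details implicit.
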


\begin{proof}
	\revise{The sub-protocols $\mathsf{SecSplit}$ and $\mathsf{SecInfer}$ both require that the participant who holds the split generates indicator vectors locally and then secret-shares them to other participants. 
		For simplicity, we assume that $P_l$ holds the split threshold and collaborates with $P_{m\neq l}$ to split the node and conduct inference.
		We then prove the existence of the simulators $\mathsf{Sim}^{\mathsf{SecSplit}}_{P_l}$, $\mathsf{Sim}^{\mathsf{SecInfer}}_{P_l}$ for $P_l$ and the simulators $\mathsf{Sim}^{\mathsf{SecSplit}}_{P_{m\neq l}}$, $\mathsf{Sim}^{\mathsf{SecInfer}}_{P_{m\neq l}}$ for $P_{m\neq l}$ due to their different computation.
		We also note that there is no difference in the simulation between AP and PP due to their role equivalence in the execution of $\mathsf{SecSplit}$ and $\mathsf{SecInfer}$}.

	\begin{itemize}
		
		\item \revise{ $\mathsf{Sim}^{\mathsf{SecSplit}}_{P_l}$, $\mathsf{Sim}^{\mathsf{SecInfer}}_{P_l}$ for $P_{l}$: The simulator is simple since $P_l$ only secret-shares its local indicator vectors and perform secure multiplications on the secret-shared vectors.
			In the execution of $\mathsf{SecSplit}$ and $\mathsf{SecInfer}$, $P_l$ receives nothing. 
			Moreover, since secure multiplication is the basic operation in the secret sharing domain, the privacy of the honest participants' data is ensured even if $P_l$ colludes with $\tau-1$ participants.
			Therefore, it is clear that the simulated view is indistinguishable from the real view.
		}	
		
		\item \revise{ $\mathsf{Sim}^{\mathsf{SecSplit}}_{P_{m\neq l}} $, $\mathsf{Sim}^{\mathsf{SecInfer}}_{P_{m\neq l}}$ for $P_{m\neq l}$: In the execution of $\mathsf{SecSplit}$ and $\mathsf{SecInfer}$, the only information $P_{m\neq l}$ receives is the secret share (denoted by $\langle\mathbf{r}\rangle_{m\neq l}$) of $P_l$'s indicator vector $\mathbf{r}$. The secret share $\langle\mathbf{r}\rangle_{m\neq l}$ is randomly generated at $P_l$ and thus is uniformly random in $P_{m\neq l}$'s view. Therefore, the distribution over the real secret share $\langle\mathbf{r}\rangle_{m\neq l}$ received by $P_{m\neq l}$ in the execution and over the simulated $\langle\mathbf{r}\rangle_{m\neq l}$ generated by the simulator are identically distributed. 
			Furthermore, the secure multiplication involved is the basic operation in the secret sharing domain, which means $P_{m\neq l}$ learns no additional information from the execution, even if $P_{m\neq l}$ colludes with $\tau-1$ participants.
			Therefore, the simulated view is indistinguishable from the real view.
		}
	\end{itemize}
	
\end{proof}

\begin{thm}
	\revise{
		The simulator for the sub-protocol $\mathsf{SecDisc}$ exists.  
	}
\end{thm}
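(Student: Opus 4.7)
The plan is to reduce the security of $\mathsf{SecDisc}$ to that of its only non-trivial building block, the secure permutation sub-routine $\mathsf{SecPerm}$ (Algorithm \ref{alg:SecPerm}). Observe that after two invocations of $\mathsf{SecPerm}$ on $\llbracket\mathbf{g}\rrbracket$ and $\llbracket\mathbf{h}\rrbracket$, all remaining steps of $\mathsf{SecDisc}$ (lines \ref{disc-line6}--\ref{disc-line9} in Algorithm \ref{alg::secdisc}) are purely local additions of secret shares, which induce no interaction and therefore no information leakage. Hence, by the sequential composition theorem for semi-honest MPC \cite{canetti2000security}, it suffices to construct simulators $\mathsf{Sim}^{\mathsf{SecPerm}}_{P_l}$ and $\mathsf{Sim}^{\mathsf{SecPerm}}_{P_{m\neq l}}$ that produce views indistinguishable from those in the real execution of $\mathsf{SecPerm}$, given only the input/output of the corrupted parties (as supplied by the ideal functionality).

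For $P_l$ (who holds the permutation $\pi$), the only protocol message received during the online phase is the reconstructed value $\mathbf{x}-\mathbf{r}$ in Round~2 (line \ref{3-line7}). Since $\mathbf{r}\in\mathbb{Z}_{2^{Q}}$ is uniformly random and independent of $\mathbf{x}$ under the correlated-randomness assumption from the offline phase (line \ref{3-line2}), the masked value $\mathbf{x}-\mathbf{r}$ is uniformly distributed in $\mathbb{Z}_{2^{Q}}^{N}$ from $P_l$'s perspective. The simulator $\mathsf{Sim}^{\mathsf{SecPerm}}_{P_l}$ therefore samples a uniformly random vector in $\mathbb{Z}_{2^{Q}}^{N}$ and presents it as the Round~2 reconstruction; the share of the output that $P_l$ should hold can then be programmed by subtracting the simulated partial shares from the output delivered by the ideal functionality. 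Even in the presence of up to $\tau-1$ additional corrupted participants colluding with $P_l$, the honest participants' shares of $\mathbf{r}$ and $\pi_p(\mathbf{r})$ remain uniformly random under additive secret sharing, so the joint simulated view is statistically identical to the real view.

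For any $P_{m\neq l}$, the only online message received is the permutation $\pi_s$ sent in Round~1 (line \ref{3-line5}), which by construction satisfies $\pi_s=\pi\circ\pi_p^{-1}$. Here the critical structural property of the correlated randomness is that $\pi_p$ is sampled uniformly at random from the symmetric group during the offline phase and is known only to $P_l$; consequently, the distribution of $\pi_s=\pi\circ\pi_p^{-1}$ is uniform over the symmetric group regardless of the underlying $\pi$. The simulator $\mathsf{Sim}^{\mathsf{SecPerm}}_{P_{m\neq l}}$ therefore samples a uniformly random permutation and sends it in place of $\pi_s$; the local share updates in lines \ref{3-line8}--\ref{3-line9} involve only locally held shares, which can be programmed to match the ideal-functionality output as in the previous case. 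Collusion among up to $\tau\leq n-1$ PPs does not help the adversary, because even with $n-1$ shares of $\pi_p(\mathbf{r})$ and $\mathbf{r}$, the remaining honest share keeps both values information-theoretically hidden.

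The main obstacle in carrying out the above plan rigorously is justifying that $\pi_s$ is a uniformly random permutation in the view of $P_{m\neq l}$, which rests on modelling the offline preprocessing as an ideal functionality that samples $\pi_p$ uniformly and distributes $\llbracket\mathbf{r}\rrbracket$ and $\llbracket\pi_p(\mathbf{r})\rrbracket$ accordingly; this preprocessing assumption is inherited from the correlated-randomness scheme of \cite{fang2021large} and is consistent with the treatment of multiplication triples elsewhere in {\main}. Once this is in place, indistinguishability of the simulated and real views follows from the uniform randomness of additive shares and of the mask $\mathbf{r}$, completing the argument.
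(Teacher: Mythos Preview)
Your proposal is correct and follows essentially the same approach as the paper: reduce $\mathsf{SecDisc}$ to $\mathsf{SecPerm}$ (the rest being local share arithmetic), then simulate separately for $P_l$ and for $P_{m\neq l}$ by arguing that the only received messages ($\mathbf{x}-\mathbf{r}$ for $P_l$, $\pi_s$ for $P_{m\neq l}$) are uniformly distributed. Your justification that $\pi_s=\pi\circ\pi_p^{-1}$ is uniform because $\pi_p$ is a fresh uniform permutation hidden from $P_{m\neq l}$ is in fact more explicit than the paper's; the only minor point you leave implicit, which the paper spells out, is the collusion sub-case where $P_l$ itself is among the corrupted set when analyzing the view of $P_{m\neq l}$ (there the privacy of $\pi$ is moot, and only hiding of $\mathbf{x}$ matters, which your $P_l$-side analysis already covers).
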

\begin{proof}
	\revise{
		The security of $\mathsf{SecDisc}$ relies on the security of its sub-protocol $\mathsf{SecPerm}$ because the operations in $\mathsf{SecDisc}$ besides $\mathsf{SecPerm}$ are basic operations in the secret sharing domain. 
		Therefore, if the protocol $\mathsf{SecPerm}$ can be simulated, the existence of a simulator for the protocol $\mathsf{SecDisc}$ follows. 
		In $\mathsf{SecPerm}$, for simplicity, we assume that $P_l$ sorts the values for a feature locally to generate a  permutation $\pi$. 
		We then prove the existence of the simulator $\mathsf{Sim}^{\mathsf{SecDisc}}_{P_l}$ for $P_l$ and the simulator $\mathsf{Sim}^{\mathsf{SecDisc}}_{P_{m\neq l}}$ for $P_{m\neq l}$ due to their different computation. Similarly, since the role equivalence of AP and PP in the execution of $\mathsf{SecDisc}$, there is no difference in the analysis for them.
	}
	\begin{itemize}
		\item \revise{ $\mathsf{Sim}^{\mathsf{SecDisc}}_{P_l}$ for $P_l$: 
			In the execution of $\mathsf{SecDisc}$, ${P_l}$ only receives secret share $\langle\mathbf{x-r}\rangle_{m\neq l}$ of $\llbracket\mathbf{x-r}\rrbracket$ from ${P_{m\neq l}}$ to reconstruct $\mathbf{x-r}$. In the simulated view, $P_l$ receives $n-1$ random vectors. Therefore, we need to prove that $\langle\mathbf{x-r}\rangle_{m\neq l}$ is uniformly random in the view of $P_l$. Obviously, the above claim is valid, because $\langle\mathbf{x-r}\rangle_{m\neq l}$ is a random vector generated at ${P_{m\neq l}}$, which must be uniformly random in the view of $P_l$. Therefore the distributions over the real $\langle\mathbf{x-r}\rangle_{m\neq l}$ received by $P_l$ in the protocol execution and over the simulated $\langle\mathbf{x-r}\rangle_{m\neq l}$ are identically distributed. Moreover, even if $P_l$ colludes with $\tau-1$ participants, $P_l$ still only learns a randomly masked version of $\mathbf{x}$, thus $\mathbf{x}$ would not be leaked to $P_l$. 
			Thus, the simulator for $\mathsf{Sim}^{\mathsf{SecPerm}}_{P_{l}}$ exists, which indicates that the simulator for $\mathsf{Sim}^{\mathsf{SecDisc}}_{P_{l}}$ also exists.
		}

		\item \revise{ $\mathsf{Sim}^{\mathsf{SecDisc}}_{P_{m\neq l}}$ for $P_{m\neq l}$: In the execution of $\mathsf{SecDisc}$, ${P_{m\neq l}}$ only receives permutation $\pi_s$ from $P_l$. The permutation $\pi_s$ is randomly generated at $P_l$ and is uniformly random in $P_{m\neq l}$'s view. Therefore, the distribution over the real permutation $\pi_s$ received by $P_{m\neq l}$ in the execution and over the simulated $\pi_s$ generated by the simulator is identically distributed. 
			In case of corruption, there are two situations: 1) If participant $P_l$ is corrupted, nothing is revealed regarding the vector $\mathbf{x}$.
			2) If participant $P_l$ is not corrupted, the private permutation $\pi$ on $P_l$ and the private vector $\mathbf{x}$ are protected, even if all other participants collude. This security guarantee comes from the fact that the corrupted participants could learn nothing about $\pi_p$. 
                Without the knowledge of $\pi_p$, $\pi$ cannot be deduced because $\pi(\cdot)=\pi_s[\pi_p(\cdot)]$.
			Thus, the simulator for $\mathsf{Sim}^{\mathsf{SecPerm}}_{P_{m\neq l}}$ exists, which indicates that the simulator for $\mathsf{Sim}^{\mathsf{SecDisc}}_{P_{m\neq l}}$ also exists.}
	\end{itemize}

\end{proof}

\section{Experiments}

\label{sec:experiments}
\begin{table}[!t]
	\centering
	\caption{\revise{Statistics of Datasets}}
	\label{tab:statistics}
	\setlength{\tabcolsep}{1.5mm}{
		\begin{tabular}{lcccc}
			\toprule 
			Dataset Type & Name & Samples & Features & Task \\
			\midrule
			Real-world &
			Cal Housing\tablefootnote{\url{https://scikit-learn.org/stable/modules/generated/sklearn.datasets.fetch_california_housing.html}} & 20,640 & 8 &Regression \\
			&Credit\tablefootnote{\url{https://www.kaggle.com/datasets/uciml/default-of-credit-card-clients-dataset}} & 30,000 & 23 &Classification \\
			&Breast Cancer\tablefootnote{\url{https://www.kaggle.com/datasets/uciml/breast-cancer-wisconsin-data}} & 570 & 30 &Classification \\
			\midrule
			Synthetic &
			Syn$^A$ & 20,000 & 80 &Regression \\
			&Syn$^B$ & 20,000 & 60 &Regression \\
			&Syn$^C$ & 20,000 & 40 &Classification \\
			\bottomrule
			
	\end{tabular}}
	
\end{table}

\subsection{Setup}
\revise{
We implement our protocols in Python. All experiments are performed on a workstation with 16 Intel I7- 10700K cores, 64GB RAM, and 1TB SSD external storage, running Ubuntu 20.04.2 LTS. It is worth mentioning that the practice of evaluating VFL algorithms on a single machine also exists in prior works \cite{cheng2021secureboost,tist,fang2021large,Opboost}. 
We also note that in practice, it is not common to have VFL scenarios with more than four participants since it could be hard to bring together many enterprises \cite{fu2021vf2boost,jin2021cafe}. Therefore, in our experiments, we follow prior works\cite{tist,jin2021cafe,fu2021vf2boost,feverless} to conduct experiments with four participants. 
The communication between  participants on the workstation is emulated by the loopback filesystem, where the delay is set to 5 ms and bandwidth is set to 100 Mbps. 
}

\noindent$\textbf{Datasets.}$ \revise{We use three real-world datasets to evaluate the accuracy and efficiency of {\main} and three synthetic datasets to further evaluate the scalability of {\main}. The synthetic datasets are generated with \textit{sklearn}\footnote{\url{https://scikit-learn.org/stable/}} library. Table \ref{tab:statistics} summarizes the statistics of the six datasets.
}
We divide each dataset into two parts for training and testing respectively according to the ratio of 8:2.	In addition to the preprocessing in the above, each dataset is split and distributed to all participants vertically and evenly. Similar to previous works on VFL \cite{tian2020federboost, fang2021large}, we assume that the records in each participant's database have been properly aligned beforehand. 

\noindent$\textbf{Protocol instantiation.}$ Our protocols are instantiated using the following parameter settings. We use the ring $\mathbb{Z}_{2^{64}}$ with the number of precision bits $l=20$. The number of iterations for the reciprocal approximation is set to 20 (with $1/Y=1/2^{20}$ as the initialization). For approximating the exponential function, we set $2^n=4$, thus the approximation requires $\log 2^n = 2$ rounds of multiplication.
The hyper-parameters are public to all participants. We fix $\lambda=1$ of Eq. \ref{eq:imp} and vary the public parameter $T$ (the number of decision tables), $B$ (the number of buckets), and $D$ (the dimension of decision tables) in our experiments to demonstrate the utility, efficiency, and scalability of {\main}.

\revise{
	It is noted that to handle real numbers for secure computation, we follow the common practice of fixed-point representation, where real numbers are scaled by a factor of $2^l$ ($l$ represents the number of precision bits) and then rounded. 
	As a result, when two scaled values are multiplied, the result is under a scaling factor of $2^{2l}$. 
	Therefore, truncation is required to scale down the multiplication result, making its scaling factor $2^l$ again.
	{\main} resorts to the method in \cite{crypten} to support secure truncation on a result $\llbracket z \rrbracket$ produced from secret-shared multiplication, which works as follows.
	Firstly, the secret sharing of the number of wraps (denoted by $\llbracket \theta_z \rrbracket$) in $z$ needs to be computed, which is subject to $ \theta_z= (\sum^n_{m=1}\langle z \rangle_m-z)/2^Q$.  
	It is noted that to count the number of wrap rounds in this truncation protocol, computing the sum of shares (represented in the form of $\sum^n_{m=1}\langle z \rangle_m$) does not involve modular arithmetic. 
	The computation of $\llbracket \theta_z \rrbracket$ proceeds as follows. 
	All parties hold in advance a secret-shared random value $\llbracket r \rrbracket$ and its secret-shared wrap count $\llbracket \theta_r \rrbracket$ subject to $\theta_r=(\sum^n_{m=1}\langle r \rangle_m- r)/2^Q$.  
	For secure truncation, all parties first compute $\llbracket p \rrbracket=\llbracket z+r \rrbracket$. 
	After that, each party $P_m$ computes the differential wraps produced between its shares of $\llbracket z \rrbracket$, $\llbracket r \rrbracket$, and $\llbracket p \rrbracket$: $\langle\beta_{zr}\rangle_m=(\langle z \rangle_m+\langle r \rangle_m-\langle p \rangle_m)/2^Q$, where no modulo operation is required in calculating $\langle z \rangle_m+\langle r \rangle_m-\langle p \rangle_m$.
	Next, $P_{m\neq1}$ sends $\langle p \rangle_{m\neq1}$ to $P_1$ to reconstruct $p$  and $P_1$ also computes $\theta_p=(\sum_{m=1}^{n}\langle p \rangle_m-p)/2^Q$. 
	Finally, each $P_m$ produces the secret share $\langle\theta_z\rangle _m = j\times\theta_p + \langle\beta_{zr}\rangle_m-\langle\theta_{r}\rangle_m$, where $j=1$ if $m=1$ and $j=0$ if $m\neq1$. 
	Then $\llbracket \theta_z \rrbracket$ is used to correct the truncation: $\llbracket z \rrbracket = \frac{\llbracket z \rrbracket - \llbracket \theta_z \rrbracket 2^Q}{2^l}$.
	The above method only needs 1 online communication round to compute the number of wraps $\llbracket \theta_z \rrbracket$, making it efficient and practical for use in {\main}. 
}

\begin{table}[!t]
	\centering
	\caption{Accuracy Comparison}
	\label{tab:acc}
	\begin{tabular}{lccc}
		\toprule
		{Dataset} & {Method} & {RMSE} & {ACC/AUC} \\
		\midrule
		Cal Housing & Plaintext & 0.51 & - \\
		& {\main} & 0.51 & - \\
		\hline
		Credit & Plaintext & - & 80.3\%/0.7438 \\
		& {\main} & - & 80.3\%/0.7433 \\
		\hline
		Breast Cancer & Plaintext & - & 96.5\%/0.999 \\
		& {\main} & - &  96.5\%/0.998 \\
		\bottomrule
	\end{tabular}
\end{table}

\begin{table}[!t]
	
	\centering
	\caption{{\main}'s Computation and Communication Performance}
	
	\setlength{\tabcolsep}{5mm}{
		\begin{tabular}{cccc}
			
			\toprule Dataset & \multicolumn{2}{c}{ Online Secure Training } \\
			\hline 
			& { Time (seconds) } & { Comm. (GB) } \\
			\hline
			Cal Housing & 6333  & $41.1$  \\
			Credit & 1724  & $11.8$  \\
			Breast Cancer & 486 & $0.54$  \\
			\bottomrule
			
	\end{tabular}}
	\label{tab:efficiency}
\end{table}

\subsection{Utility Evaluation}
\label{subsec:utility_evaluation}

We first compare the accuracy of two approaches: our {\main} and plaintext centralized learning of gradient boosted decision tables. 
%
%
For the Cal Housing dataset, we set $T=50$ to build $50$ decision tables in the ensemble model and $5$ to limit the dimension of each decision table. The number of buckets $B$ is set as $32$. For the Credit dataset, we set $T=10$, $D=4$ and $B=32$. For the Breast Cancer dataset, we set $T=10$, $D=3$ and $B=32$.
Following other works on gradient boosting \cite{fang2021large}, more decision tables are trained for regression tasks to guarantee accuracy.
We use the identical parameters in {\main} and plaintext.
For the regression tasks, we use the Root Mean Square Error (RMSE) as the evaluation metric. 
For the evaluation of classification tasks, we report the results using two commonly used metrics: Accuracy (ACC) and Area Under the ROC Curve (AUC). 
The accuracy of {\main} and plaintext on both regression and classification tasks are reported in Table \ref{tab:acc}.

Fig. \ref{fig:curve} shows the RMSE/test error on the three real-world datasets, for varying number of decision tables. 
Note that the test error is defined as the complement of the Accuracy (i.e., $1-$ACC). 
It is observed that the difference in utility between {\main} and plaintext is obvious at the very beginning, but the difference rapidly diminishes as the increase of number of decision tables in the ensemble. 
This indicates the similar convergence behavior of {\main} and plaintext.
From the above results, we can conclude that our {\main} achieves compatible accuracy with plaintext centralized learning of gradient boosted decision tables on both classification and regression tasks.

\begin{figure}[!t]
	\centering
	\begin{minipage}[t]{0.3333\linewidth}
		\centering
		\includegraphics[width=\linewidth]{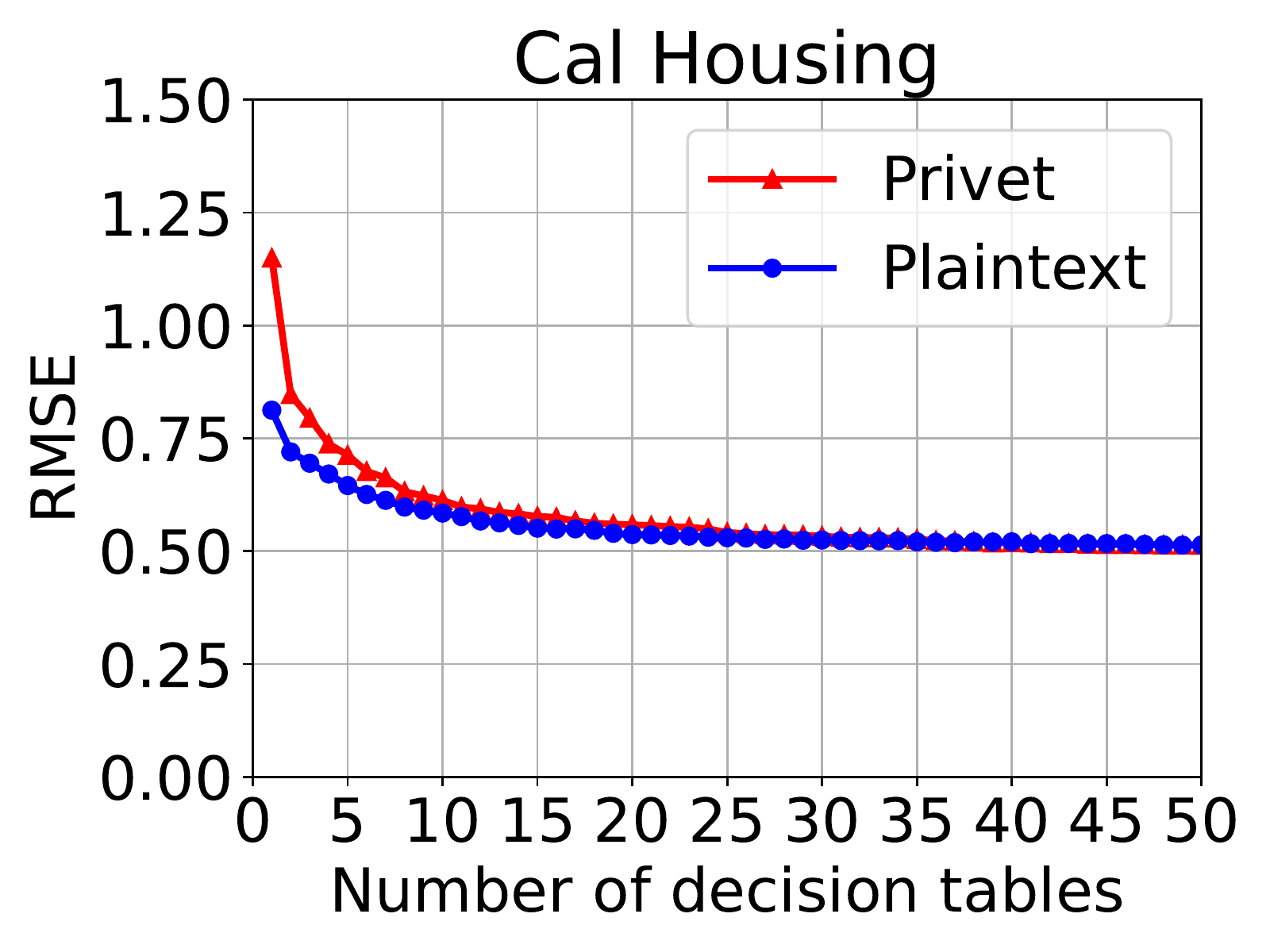}
	\end{minipage}%
	\begin{minipage}[t]{0.3333\linewidth}
		\centering
		\includegraphics[width=\linewidth]{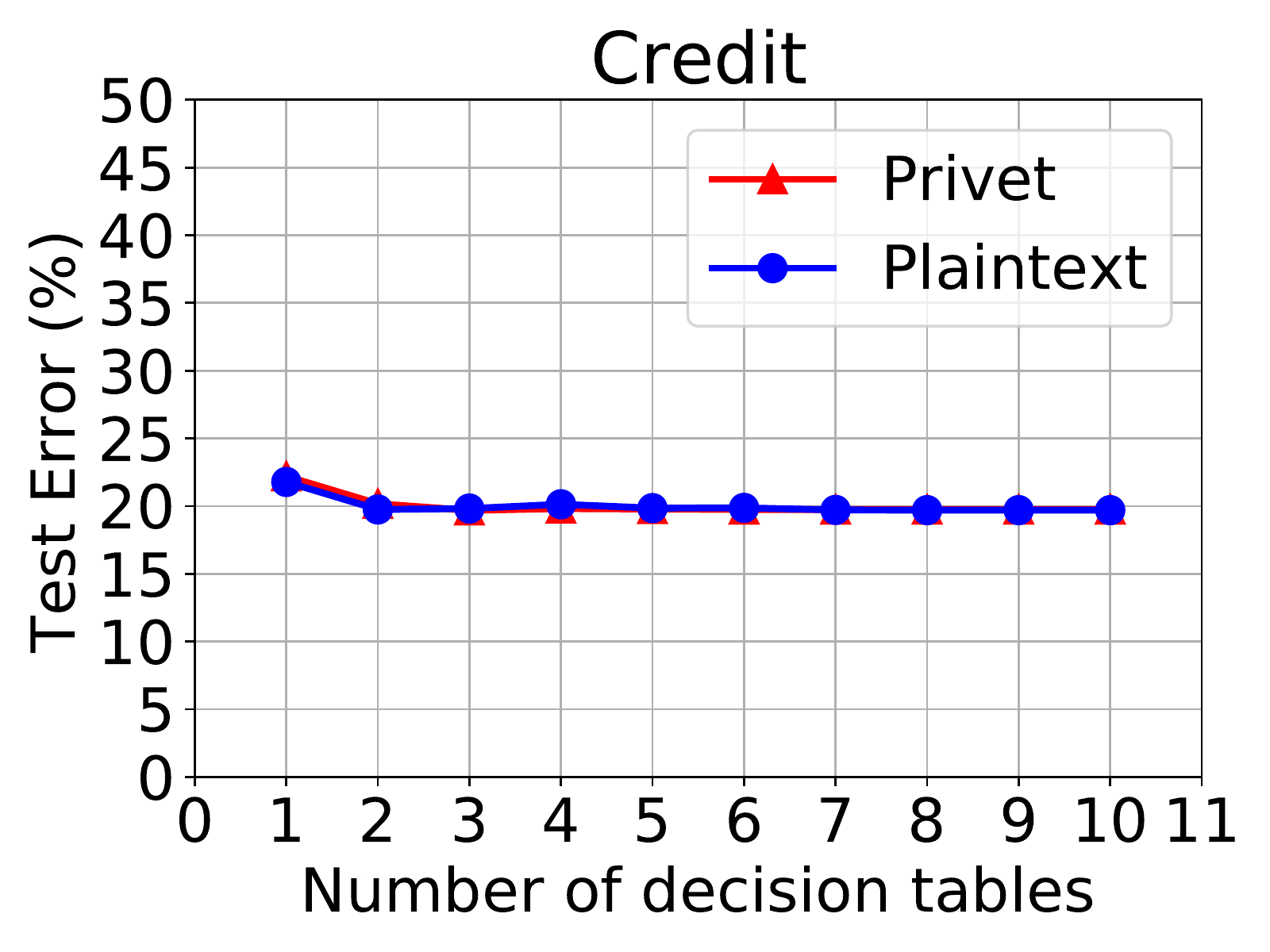}
	\end{minipage}%
	\begin{minipage}[t]{0.3333\linewidth}
		\centering
		\includegraphics[width=\linewidth]{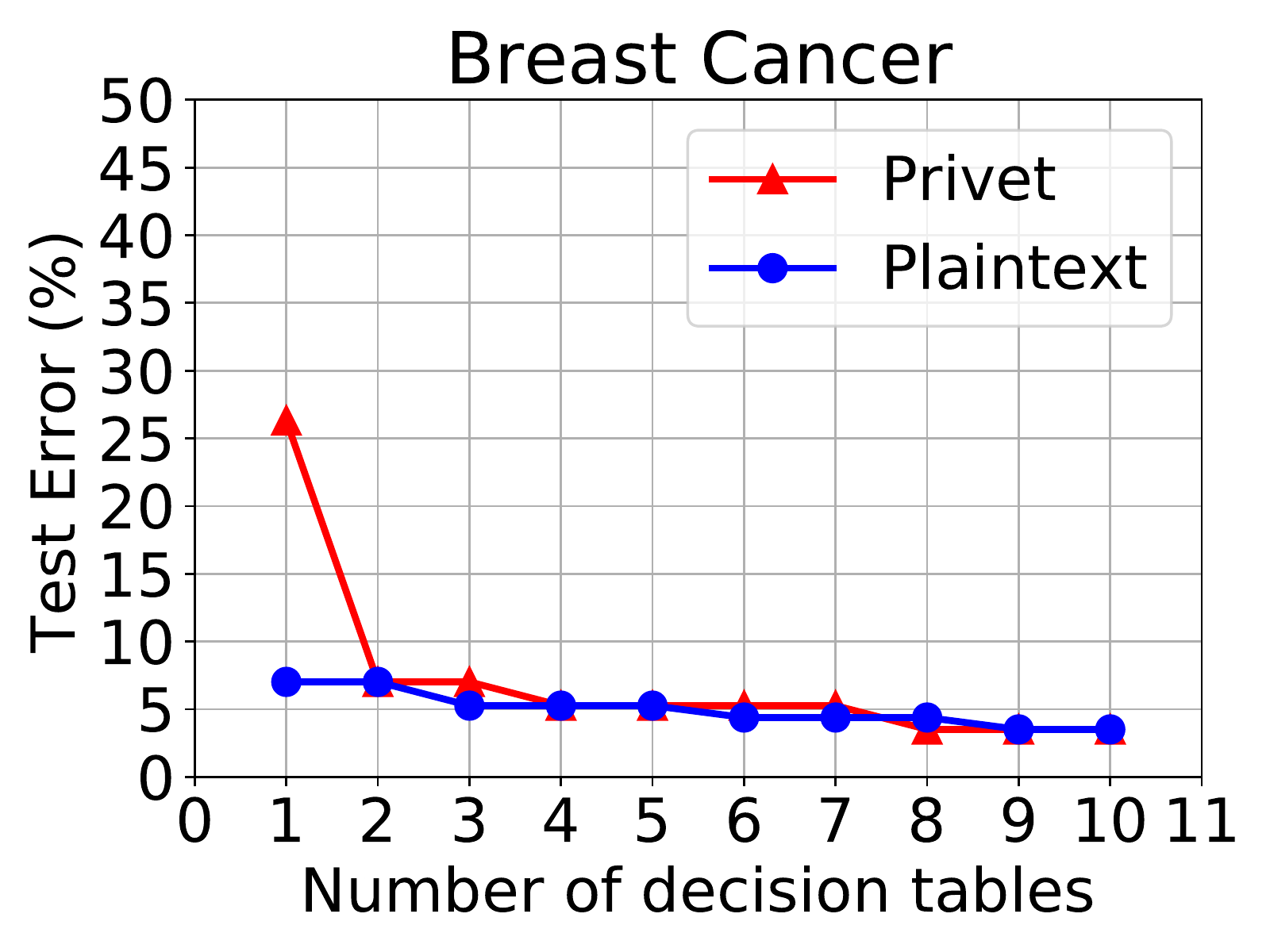}
	\end{minipage}
	\centering
	\vspace{-5pt}
	\caption{RMSE/test error on the three real-world datasets, for different numbers of decision tables.}
	\label{fig:curve}
\end{figure}

\begin{figure}[t!]
\centering
	\begin{minipage}[t]{0.42\linewidth}
		\centering
		\includegraphics[width=\linewidth]{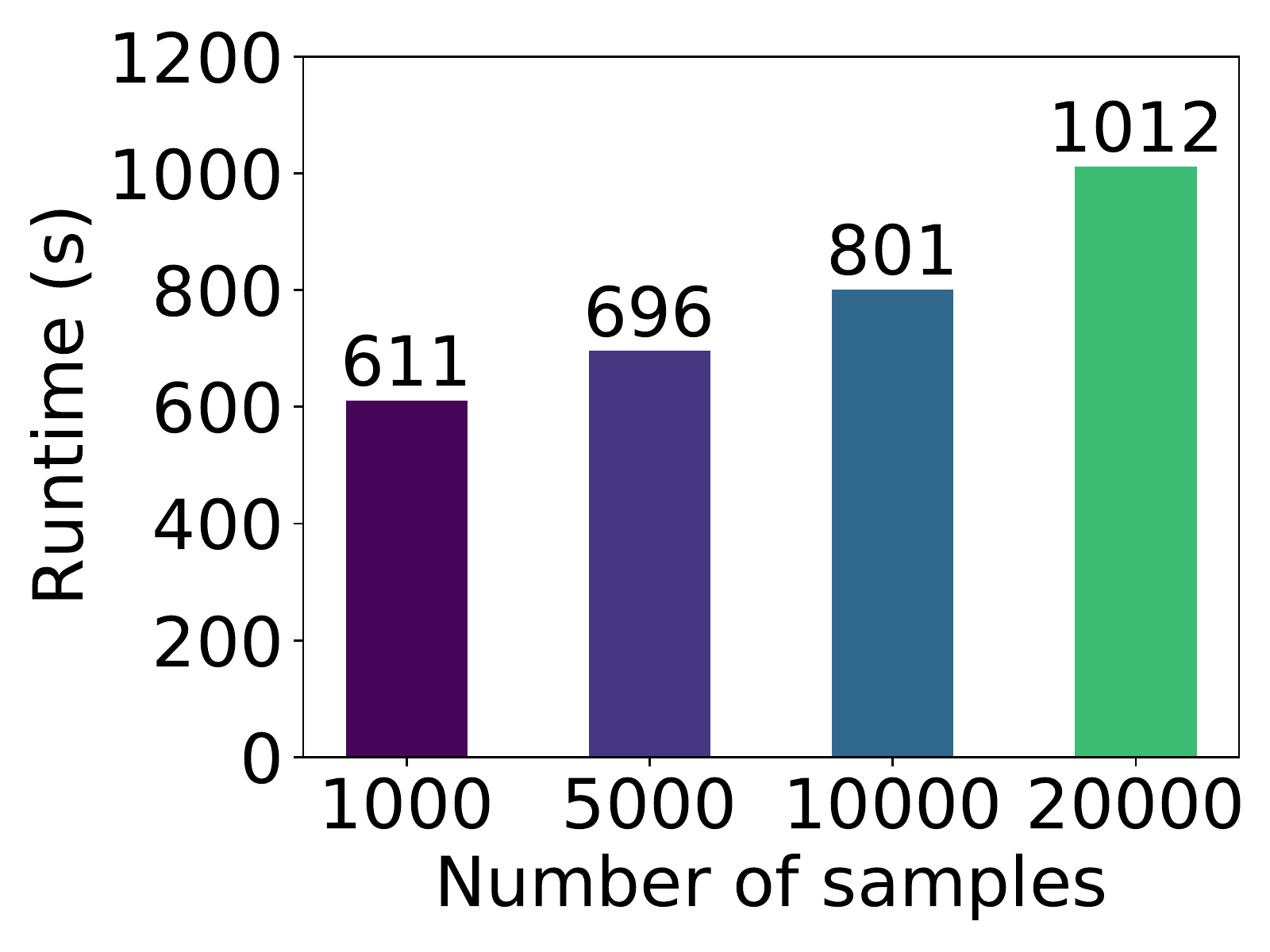}\\(a)
	\end{minipage}
	\begin{minipage}[t]{0.45\linewidth}
		\centering
		\includegraphics[width=\linewidth,height=1.1in]{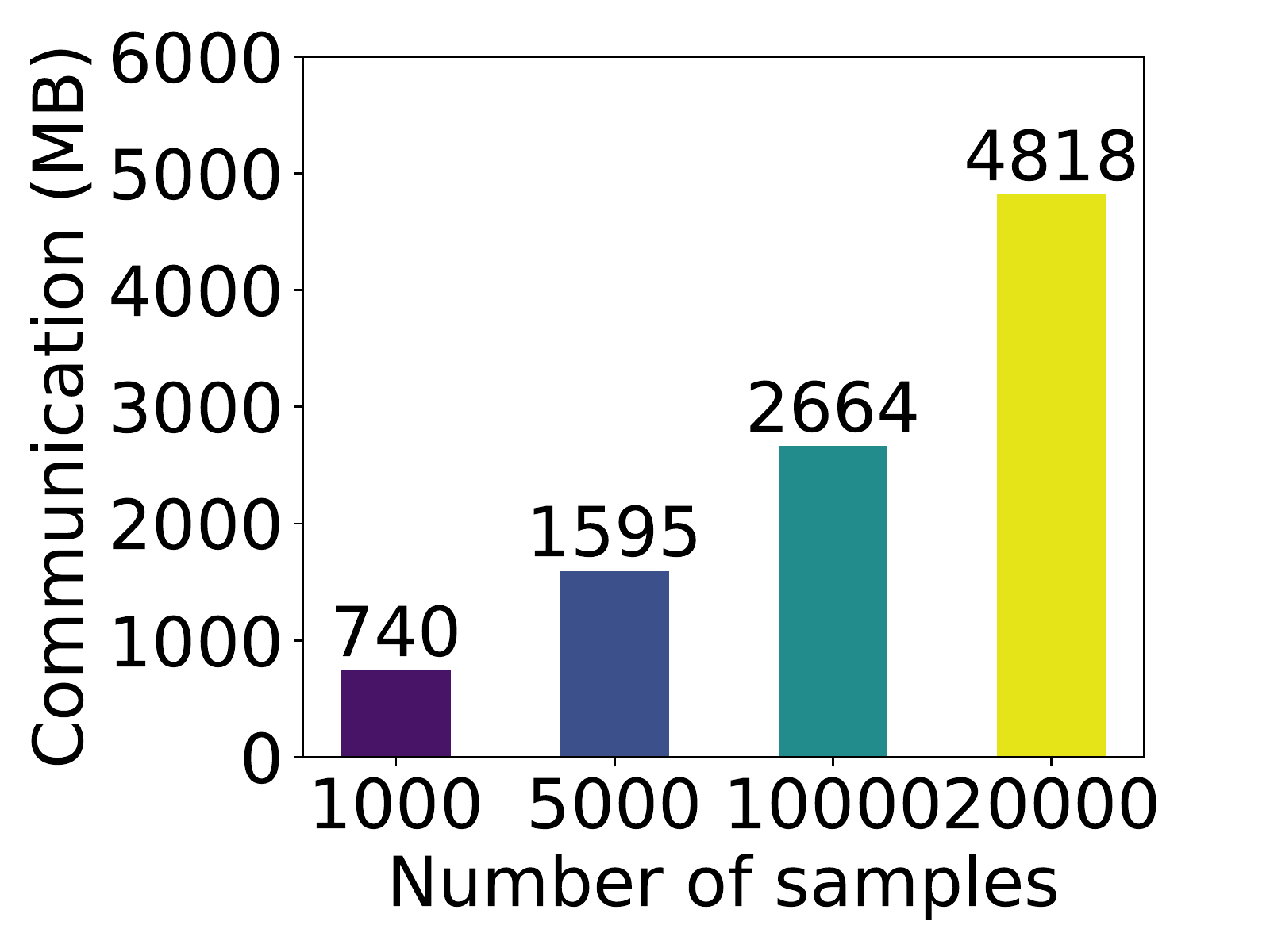}\\(b)
	\end{minipage}
	\caption{\revise{Performance on Syn$^A$ dataset with 80 features for different numbers of training samples (with the dimension of decision tables $D=2$, the number of buckets $B=32$ and the number of decision tables $T=10$): (a) Runtime; (b) Communication.}}
	\label{fig:runtime-commn-varying-sample}
	\vspace{-5pt}
\end{figure}

\begin{figure}[t!]
\centering
	\begin{minipage}[t]{0.42\linewidth}
		\centering
		\includegraphics[width=\linewidth]{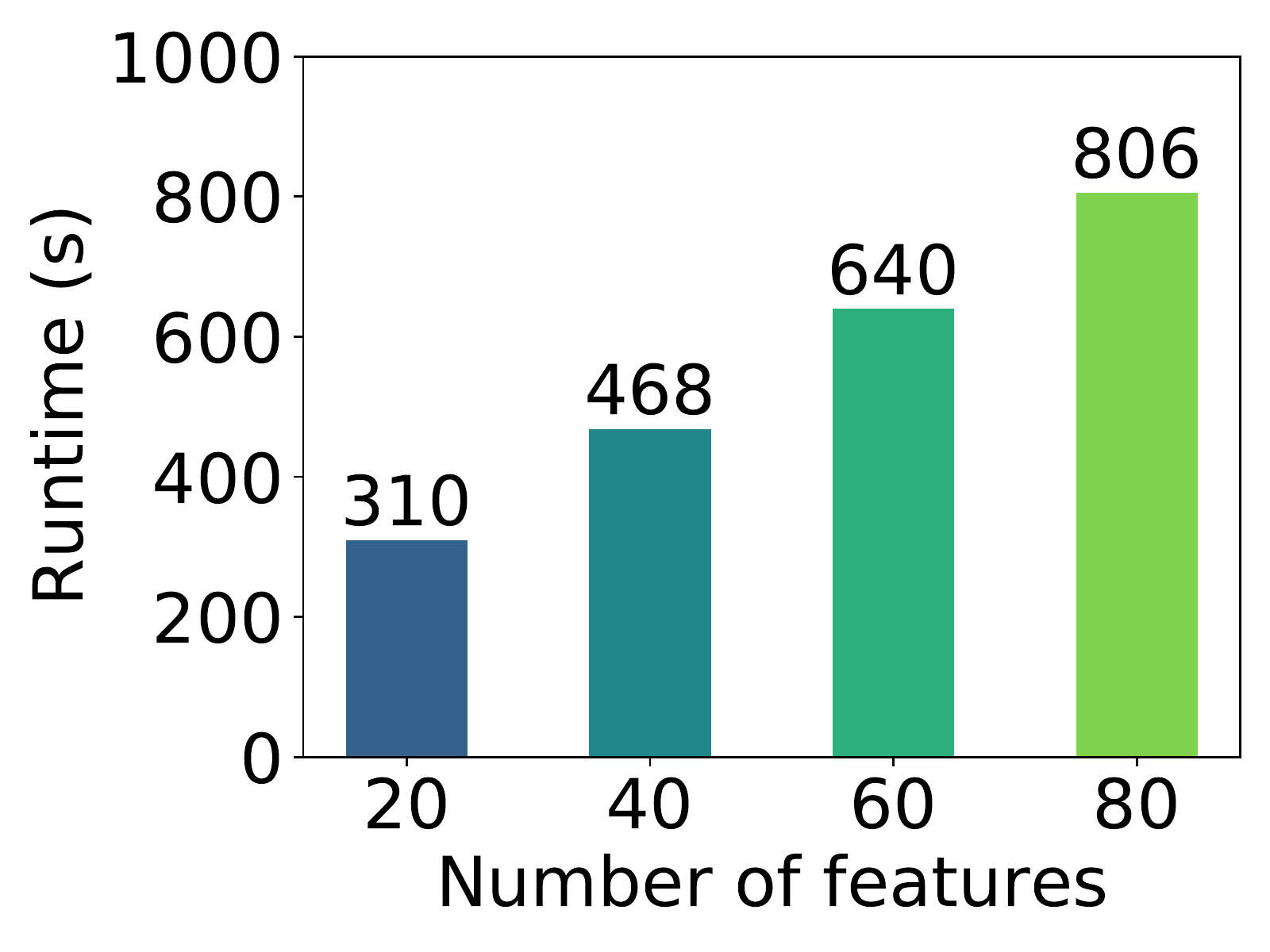}\\(a)
	\end{minipage}
	\begin{minipage}[t]{0.45\linewidth}
		\centering
		\includegraphics[width=\linewidth,height=1.1in]{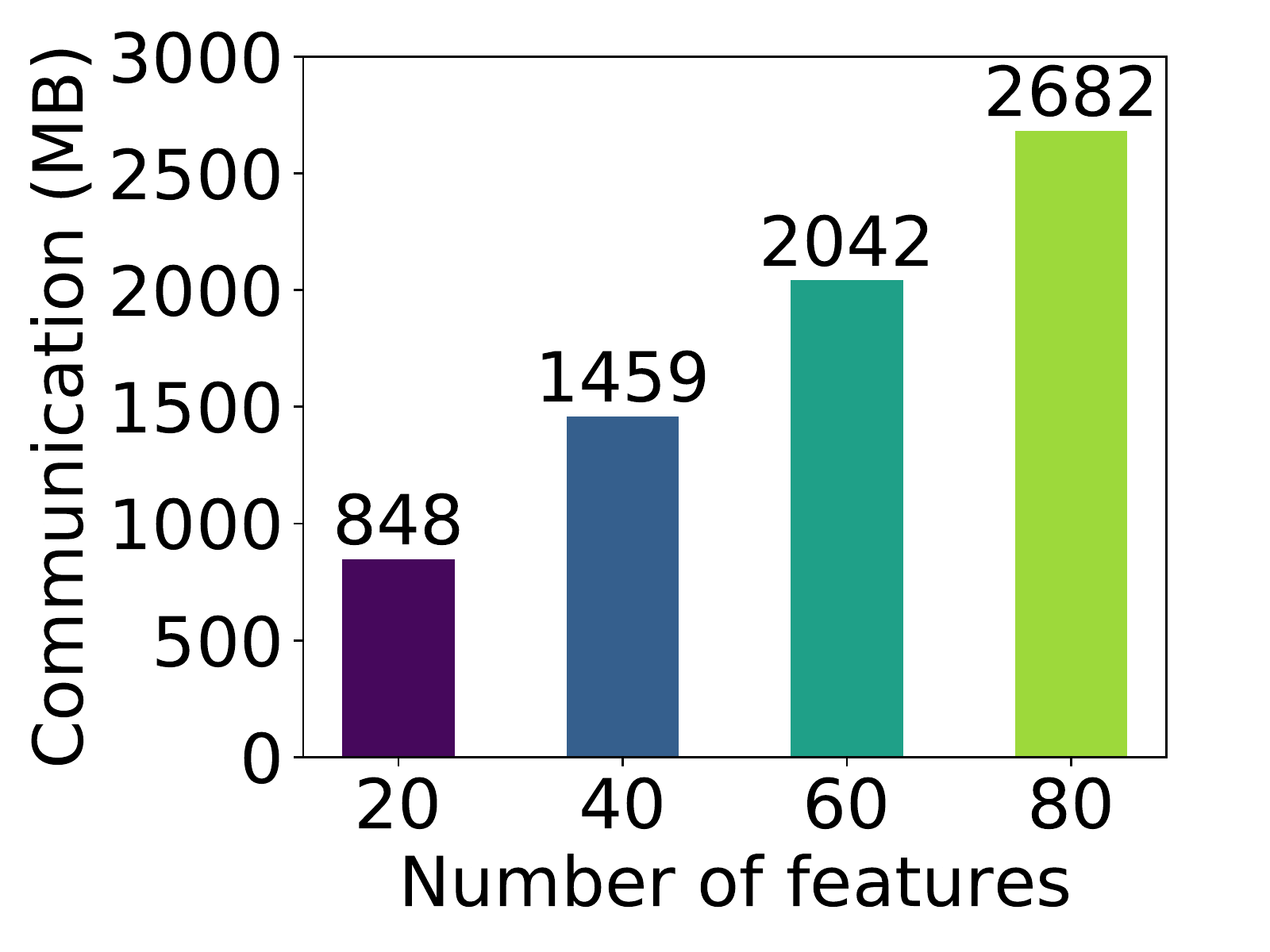}\\(b)
	\end{minipage}
	\caption{\revise{Performance on Syn$^A$ dataset with 10000 samples for different numbers of features (with the dimension of decision tables $D=2$, the number of buckets $B=32$ and the number of decision tables $T=10$): (a) Runtime; (b) Communication.}}
	\label{fig:runtime-commn-varying-feature}
	\vspace{-5pt}
\end{figure}

\begin{figure}[t!]
\centering
	\begin{minipage}[t]{0.32\linewidth}
		\centering
		\includegraphics[width=\linewidth]{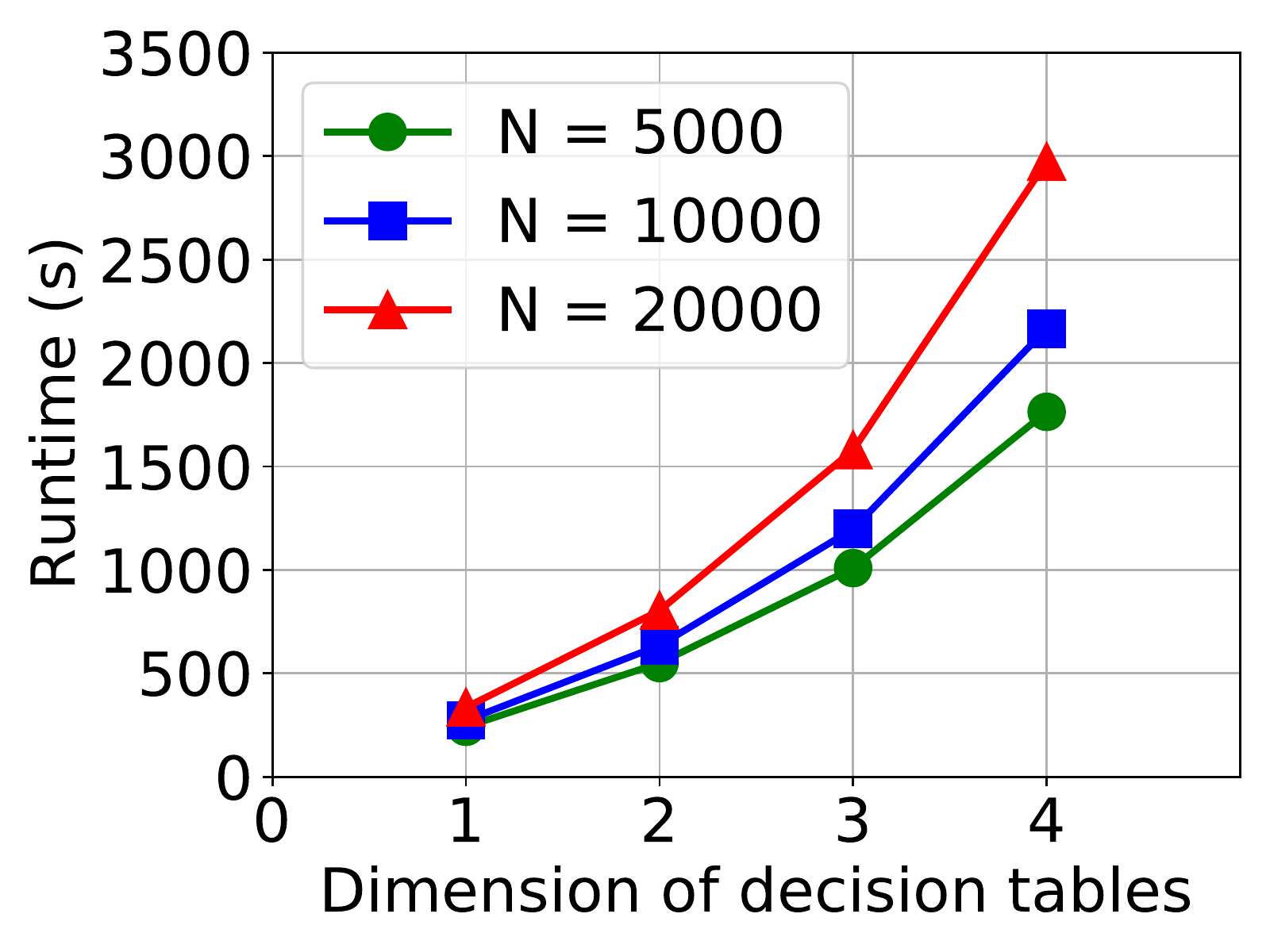}\\(a)
	\end{minipage}
	\begin{minipage}[t]{0.32\linewidth}
		\centering
		\includegraphics[width=\linewidth]{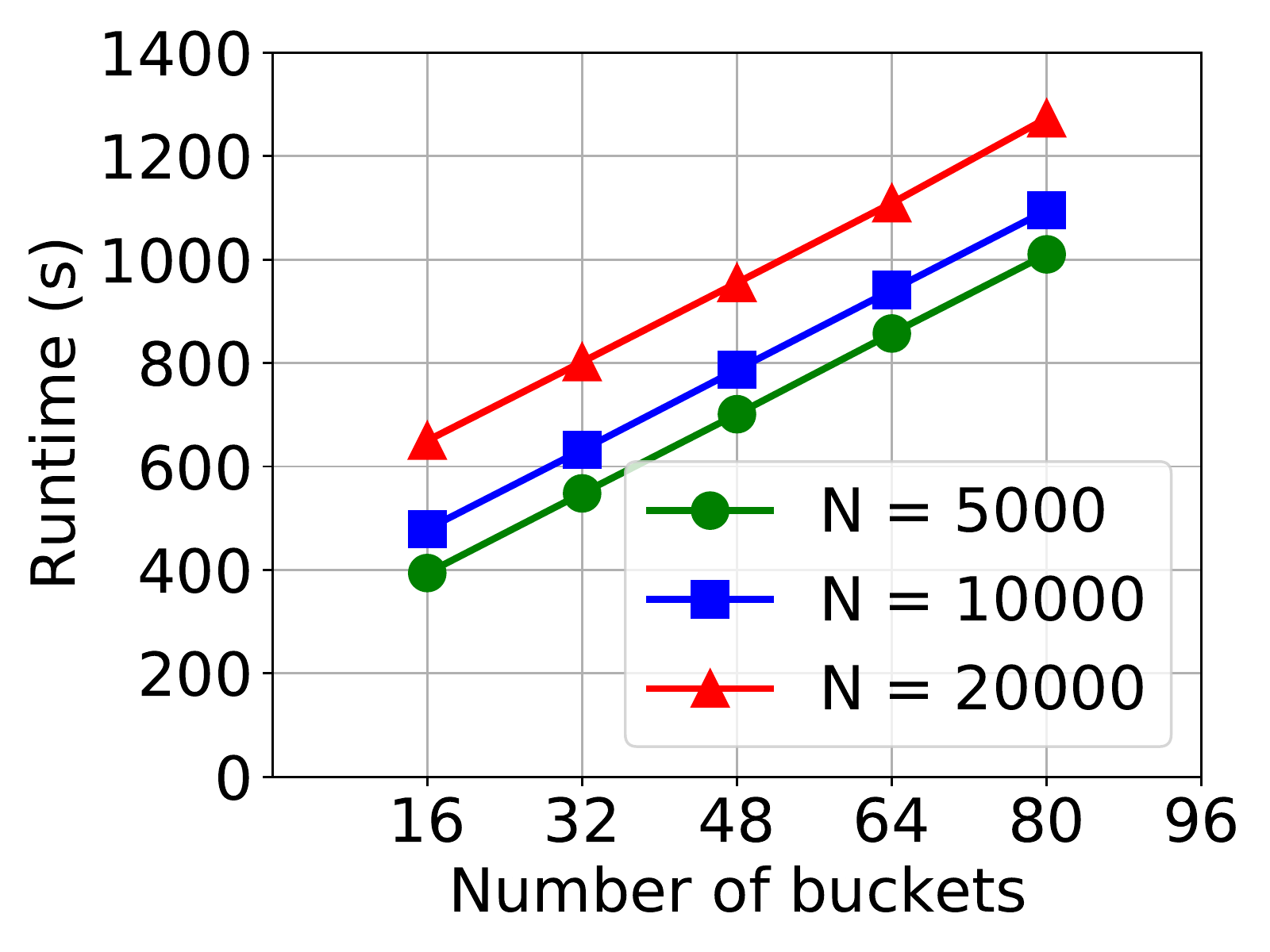}\\(b)
	\end{minipage}
	\begin{minipage}[t]{0.32\linewidth}
		\centering
		\includegraphics[width=\linewidth]{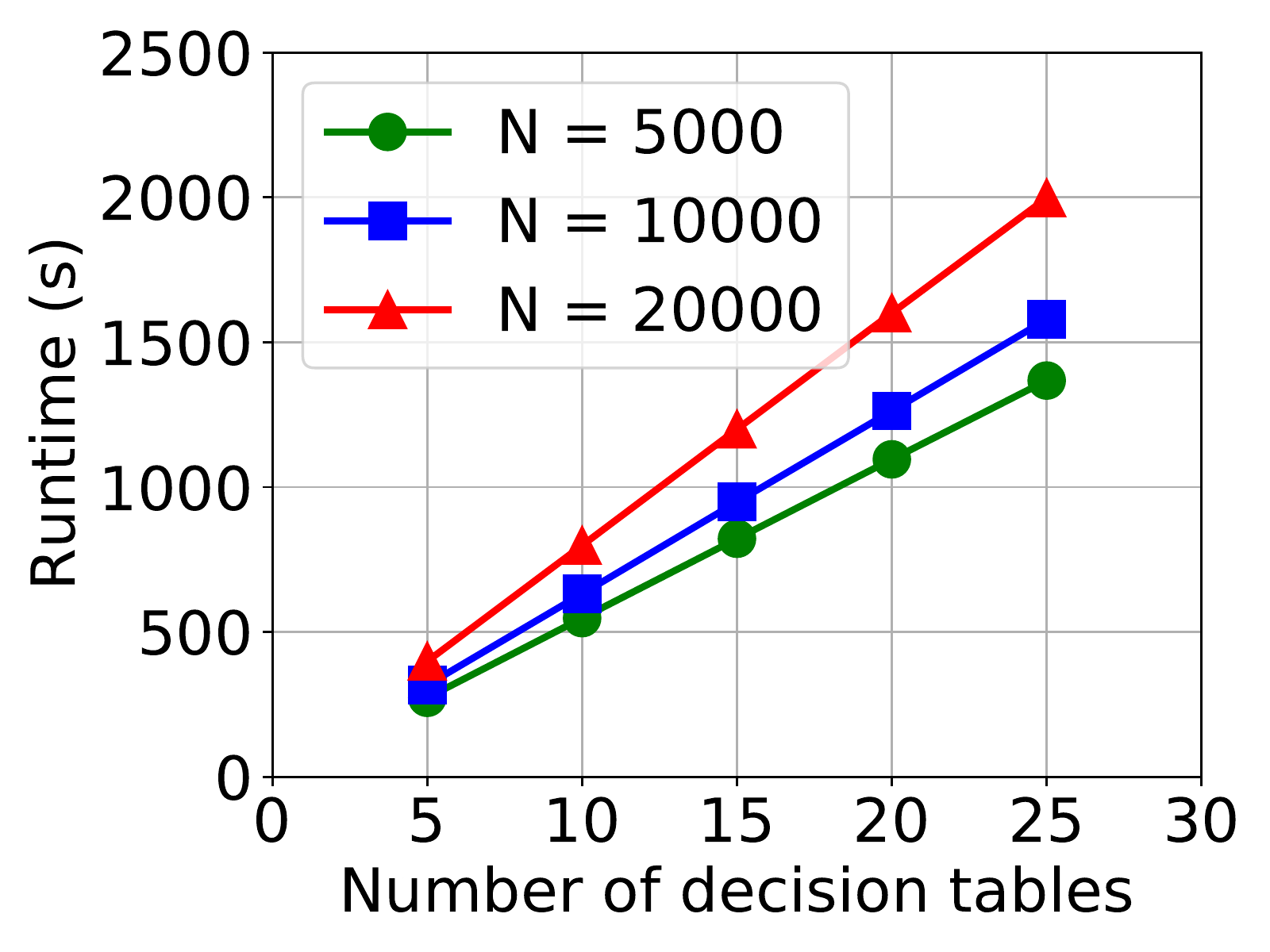}\\(c)
	\end{minipage}
	\caption{\revise{Runtime performance on Syn$^B$ with 60 features for different numbers of training samples $N$ and (a) varying dimension of decision tables $D$, (b) varying number of buckets $B$, and (c) varying number of decision tables $T$, respectively.}}
	\label{fig:reg_scalability}
	\vspace{-5pt}
\end{figure}

\begin{figure}[t!]
\centering
	\begin{minipage}[t]{0.32\linewidth}
		\centering
		\includegraphics[width=\linewidth]{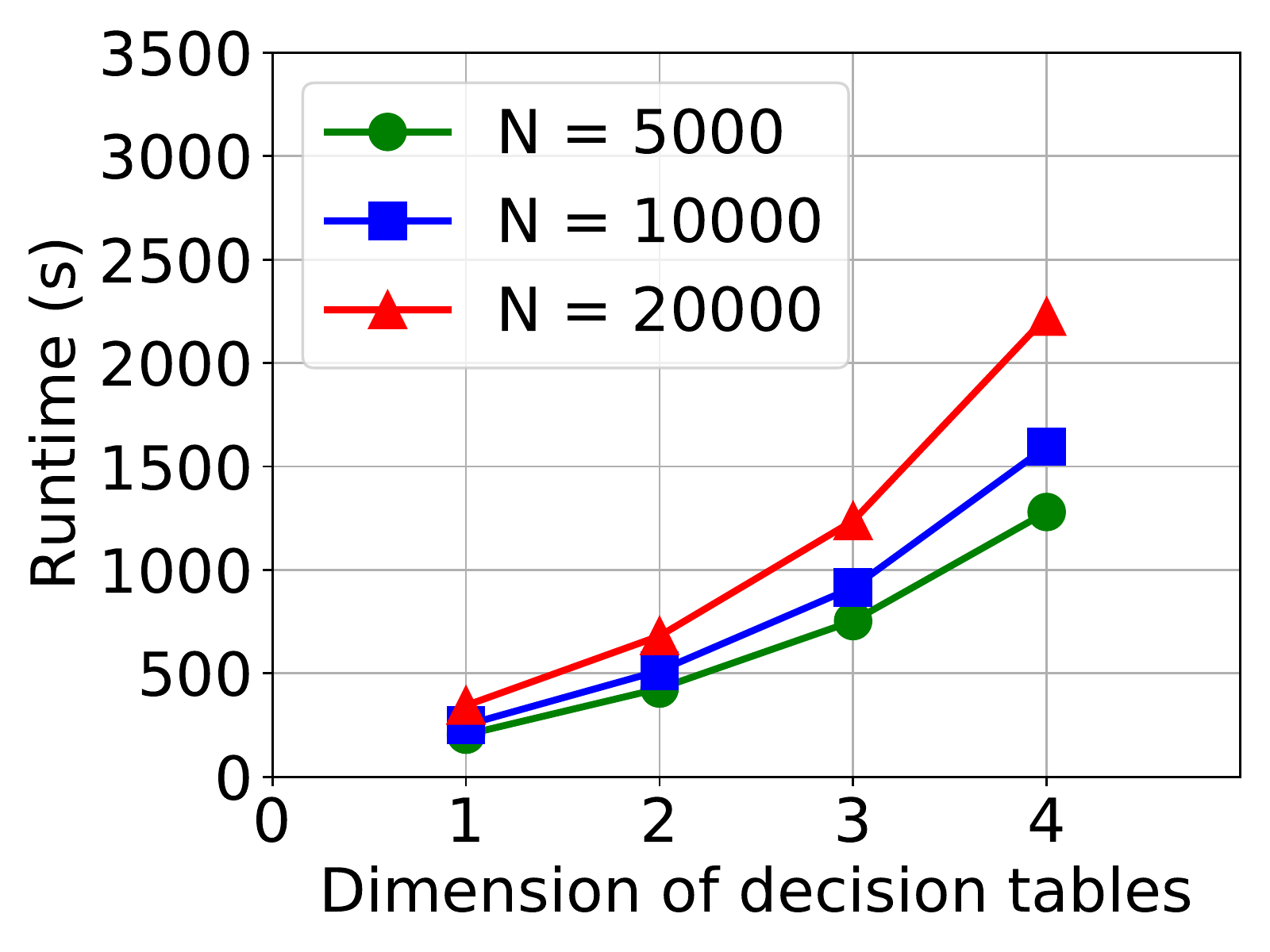}\\(a)
	\end{minipage}
	\begin{minipage}[t]{0.32\linewidth}
		\centering
		\includegraphics[width=\linewidth]{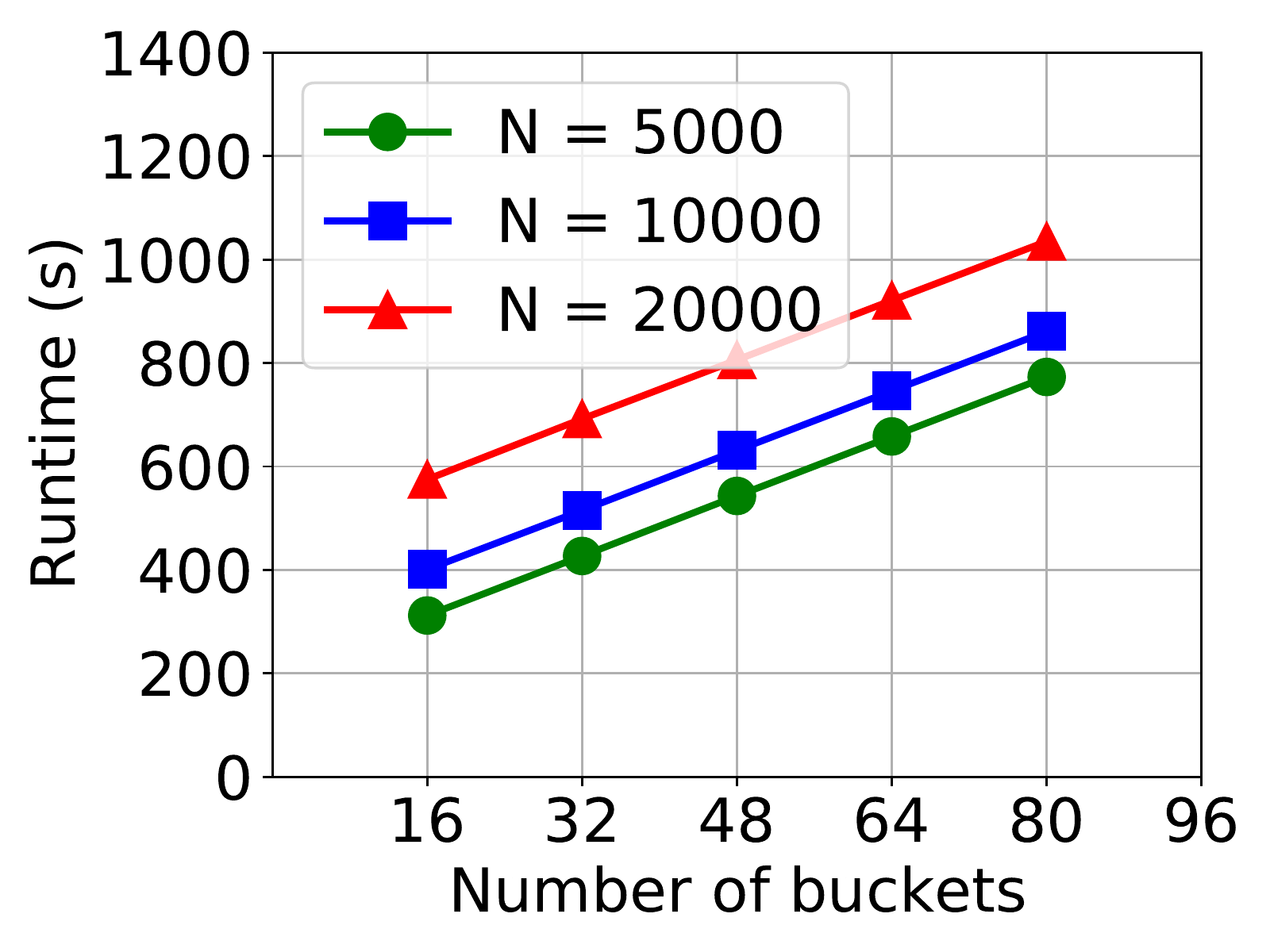}\\(b)
	\end{minipage}
	\begin{minipage}[t]{0.32\linewidth}
		\centering
		\includegraphics[width=\linewidth]{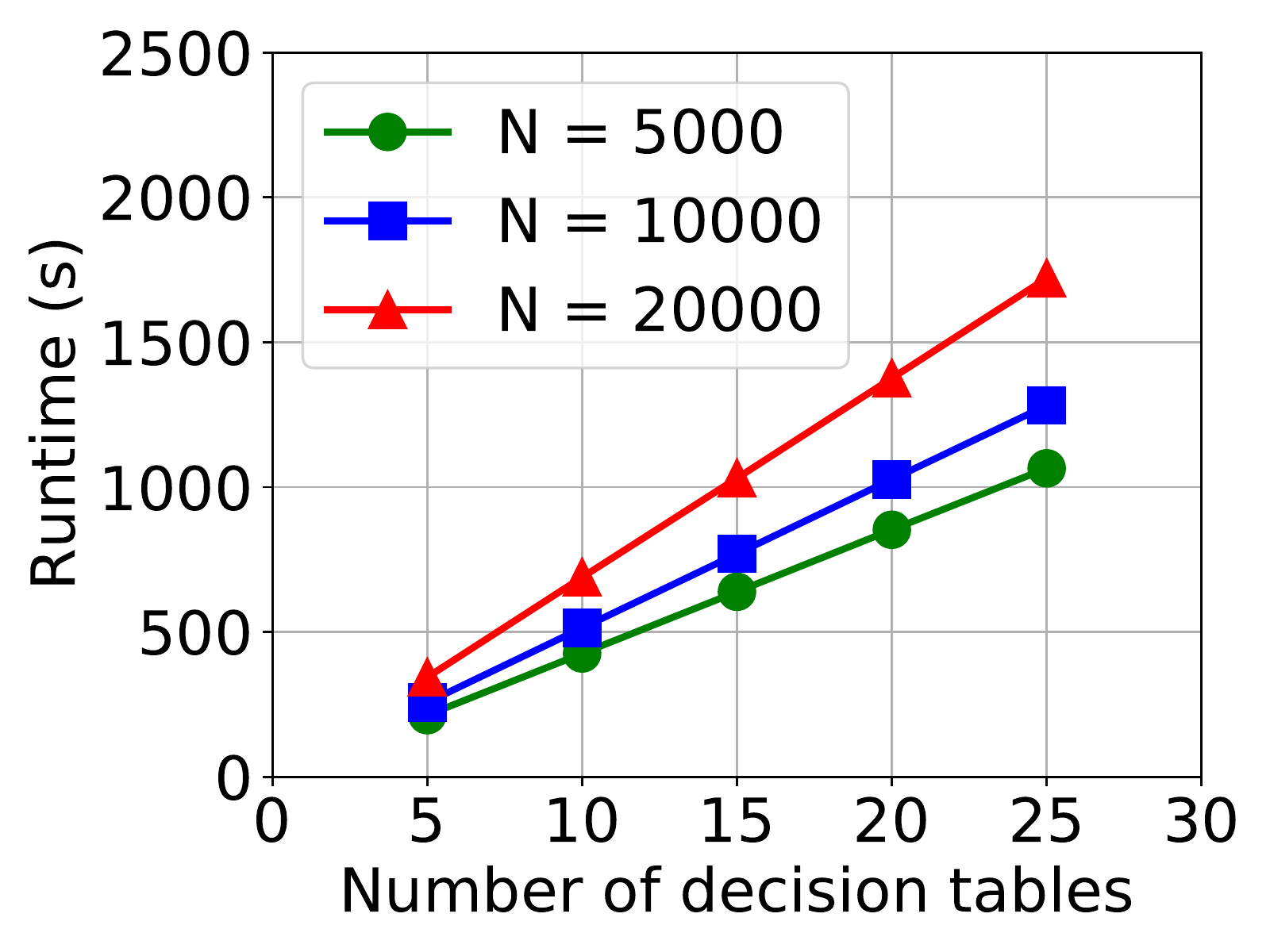}\\(c)
	\end{minipage}
	\caption{\revise{Runtime performance on Syn$^C$ with 40 features for different numbers of training samples $N$ and (a) varying dimension of decision tables $D$, (b) varying number of buckets $B$, and (c) varying number of decision tables $T$, respectively.}
	\label{fig:classi_scalability}}
	\vspace{-5pt}
\end{figure}

\subsection{Efficiency Evaluation}

We now report the computation and communication performance of {\main} in secure training over the three public datasets, and present the results in Table \ref{tab:efficiency}.
We also note that to evaluate the efficiency of {\main} over larger datasets, we use three synthetic datasets, of which the results are reported in Section \ref{sec:scala}.

From the results in the first two records of Table \ref{tab:efficiency}, we can observe that the training time and communication cost of {\main} on Cal Housing is significantly more than that on Credit, which is because we train 50 decision tables on Cal Housing but only 10 decision tables on Credit. 
However, {\main} consumes an average of 127 seconds to train a 5-dimensional decision table on Cal Housing and 172 seconds to train a 4-dimensional decision table on Credit. 
There are mainly two reasons for this observation: (1) the number of features and the number of samples of Cal Housing are less than that of Credit. (2) dealing with classification tasks additionally needs secure Sigmoid evaluation compared with regression tasks, which will result in more computation and communication overhead.

From the results in the last two records of Table \ref{tab:efficiency}, we can observe that although the scale of Credit is nearly 50 times larger than that of Breast Cancer, the training time on Credit under similar parameters is only roughly 4 times that on Breast Cancer. 
This is because {\main} securely discretizes training data into buckets, and all the time-consuming computations are conducted on the buckets. 
In this way, the training cost is highly correlated with the number of buckets, instead of the number of training samples. 
In fact, this also indicates the strong scalability of {\main}, which will be demonstrated in detail in the next section.

\subsection{Scalability Evaluation}
\label{sec:scala}
\revise{
We now evaluate the scalability of {\main}. 
To examine how the number of training samples and features affect the cost of secure online training, we conduct an experiment using the Syn$^A$ dataset with 20,000 samples and 80 features. 
We report the results in Fig. \ref{fig:runtime-commn-varying-sample} and Fig. \ref{fig:runtime-commn-varying-feature}, which show the runtime and communication cost for varying numbers of samples and features, respectively. 
For the experiment related to Fig. \ref{fig:runtime-commn-varying-sample}, we fix the number of features as 80, the dimension $D$ as 2, the number of buckets $B$ as 32, and the number of decision tables $T$ as 10, for varying number of training samples by randomly selecting samples from Syn$^A$. 
It is observed that the runtime is not much affected by the number of training samples from Fig. \ref{fig:runtime-commn-varying-sample} (a). At the same time, the communication cost grows linearly with the number of training samples as shown in Fig. \ref{fig:runtime-commn-varying-sample} (b). 
For the experiment related to Fig. \ref{fig:runtime-commn-varying-feature}, we randomly select 10,000 samples from Syn$^A$ and fix the dimension $D$ as 2, the number of buckets $B$ as 32, and the number of decision tables $T$ as 10, for varying number of features by randomly selecting features from Syn$^A$. 
From Fig. \ref{fig:runtime-commn-varying-feature} (a) and Fig. \ref{fig:runtime-commn-varying-feature} (b), we can observe that both the runtime and communication cost of {\main} increase proportionally with the number of features, in line with the complexity of Algorithm \ref{alg::sectable}, where the main loop in each level enumerates features.
}

\revise{
Next, we examine the impact of dimension $D$,  the number of buckets $B$, and the number of decision tables $T$ on the runtime of secure training. 
We use the synthetic regression dataset Syn$^B$ with 20,000 samples and 60 features for the regression task, and the synthetic classification dataset Syn$^C$ with 20,000 samples and 40 features for the classification task.
We first examine the relationship between dimension $D$ and runtime. 
For both regression and classification tasks, we set $B$ = 32, $T$ = 10, and vary the dimension $D$, over varying number of training samples. The results are shown in Fig. \ref{fig:reg_scalability} (a) and Fig. \ref{fig:classi_scalability} (a), from which we can observe that the runtime grows exponentially with the increase of dimension.
The results are consistent with the complexity of our secure training algorithm because the number of tree nodes is exponentially related to the dimension. The computation in each node accounts for the largest proportion of all calculations.
We then examine the relationship between the number of buckets $B$ and runtime.
We set $T$ = 10, $D$ = 2, and vary $B$, over varying number of training samples. 
The evaluation is also performed over both Syn$^B$ and Syn$^C$. 
Fig. \ref{fig:reg_scalability} (b) and Fig. \ref{fig:classi_scalability} (b) show the results, which indicate the linear association between $B$ and runtime. 
Finally, we evaluate the relationship between the number of decision tables $T$ and runtime, and summarize the results in Fig. \ref{fig:reg_scalability} (c) and Fig. \ref{fig:classi_scalability} (c).
We set $T$ from 5 to 25, while keeping $B$ = 32 and $D$ = 2, on both Syn$^B$ and Syn$^C$ with varying number of training samples.
Recall that in {\main} an ensemble of decision tables is securely built. The training time of each decision table is roughly the same when we set the same parameters for them.  So the runtime must grow linearly with the number of decision tables, which is consistent with the results in Fig. \ref{fig:reg_scalability} (c) and Fig. \ref{fig:classi_scalability} (c).
In summary, the above evaluation results demonstrate that Privet is scalable and capable of handling large-scale datasets with a large number of features and training samples.
}

\section{Conclusion}
\label{sec:conclusion}
In this paper, we design, implement, and evaluate {\main}, the first system framework enabling privacy-preserving VFL service for gradient boosted decision tables. 
Building on lightweight secret sharing techniques, {\main} supports an arbitrary number of distributed participants to collaboratively train gradient boosted decision tables over vertically partitioned distributed datasets, offering strong protection for individual data as well as for intermediate outputs.
Extensive experiments on several real-world datasets and synthetic datasets demonstrate that {\main} achieves promising performance, with model utility comparable to the case of plaintext centralized learning. 
For future work, it would be an interesting direction to explore the possibility of leveraging GPUs to achieve further performance boost.

\section*{Acknowledgement}

This paper was supported in part by the Guangdong Basic and Applied Basic Research Foundation under Grants No. 2021A1515110027, No. 2023A1515010714, and No.  2021A1515011406, by the Shenzhen Science and Technology Program under Grants No. RCBS20210609103056041 and No. JCYJ20220531095416037, by the National Natural Science Foundation of China under Grant No. 62002167, and by the Natural Science Foundation of JiangSu Province under Grant No. BK20200461.

\bibliographystyle{IEEEtran}
\bibliography{ref}

\begin{thebibliography}{10}
\providecommand{\url}[1]{#1}
\csname url@samestyle\endcsname
\providecommand{\newblock}{\relax}
\providecommand{\bibinfo}[2]{#2}
\providecommand{\BIBentrySTDinterwordspacing}{\spaceskip=0pt\relax}
\providecommand{\BIBentryALTinterwordstretchfactor}{4}
\providecommand{\BIBentryALTinterwordspacing}{\spaceskip=\fontdimen2\font plus
\BIBentryALTinterwordstretchfactor\fontdimen3\font minus
  \fontdimen4\font\relax}
\providecommand{\BIBforeignlanguage}[2]{{%
\expandafter\ifx\csname l@#1\endcsname\relax
\typeout{** WARNING: IEEEtran.bst: No hyphenation pattern has been}%
\typeout{** loaded for the language `#1'. Using the pattern for}%
\typeout{** the default language instead.}%
\else
\language=\csname l@#1\endcsname
\fi
#2}}
\providecommand{\BIBdecl}{\relax}
\BIBdecl

\bibitem{bonawitz2017practical}
K.~Bonawitz, V.~Ivanov, B.~Kreuter, A.~Marcedone, H.~B. McMahan, S.~Patel,
  D.~Ramage, A.~Segal, and K.~Seth, ``Practical secure aggregation for
  privacy-preserving machine learning,'' in \emph{Proc. of {ACM} CCS}, 2017.

\bibitem{TKDE-FL1}
C.~Qiao, K.~N. Brown, F.~Zhang, and Z.~Tian, ``Federated adaptive asynchronous
  clustering algorithm for wireless mesh networks,'' \emph{{IEEE} Trans. Knowl.
  Data Eng.}, 2021.

\bibitem{TKDE-FL2}
P.~Zhou, K.~Wang, L.~Guo, S.~Gong, and B.~Zheng, ``A privacy-preserving
  distributed contextual federated online learning framework with big data
  support in social recommender systems,'' \emph{{IEEE} Trans. Knowl. Data
  Eng.}, vol.~33, no.~3, pp. 824--838, 2021.

\bibitem{LU1}
J.~Zhao, H.~Zhu, F.~Wang, R.~Lu, H.~Li, J.~Tu, and J.~Shen, ``{CORK:} {A}
  privacy-preserving and lossless federated learning scheme for deep neural
  network,'' \emph{Inf. Sci.}, vol. 603, pp. 190--209, 2022.

\bibitem{li2020practical}
Q.~Li, Z.~Wen, and B.~He, ``Practical federated gradient boosting decision
  trees,'' in \emph{{AAAI}}, 2020.

\bibitem{maddock2022federated}
S.~Maddock, G.~Cormode, T.~Wang, C.~Maple, and S.~Jha, ``Federated boosted
  decision trees with differential privacy,'' in \emph{Proc. of {ACM} CCS},
  2022.

\bibitem{tist}
L.~Xie, J.~Liu, S.~Lu, T.-H. Chang, and Q.~Shi, ``An efficient learning
  framework for federated xgboost using secret sharing and distributed
  optimization,'' \emph{ACM Trans. Intell. Syst. Technol.}, vol.~13, no.~5,
  2022.

\bibitem{fang2021large}
W.~Fang, D.~Zhao, J.~Tan, C.~Chen, C.~Yu, L.~Wang, L.~Wang, J.~Zhou, and
  B.~Zhang, ``Large-scale secure {XGB} for vertical federated learning,'' in
  \emph{Proc. of ACM CIKM}, 2021.

\bibitem{fu2021vf2boost}
F.~Fu, Y.~Shao, L.~Yu, J.~Jiang, H.~Xue, Y.~Tao, and B.~Cui,
  ``Vf\({}^{\mbox{2}}\)boost: Very fast vertical federated gradient boosting
  for cross-enterprise learning,'' in \emph{Proc. of ACM SIGMOD}, 2021.

\bibitem{cheng2021secureboost}
K.~Cheng, T.~Fan, Y.~Jin, Y.~Liu, T.~Chen, D.~Papadopoulos, and Q.~Yang,
  ``Secureboost: {A} lossless federated learning framework,'' \emph{{IEEE}
  Intell. Syst.}, vol.~36, no.~6, pp. 87--98, 2021.

\bibitem{tian2020federboost}
Z.~Tian, R.~Zhang, X.~Hou, J.~Liu, and K.~Ren, ``Federboost: Private federated
  learning for gbdt,'' \emph{arXiv preprint arXiv:2011.02796}, 2020.

\bibitem{tyree2011parallel}
S.~Tyree, K.~Q. Weinberger, K.~Agrawal, and J.~Paykin, ``Parallel boosted
  regression trees for web search ranking,'' in \emph{Proc. of ACM {WWW}},
  2011.

\bibitem{he2014practical}
X.~He, J.~Pan, O.~Jin, T.~Xu, B.~Liu, T.~Xu, Y.~Shi, A.~Atallah, R.~Herbrich,
  S.~Bowers, and J.~Q. Candela, ``Practical lessons from predicting clicks on
  ads at facebook,'' in \emph{Proc. of ACM ADKDD}, 2014.

\bibitem{dhieb2019extreme}
N.~Dhieb, H.~Ghazzai, H.~Besbes, and Y.~Massoud, ``Extreme gradient boosting
  machine learning algorithm for safe auto insurance operations,'' in
  \emph{Proc. of {IEEE} {ICVES}}, 2019.

\bibitem{kohavi1995oblivious}
R.~Kohavi and C.~Li, ``Oblivious decision trees, graphs, and top-down
  pruning,'' in \emph{Proc. of {IJCAI}}, 1995.

\bibitem{bdt}
Y.~Lou and M.~Obukhov, ``{BDT:} gradient boosted decision tables for high
  accuracy and scoring efficiency,'' in \emph{Proc. of ACM {KDD}}, 2017.

\bibitem{prokhorenkova2018catboost}
L.~O. Prokhorenkova, G.~Gusev, A.~Vorobev, A.~V. Dorogush, and A.~Gulin,
  ``Catboost: unbiased boosting with categorical features,'' in \emph{Proc. of
  NeurIPS}, 2018.

\bibitem{dato2016fast}
D.~Dato, C.~Lucchese, F.~M. Nardini, S.~Orlando, R.~Perego, N.~Tonellotto, and
  R.~Venturini, ``Fast ranking with additive ensembles of oblivious and
  non-oblivious regression trees,'' \emph{{ACM} Trans. Inf. Syst.}, vol.~35,
  no.~2, pp. 15:1--15:31, 2016.

\bibitem{HancockK20a}
J.~T. Hancock and T.~M. Khoshgoftaar, ``Catboost for big data: an
  interdisciplinary review,'' \emph{J. Big Data}, vol.~7, no.~1, p.~94, 2020.

\bibitem{MonoForest}
I.~Kuralenok, V.~Ershov, and I.~Labutin, ``Monoforest framework for tree
  ensemble analysis,'' in \emph{Proc. of NeurIPS}, 2019.

\bibitem{ltr2}
G.~Capannini, C.~Lucchese, F.~M. Nardini, S.~Orlando, R.~Perego, and
  N.~Tonellotto, ``Quality versus efficiency in document scoring with
  learning-to-rank models,'' \emph{Inf. Process. Manag.}, vol.~52, no.~6, pp.
  1161--1177, 2016.

\bibitem{lightgbm}
G.~Ke, Q.~Meng, T.~Finley, T.~Wang, W.~Chen, W.~Ma, Q.~Ye, and T.~Liu,
  ``Lightgbm: {A} highly efficient gradient boosting decision tree,'' in
  \emph{Proc. of NeurIPS}, 2017.

\bibitem{breiman1984classification}
L.~Breiman, J.~H. Friedman, R.~A. Olshen, and C.~J. Stone, \emph{Classification
  and Regression Trees}.\hskip 1em plus 0.5em minus 0.4em\relax Wadsworth,
  1984.

\bibitem{quinlan2014c4}
J.~R. Quinlan, \emph{C4. 5: programs for machine learning}.\hskip 1em plus
  0.5em minus 0.4em\relax Elsevier, 2014.

\bibitem{leung2019towards}
C.~Leung, A.~Law, and O.~Sima, ``Towards privacy-preserving collaborative
  gradient boosted decision trees,'' UC Berkeley, Tech. Rep, Tech. Rep., 2019.

\bibitem{hasp}
F.~McKeen, I.~Alexandrovich, A.~Berenzon, C.~V. Rozas, H.~Shafi, V.~Shanbhogue,
  and U.~R. Savagaonkar, ``Innovative instructions and software model for
  isolated execution,'' in \emph{Proc. of Workshop on Hardware and
  Architectural Support for Security and Privacy}, 2013.

\bibitem{mohassel2017secureml}
P.~Mohassel and Y.~Zhang, ``Secureml: {A} system for scalable
  privacy-preserving machine learning,'' in \emph{Proc. of {IEEE} S\&P}, 2017.

\bibitem{ltr3}
A.~Gulin, I.~Kuralenok, and D.~Pavlov, ``Winning the transfer learning track of
  yahoo!'s learning to rank challenge with yetirank,'' in \emph{Yahoo! Learning
  to Rank Challenge}, ser. {JMLR} Proceedings, vol.~14.\hskip 1em plus 0.5em
  minus 0.4em\relax JMLR.org, 2011, pp. 63--76.

\bibitem{dhar2020effective}
J.~Dhar and A.~K. Jodder, ``An effective recommendation system to forecast the
  best educational program using machine learning classification algorithms.''
  \emph{Ing{\'e}nierie des Syst{\`e}mes d Inf.}, vol.~25, no.~5, pp. 559--568,
  2020.

\bibitem{gbdt_diabetes}
D.~C. Yadav and S.~Pal, ``An experimental study of diversity of diabetes
  disease features by bagging and boosting ensemble method with rule based
  machine learning classifier algorithms,'' \emph{{SN} Comput. Sci.}, vol.~2,
  no.~1, p.~50, 2021.

\bibitem{gbdt_diabetes2}
P.-C. Liao, M.-S. Chen, M.-J. Jhou, T.-C. Chen, C.-T. Yang, and C.-J. Lu,
  ``Integrating health data-driven machine learning algorithms to evaluate risk
  factors of early stage hypertension at different levels of hdl and ldl
  cholesterol,'' \emph{Diagnostics}, vol.~12, no.~8, 2022.

\bibitem{popets}
M.~Abspoel, D.~Escudero, and N.~Volgushev, ``Secure training of decision trees
  with continuous attributes,'' \emph{Proc. Priv. Enhancing Technol.}, vol.
  2021, no.~1, pp. 167--187, 2021.

\bibitem{deforth2021xorboost}
K.~Deforth, M.~Desgroseilliers, N.~Gama, M.~Georgieva, D.~Jetchev, and
  M.~Vuille, ``Xorboost: Tree boosting in the multiparty computation setting,''
  \emph{Proc. Priv. Enhancing Technol.}, vol. 2022, no.~4, pp. 66--85, 2022.

\bibitem{adams2022privacy}
S.~Adams, C.~Choudhary, M.~D. Cock, R.~Dowsley, D.~Melanson, A.~C.~A.
  Nascimento, D.~Railsback, and J.~Shen, ``Privacy-preserving training of tree
  ensembles over continuous data,'' \emph{Proc. Priv. Enhancing Technol.}, vol.
  2022, no.~2, pp. 205--226, 2022.

\bibitem{xgboost}
T.~Chen and C.~Guestrin, ``Xgboost: {A} scalable tree boosting system,'' in
  \emph{Proc. of ACM {KDD}}, 2016.

\bibitem{friedman2000additive}
J.~Friedman, T.~Hastie, and R.~Tibshirani, ``Additive logistic regression: a
  statistical view of boosting,'' \emph{Annals of statistics}, vol.~28, no.~2,
  pp. 337--407, 2000.

\bibitem{xgboost_para}
``{XGBoost Parameters},''
  \url{https://xgboost.readthedocs.io/en/stable/parameter.html}, [Online;
  accessed 26-October-2022].

\bibitem{catboost_score}
``{Catboost Score Functions},''
  \url{https://catboost.ai/en/docs/concepts/algorithm-score-functions},
  [Online; accessed 26-October-2022].

\bibitem{beaver}
D.~Beaver, ``Efficient multiparty protocols using circuit randomization,'' in
  \emph{Proc. of {CRYPTO}}, 1991.

\bibitem{Chameleon}
M.~S. Riazi, C.~Weinert, O.~Tkachenko, E.~M. Songhori, T.~Schneider, and
  F.~Koushanfar, ``Chameleon: {A} hybrid secure computation framework for
  machine learning applications,'' in \emph{Proc. of ACM AsiaCCS}, 2018.

\bibitem{wu2020}
Y.~Wu, S.~Cai, X.~Xiao, G.~Chen, and B.~C. Ooi, ``Privacy preserving vertical
  federated learning for tree-based models,'' \emph{Proc. {VLDB} Endow.},
  vol.~13, no.~11, pp. 2090--2103, 2020.

\bibitem{GDPR}
``{European Parliament and the Council: The General Data Protection Regulation
  (GDPR) (2016)},'' \url{https://eur-lex.europa.eu/eli/reg/2016/679/oj}.

\bibitem{verbeke1995newton}
J.~Verbeke and R.~Cools, ``The newton-raphson method,'' \emph{International
  Journal of Mathematical Education in Science and Technology}, vol.~26, no.~2,
  pp. 177--193, 1995.

\bibitem{cryptgpu}
S.~Tan, B.~Knott, Y.~Tian, and D.~J. Wu, ``Cryptgpu: Fast privacy-preserving
  machine learning on the {GPU},'' in \emph{Proc. of {IEEE} S\&P}, 2021.

\bibitem{crypten}
B.~Knott, S.~Venkataraman, A.~Y. Hannun, S.~Sengupta, M.~Ibrahim, and
  L.~van~der Maaten, ``Crypten: Secure multi-party computation meets machine
  learning,'' in \emph{Proc. of NeurIPS}, 2021.

\bibitem{taylor}
M.~Kim, Y.~Song, S.~Wang, Y.~Xia, X.~Jiang \emph{et~al.}, ``Secure logistic
  regression based on homomorphic encryption: Design and evaluation,''
  \emph{JMIR medical informatics}, vol.~6, no.~2, p. e8805, 2018.

\bibitem{mohassel2018aby3}
P.~Mohassel and P.~Rindal, ``Aby\({}^{\mbox{3}}\): {A} mixed protocol framework
  for machine learning,'' in \emph{Proc. of {ACM} CCS}, 2018.

\bibitem{LiuZYY21}
X.~Liu, Y.~Zheng, X.~Yuan, and X.~Yi, ``Medisc: Towards secure and lightweight
  deep learning as a medical diagnostic service,'' in \emph{Proc. of ESORICS},
  2021.

\bibitem{demmler2015aby}
D.~Demmler, T.~Schneider, and M.~Zohner, ``{ABY} - {A} framework for efficient
  mixed-protocol secure two-party computation,'' in \emph{Proc. of NDSS}, 2015.

\bibitem{lindell2017simulate}
Y.~Lindell, ``How to simulate it - {A} tutorial on the simulation proof
  technique,'' in \emph{Tutorials on the Foundations of Cryptography}, 2017,
  pp. 277--346.

\bibitem{canetti2000security}
R.~Canetti, ``Security and composition of multiparty cryptographic protocols,''
  \emph{J. Cryptol.}, vol.~13, no.~1, pp. 143--202, 2000.

\bibitem{katz2005handling}
J.~Katz and Y.~Lindell, ``Handling expected polynomial-time strategies in
  simulation-based security proofs,'' in \emph{Proc. of TCC}, 2005.

\bibitem{curran2019procsa}
M.~Curran, X.~Liang, H.~Gupta, O.~Pandey, and S.~R. Das, ``Procsa: Protecting
  privacy in crowdsourced spectrum allocation,'' in \emph{Proc. of ESORICS},
  2019.

\bibitem{Opboost}
X.~Li, Y.~Hu, W.~Liu, H.~Feng, L.~Peng, Y.~Hong, K.~Ren, and Z.~Qin, ``Opboost:
  {A} vertical federated tree boosting framework based on order-preserving
  desensitization,'' \emph{Proc. {VLDB} Endow.}, vol.~16, no.~2, pp. 202--215,
  2022.

\bibitem{jin2021cafe}
X.~Jin, P.-Y. Chen, C.-Y. Hsu, C.-M. Yu, and T.~Chen, ``Cafe: Catastrophic data
  leakage in vertical federated learning,'' in \emph{Proc. of NeurIPS}, 2021.

\bibitem{feverless}
R.~Wang, O.~Ersoy, H.~Zhu, Y.~Jin, and K.~Liang, ``Feverless: Fast and secure
  vertical federated learning based on xgboost for decentralized labels,''
  \emph{IEEE Transactions on Big Data}, pp. 1--15, 2022.

\end{thebibliography}

\end{document}